\newcommand{\Tr}{\text{Tr}}
\newcommand{\p}{\partial}
\newcommand{\lan}{\langle}
\newcommand{\ran}{\rangle}
\newcommand{\unit}{\mathbf{1}}
\newcommand{\da}{{\dagger}}
\newcommand{\ob}[1]{\mkern 1.5mu\overline{\mkern-1.5mu#1\mkern-1.5mu}\mkern 1.5mu}
\newcommand{\ra}{\rightarrow}
\newcommand{\wt}{\widetilde}
\renewcommand{\(}{\left(}
\renewcommand{\)}{\right)}
\renewcommand{\[}{\left[}
\renewcommand{\]}{\right]}
\newcommand{\mt}{\mapsto}
\newcommand{\tp}{\otimes}
\newcommand{\twp}{{2\pi}}
\newcommand\bpm            {\begin{pmatrix}}
	\newcommand\epm           {\end{pmatrix}}
\newcommand{\ms}{\medskip}
\def\app#1#2{%
	\mathrel{%
		\setbox0=\hbox{$#1\sim$}%
		\setbox2=\hbox{%
			\rlap{\hbox{$#1\propto$}}%
			\lower1.1\ht0\box0%
		}%
		\raise0.25\ht2\box2%
	}%
}
\newcommand{\spec}{\text{Spec}}
\newcommand{\tw}{\textwidth}
\renewcommand{\cp}{\Phi}
\newcommand{\ct}{\Theta}
\newcommand{\inv}{^{-1}}
\newcommand{\ope}\odot
\newcommand{\bi}{\begin{itemize}}
	\newcommand{\ei}{\end{itemize}}
\newtheorem{theorem}{Theorem}
\newtheorem{corollary}{Corollary}
\newtheorem{proposition}{Proposition}
\newtheorem{lemma}{Lemma}
\theoremstyle{definition}
\newcommand\bpro		  {\begin{proposition}}
	\newcommand\epro 		  {\end{proposition}}
\newcommand\bproof			  {\begin{proof}}
	\newcommand\eproof 		  {\end{proof}}
\newcommand\ed            {\end{definition}}
\newcommand\be            {\begin{equation}}
\newcommand\ee            {\end{equation}}
\newcommand\ba            {\begin{aligned}}
\newcommand\ea            {\end{aligned}}
\newcommand\bea{\begin{equation}\begin{aligned}}
	\newcommand\eea{\end{aligned}\end{equation}}
\newcommand{\sss}{\subsubsection}
\renewcommand{\ss}{\subsection}
\renewcommand{\a}{\alpha}
\renewcommand{\b}{\beta}
\renewcommand{\d}{\delta}
\newcommand{\De}{\Delta}
\newcommand{\g}{\gamma}
\newcommand{\G}{\Gamma}
\newcommand{\s}{\sigma}
\renewcommand{\S}{\Sigma} 
\newcommand{\ep}{\varepsilon} %
\renewcommand{\l}{\lambda}
\renewcommand{\L}{\Lambda}
\renewcommand{\O}{\Omega}
\renewcommand{\r}{\rho}
\newcommand{\z}{\zeta}
\newcommand{\bfs}{\mathbf{s}}
\newcommand{\EE}{\mathbb{E}}
\newcommand{\nn}{\mathbb{N}}
\newcommand{\rr}{\mathbb{R}}
\newcommand{\qq}{\qquad}
\newcommand{\zz}{\mathbb{Z}}
\newcommand{\mcc}{\mathcal{C}}
\newcommand{\mck}{\mathcal{K}}
\newcommand{\mco}{\mathcal{O}}
\newcommand{\mcu}{\mathcal{U}}
\newcommand{\mch}{\mathcal{H}}
\newcommand{\mcm}{\mathcal{M}}
\newcommand{\mcn}{\mathcal{N}}
\newcommand{\sfd}{\mathsf{d}}
\newcommand{\sfg}{\mathsf{g}}
\newcommand{\kb}[2]{|{#1}\rangle\langle{#2}|}
\renewcommand{\k}[1]{|#1\rangle}
\newcommand{\proj}[1]{|#1\rangle\langle#1|}
\renewcommand{\bot}{\bigotimes}
\DeclarePairedDelimiterX{\infdivx}[2]{(}{)}{%
	#1\;\delimsize\|\;#2%
}
\newtheorem*{theorem*}{Theorem}
\newtheorem{definition}{Definition}
\renewcommand\qq{\qquad}
\renewcommand\cp{\Phi}
\renewcommand{\bot}{\bigotimes}
\newcommand{\oEE}{\mathop{\mathbb{E}}}
\newcommand{\irr}{{\mathsf{irr}}}
\newcommand{\ltot}{{L}}
\newcommand{\hfroz}{{\mathcal{H}_{\rm froz}}}
\newcommand{\lc}{{L_{\rm cons}}}
\begin{document}

	\title{Exponentially slow thermalization and the robustness of Hilbert space fragmentation}
	
	\author{Yiqiu Han}
	\affiliation{Department of Physics, Boston College, Chestnut Hill, MA 02467, USA}
    \affiliation{Department of Physics and Center for Theory of Quantum Matter, University of Colorado, Boulder, CO 80309, USA}
	\author{Xiao Chen}
	\affiliation{Department of Physics, Boston College, Chestnut Hill, MA 02467, USA}
	
	\author{Ethan Lake}
	\email[Electronic address:$~~$]{elake@berkeley.edu}
	\affiliation{Department of Physics, University of California Berkeley, Berkeley, CA 94720, USA}
	\begin{abstract}
			The phenomenon of Hilbert space fragmentation, whereby dynamical constraints fragment Hilbert space into many disconnected sectors, provides a simple mechanism by which thermalization can be arrested. However, little is known about how thermalization occurs in situations where the constraints are not exact. To study this, we consider a situation in which a fragmented 1d chain with pair-flip constraints is coupled to an ergodicity-restoring thermal bath at its boundary. 
            We numerically observe an exponentially long thermalization time in Hamiltonian dynamics, manifested in both entanglement dynamics and the relaxation of local observables. To understand this, we study an analogous model of random unitary circuit dynamics, whose thermalization time we prove scales exponentially with system size. Slow thermalization in this model is shown to be a consequence of strong bottlenecks in configuration space, which restrict how the system can explore Hilbert space, and demonstrate a new way of producing anomalously slow thermalization dynamics.
	\end{abstract}
	
	\maketitle

	\paragraph*{Introduction: } A central theme in quantum dynamics is the understanding of mechanisms which impede or arrest thermalization \cite{deutsch1991quantum, srednicki1994chaos, rigol2008thermalization, nandkishore2015many-body, dalessio2016from}. Many such mechanisms, most prominently many body localization \cite{alet2018many-body, abanin2019colloquium}, rely crucially on some form of spatial disorder. %
	A recent body of work has demonstrated that even without recourse to strong disorder, the imposition of certain dynamical constraints can be rigorously shown to prevent thermalization \cite{sala2020ergodicity, khemani2020localization,motrunich,pancotti2020quantum,van2015dynamics,mukherjee2021minimal,yang2020hilbert,brighi2023hilbert,lan2018quantum,garrahan2018aspects, detomasi2019dynamics,moudgalya2022thermalization,will2023realization,kwan2023minimal,li2023hilbert}. In these systems, thermalization is evaded by virtue of the dynamics being non-ergodic, with the space of product states splitting into exponentially many dynamically-disconnected ``fragments'', in a phenomenon known as {\it Hilbert space fragmentation} (HSF). 

	Unfortunately, this form of ergodicity breaking relies on fine-tuning the dynamics to ensure that the dynamical constraints leading to HSF are exactly obeyed. 
	What happens when the constraints are weakly broken is a relatively unexplored question (see however \cite{li2023hilbert,frey2024probing}), despite the fact that many experimental systems are close to fine-tuned points where the constraints are exact  \cite{kohlert2023exploring,kinoshita2006quantum,scheie2021detection,malvania2021generalized,wei2022quantum,guardado2020subdiffusion,scherg2021observing}. A natural question thus remains: how does the structure of HSF imprint itself on thermalization dynamics once its associated constraints are broken? In particular, can there exist models of constrained dynamics which display anomalous thermalization even in the presence of constraint-breaking, ergodicity-restoring perturbations?

	In this work, we answer this question affirmatively {in a restricted setting}, by studying what happens when a 1d spin chain with ``pair-flip'' constraints \cite{caha2018pair} is connected at one end to a chain undergoing generic unconstrained dynamics, {with the remainder of the chain left unperturbed} (Fig.~\ref{fig:inspirational_schematic}~a). 
	The coupling to the unconstrained system can be viewed as a coupling to a thermal bath, and it renders the dynamics fully ergodic. {The thermalization time $t_{\rm th}$ is then the time needed for the effects of a thermal bath at the boundary to spread throughout the bulk. Letting $\lc$ be the size of the constrained region, for generic dynamics we may therefore expect $t_{\rm th}$ to scale as either $\lc$ or $\lc^2$, depending on whether the bulk dynamics possess a local conserved charge. 
	The main result of this paper is to rigorously show that the influence of the bath in fact spreads much more slowly than this, with $t_{\rm th}$ instead scaling {\it exponentially} in $\lc$.}
	This slowness is due to strong bottlenecks that the system encounters as it tries to explore Hilbert space, a phenomenon which arises from the type of constraints and the local nature of the coupling to the bath. We rigorously prove an exponentially large lower bound on $t_{\rm th}$ in the setting where the system undergoes a constrained form of random unitary (RU) dynamics, and provide numerical evidence that $t_{\rm th}$ for Hamiltonian dynamics is similarly long. 
	Remarkably, the long approach to equilibrium can be diagnosed simply by measuring expectation values of certain local operators, which take exponentially long to reach their steady-state values.
	
	\begin{figure}
		\includegraphics[width=.5\tw]{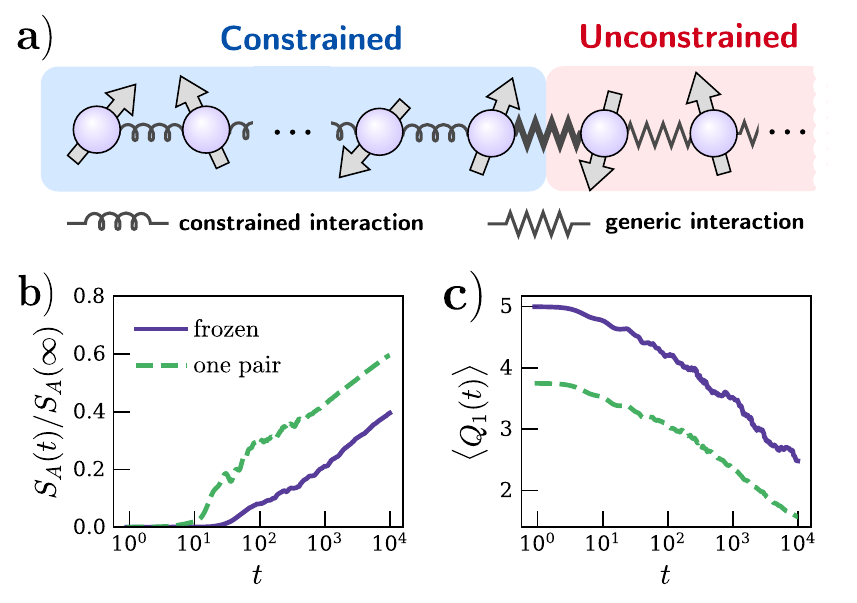}
		\caption{\label{fig:inspirational_schematic} {\bf a)} A schematic of the setup considered in this paper: a 1d chain is partitioned into a constrained region at sites $1<i\leq \lc$ within which the dynamics is fragmented, and an unconstrained region at sites $\lc<i\leq \ltot$ where the dynamics is generic. {\bf b)} Growth of the normalized entanglement entropy for a quench from under Hamiltonian dynamics, calculated for the region $A = [1,L/2]$ with $\lc=8, L=10, N=3$. The solid line shows a quench from the frozen state $\k{12}^{\tp \ltot/2}$; for the dashed line the first site is changed to $2$, creating a single flippable pair. In both cases, a slow logarithmic growth is observed. {\bf c)} Relaxation of the charge $Q_1$ computed in region $A$ for the same initial state, showing a similarly slow decay.}
	\end{figure} 
	
	\paragraph*{Slow thermalization in the pair-flip model:} 
	We begin by studying Hamiltonian dynamics. We consider a spin-$(N-1)/2$ model of the form $H=H_0+H_{\rm imp}$. Here the constrained Hamiltonian $H_0$ takes the form \cite{caha2018pair}
	\begin{equation}
		H_0=\sum_{i=1}^{\ltot-1}\sum_{a,b=1}^{N} g^{a,b}_{i} \kb{aa}{bb}_{i,i+1},
	\end{equation}
	with $g_i$ arbitrary $N\times N$ Hermitian matrices. The ``impurity'' Hamiltonian $H_{\rm imp}$ acts to break the constraints on sites $\lc<i\leq\ltot$, with $\lc$ the size of the constrained region. 
	$H_0$ only flips neighboring spins with identical values. As a result, it preserves the $U(1)$ charges
	\be \label{charges} Q_a \equiv \sum_{i} (-1)^i \proj{a}_i.\ee
	When $N>2$, which we will specify to unless explicitly noted otherwise, $H_0$ additionally possesses an exponentially large number of non-local conserved quantities~\cite{moudgalya2018exact}. Among these are the $N(N-1)^{\ltot-1}$ ``frozen'' states of the form $|a_1,\dots,a_\ltot\ran, \, a_i \neq a_{i+1}$, which are annihilated by $H_0$. We will denote the space spanned by these states as $\mch_{\rm froz}$.

	To numerically study the robustness of the constrained dynamics with respect to $H_{\rm imp}$, we investigate quantum quenches performed on initial states in $\hfroz$.
	For concreteness we will specify to the case where $g^{ab}_i = (-1)^i (1/N + \kappa\d_{a,b})$; here the $(-1)^i$ ensures that the states in $\hfroz$ lie approximately in the middle of $H_0$'s spectrum (since for this choice of $g^{ab}_i$, $H_0$ is an alternating-sign sum of frustration free projectors that annihilate $\mch_{\rm froz}$).
	The model obtained by setting $\kappa=0$ is a form of Temperley-Lieb model \cite{batchelor1991temperley,aufgebauer2010quantum}, which has $SU(N)$ symmetry, fragments in an entangled basis \cite{moudgalya2022hilbert}, and possesses a very large number of degenerate states. 
	In Sec.~\ref{App:TL}
	we prove that this model fails to thermalize even at infinite times when the constraint is broken only on a single site. In our numerics we will however fix $\kappa = 2/3$, breaking $SU(N)$ and yielding a more generic pair-flip model. 
	For $H_{\rm imp}$, we take for definiteness 
	\be \label{himp} H_{\rm imp}= \mcn\inv\sum_{i=\lc+1}^{\ltot}(e^{i\pi/4} X_i + e^{-i\pi/4} X_i^T), \ee
	where $X_i \equiv \sum_{a=1}^N |a\ran\lan [a+1]_N|_i$ with $[\cdot]_N$ denoting reduction modulo $N$, and with $\mcn$ ensuring that each term in $H_{\rm imp}$ has unit norm. This choice fully restores ergodicity, and can be numerically checked to render the spectrum of $H$ completely non-degenerate. 

	\begin{figure*}
		\centering 
		\includegraphics[width=\tw]{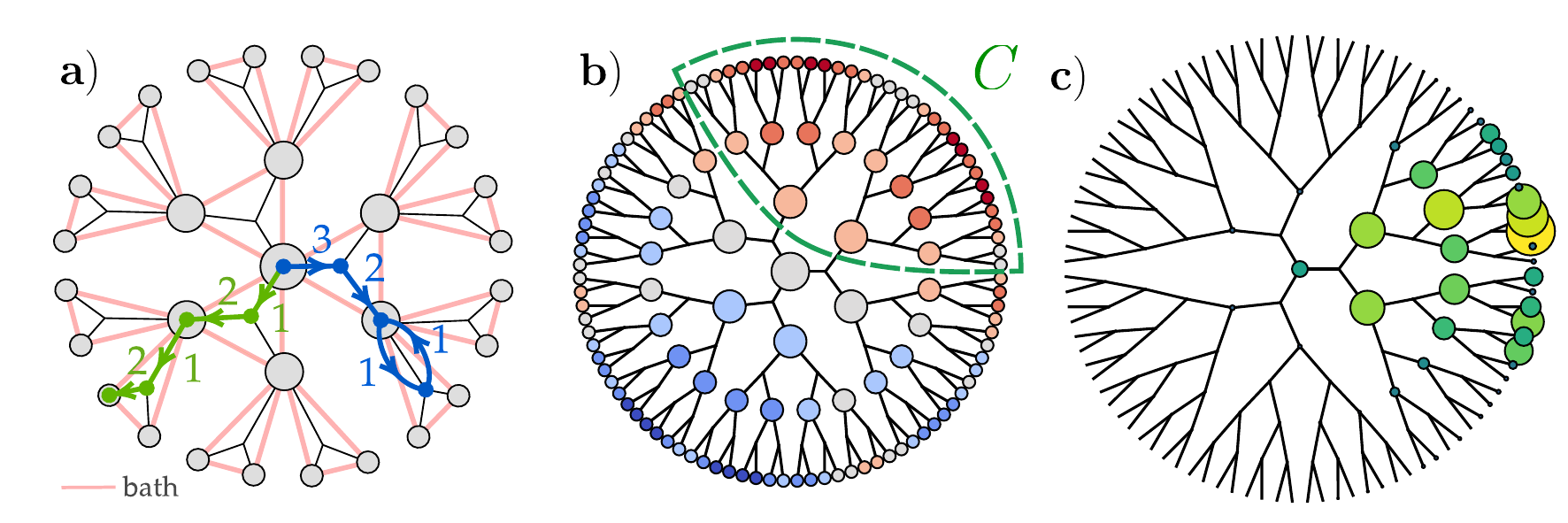} 
		\caption{\label{fig:trees} Krylov graph illustrations for $N=3$. {\bf a)} Each computational basis product state $ \k{a_1,\dots, a_L}$ 
			defines a length-$L$ walk on the degree 3 tree, with the walk backtracking when two identical labels are encountered in a row. The constrained dynamics preserves walk endpoints (grey circles), each of which defines a Krylov sector; the sectors are thus enumerated by the even (odd) sublattice of the depth-$L$ tree for even (odd) $L$. 
			Paths which reach the edge of the tree are frozen under the constrained dynamics, while those with backtracks belong to sectors of dimension $>1$. 
			Breaking the constraints at the edge induces transitions between next-nearest-neighbor tree vertices in the manner indicated by the pink lines. {\bf b)} Values of the charge $Q_1$ in each sector (shown for $L=6$). Redder (bluer) colors indicate more positive (negative) values. Dynamics begun from a state in the region $C$ will take exponentially long to escape $C$ due to the bottleneck imposed by the tree structure. {\bf c)} Krylov sector occupations for a quench under $e^{-iHt}$, starting from a frozen state on the edge of the tree. Circle sizes and colors are drawn according to $\lan \psi(t)|\Pi_\mck|\psi(t)\ran$ at time $t=10^3$, where $\Pi_\mck$ projects onto the sector $\mck$.}
	\end{figure*}

	Since states in $\hfroz$ are near the middle of $H$'s spectrum, 
	energy conservation does not present an obstacle for states in $\hfroz$ to thermalize to infinite temperature, even when the size of the impurity region is small.  
	Furthermore, as we will see momentarily, only $O(L)$ applications of $H_{\rm imp}$ are required to connect any two computational basis product states. 
	From these facts, a reasonable prior might be that states in $\hfroz$ rapidly thermalize to a volume-law infinite-temperature state, with this occurring on the time scale needed for the influence of $H_{\rm imp}$ to propagate throughout the full extent of the constrained region. 
	Simulating the dynamics with TEBD for the simplest choice of $N=3$ reveals that this is {\it not} what happens. In Fig.~\ref{fig:inspirational_schematic} we compute the bipartite entanglement entropy $S_A(t)$, with $A = [1,\lc/2]$ half of the constrained region, together with the charge expectation values $\lan Q_a\ran$, with $Q_a$ computed on the sites $[1,\lc]$ of the constrained region. Both of these quantities indicate an {\it exponentially} long thermalization time, with $S_A(t)$ exhibiting a slow logarithmic growth and $\lan Q_a\ran$ a similarly slow decay. Slow thermalization is also observed for initial product states which are mostly frozen but contain a small number of flippable nearest-neighbor pairs; for these states $S_A(t)$ increases quickly at short times $t\lesssim L$ but then grows as $\sim\log(t)$ thereafter. 
	
	\paragraph*{Hilbert space connectivity and random walks:}
	To understand these observations, it will be helpful to have a geometric understanding of how the dynamics acts in Hilbert space. To best illustrate this we will momentarily fix $\ltot=\lc+1$, so that the pair-flip constraint is broken only on the last site of the chain.
	This understanding is obtained by associating each product state $\k{a_1,\dots,a_L}$ with a length-$L$ walk on the $N$-valent tree $T_N$ \cite{caha2018pair}. The walk is determined by reading the product state from left to right: $a_1$ determines the direction of the first step of the walk, $a_2$ the second, and so on. The direction of the walk's travel is fixed by the convention that if two identical labels are encountered in a row ($a_i=a_{i+1}$), the walk backtracks (see Fig.~\ref{fig:trees}~{a} for an illustration with $N=3$). The merit of this is that the allowed processes implementable by $H_0$ are precisely those which {\it preserve walk endpoints}, since $H_0$ acts nontrivially only on locations with backtracks. Each vertex of $T_N$ (more precisely, each vertex of $T_N$'s even / odd sublattice, depending on the parity of $L$) thus defines a disconnected sector of the constrained dynamics. 
	The sectors on the edge of $T_N$ contain precisely those states whose walks have no backtracks; these states thus span $\mch_{\rm froz}$ and define $O((N-1)^L)$ one-dimensional sectors. As shown in \cite{si}, 
 the sectors increase exponentially in size towards the center of the tree, with the largest sector at the tree center having dimension $|\mck_{\rm max}|\sim L^{3/2} (2\sqrt{N-1})^L$.

	The action of $H_{\rm imp}$ at the chain end breaks the constraint by allowing the {\it last step} of the walk to be changed. \footnote{{Putting baths at {\it both} ends of the chain allows for both the first and last steps of the walk to be changed, resulting in a significantly more complicated graph structure. In the pair-flip model, this change is in fact significant enough to reduce the thermalization time to $t_{\rm th} = {\rm poly}(L)$; for other strongly-fragmented models $t_{\rm th}$ remains exponential in $L$. See Ref.~\cite{wang2025exponentially} for details.}} This restores ergodicity, connecting the sectors to their nearest neighbors on one sublattice of $T_N$, in the manner shown in Fig.~\ref{fig:trees}~{a}. We will refer to the graph $G_\mck$ of Krylov sectors so obtained as the {\it Krylov graph}.	%
	The dynamics thus induces a random walk on $G_\mck$, and for a system to thermalize, its wavefunction must spread out across the entirety of $G_\mck$ under the action of this walk. 
	
	The tree structure of the Krylov graph suggests that this spreading is slow, since for a vertex of $T_N$ at depth $1<d<L$, there are $N-1$ ways of going ``out'' towards the boundary, but only one way of going ``in'' towards the center. This implies that a simple random walk on $T_N$ will have an ``outward'' bias with velocity $v_N=(N-1)/N-1/N = 1-2/N$, and suggests that the dynamics on $G_\mck$ will take exponentially long to overcome this bias and thermalize. 
	However, this argument ignores the fact that the sectors increase exponentially in size as one moves towards the center of $G_\mck$, which introduces an opposing ``inward'' bias. Our main goal in the following will be to understand which one of these competing effects dominates.

	As a first pass, one can quench a state in $\mch_{\rm froz}$ under $e^{-iHt}$ and numerically compute the weight of $\k{\psi(t)}$ in each sector. 
	Doing so gives weights which are highly clustered around $G_\mck$'s edge even when $t\gg L$, as shown in Fig.~\ref{fig:trees}~{c}. This suggests that the ``outward'' bias---which acts to slow thermalization---wins out. To understand why, we shift our focus from Hamiltonian to RU dynamics, where rigorous bounds on thermalization times can be proven.

	\paragraph*{RU dynamics and Hilbert space bottlenecks:} To simplify the dynamics, we replace the unconstrained region by a thermal bath which subjects the spin on the end of the chain to depolarizing noise. This models the situation where the unconstrained region is taken to be infinitely large (so that it can exchange an arbitrary amount of energy with the constrained region), and then traced out. Since there is no conserved energy in this setup, we will replace time evolution under $H_0$ by a constrained form RU dynamics, consisting of local gates which preserve the pair-flip constraint but are otherwise Haar random.
	The quantum channel implementing one step of the dynamics is 
	\be \label{channel} \mcc_\sfd(\r) = \mcu^\da_\sfd( \Tr_L[\r] \tp \unit/N) \mcu_\sfd,\ee 
	where $\Tr_L[\cdot]$ denotes tracing out the spin at site $L$, and $\mcu_\sfd$ is a random depth-$\sfd$ constrained brickwork circuit. In our analytic arguments we will take $\sfd\gg L$, which simplifies things by making the {\it intra}-sector dynamics thermalize instantaneously. Regardless of $\sfd$, $\mcc_\sfd$ should generically thermalize product states {\it faster} than the Hamiltonian dynamics studied above. We will nevertheless prove that the thermalization time under $\mcc_\sfd$ is {\it exponentially long} in $L$. Detailed proofs of the statements to follow are deferred to the supplementary information \cite{si}; 
    in what follows we will only discuss the most salient aspects. 
	
	Let us first examine the circuit-averaged state $\ob \r_\psi(t) \equiv \oEE_{\{\mcu\}} \mcc_\sfd^t(\proj\psi)$ obtained by evolving a computational basis product state $\k{\psi}$ for time $t$. Using by-now standard techniques \cite{Singh,group_dynamics}, the circuit average can be performed exactly, mapping the RU evolution to a certain kind of Markov process. One finds $\ob\r_\psi(t) = \proj{\mcm^t \psi}$, where $\k{\mcm^t \psi}$ is the state obtained from $\k\psi$ by $t$ applications of a Markov generator of the form $\mcm = \mcm_{PF}\mcm_L$. In this expression $\mcm_L = \unit_{L-1} \tp \frac1N \sum_{a,b} \kb{a}{b}_L$ randomizes the state of the spin on the end of the chain, and $\mcm_{PF}$ is a depth-$\sfd$ stochastic brickwork circuit implementing pair-flip dynamics, each brick being the 2-site gate $\frac1N \sum_{a,b} \kb{aa}{bb} + \sum_{a\neq b} \proj{ab}$ (see \cite{si} 
 for details).

	Let $G_\mch$ be the graph with a vertex for each computational basis product state, and an edge drawn between all pairs of vertices connected by a single step of the dynamics. 
  The Krylov graph $G_\mck$ is thus a ``coarse-grained'' version of $G_\mch$, obtained by gathering all states in a given sector into a single ``supernode'', and merging all connections between states in bath-connected sectors into a single ``superedge''. 
	$\mcm$ implements a random walk on $G_\mch$, with $\r=\unit$ as its unique steady state. Thus all initial states thermalize under $\mcc_\sfd$, provided one waits long enough. How long one must wait is determined by the relaxation time $t_{\rm rel}$, which is inversely proportional to the spectral gap $\De_\mcm$ of $\mcm$ (the difference between $\mcm$'s lowest and second-lowest eigenvalues). As a first result, we prove 
	\begin{theorem} \label{thm:maintextsmallgap}
		$\De_\mcm$ is exponentially small in system size: 
		\be \label{gapbound} \De_\mcm \leq |\mck_{\rm max}| N^{-L} \sim  L^{-3/2} \r_N^L,\ee 
		where $\r_N \equiv 2 \sqrt {N-1}/ N < 1$.
	\end{theorem}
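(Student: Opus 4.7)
The plan is a variational argument that exploits the fact that $\mcm_{PF}$ preserves every Krylov sector. In the $\sfd\gg L$ limit emphasized in the paper, $\mcm_{PF}$ effectively equals the projector $P_{\mathrm{class}}$ onto the subspace of ``class functions'' (functions that are constant on each Krylov sector), so $\mcm = P_{\mathrm{class}}\mcm_L$ preserves this subspace; and because $\mcm_L$ is self-adjoint with respect to the uniform measure $\pi = \unit/N^L$, so is $\mcm|_{\mathrm{class}}$. Its eigenvalues embed in the spectrum of $\mcm$, so the variational characterization yields
\[
\Delta_\mcm \;\leq\; \frac{\langle f,\,(I-\mcm_L)\,f\rangle_\pi}{\|f-\bar f\|_\pi^{2}}
\]
for any class function $f$, reducing the theorem to a Dirichlet-form estimate involving only $\mcm_L$.

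Next I would evaluate this Dirichlet form geometrically. Grouping states by their length-$(L-1)$ prefix $\tau$, and noting that such a prefix ends at a unique vertex $u$ of $T_N$ whose $N$ tree-neighbors index the sectors reachable by varying the final symbol $a_L$, a short manipulation gives
\[
\langle f,\,(I-\mcm_L)\,f\rangle_\pi \;=\; \frac{1}{N^{L+1}}\sum_u n_u \sum_{\substack{w,w'\sim u \\ w<w'}} (f_w - f_{w'})^{2},
\]
where $n_u$ counts length-$(L-1)$ walks ending at $u$. I would then test with $f_w=\pm 1$ depending on which side of a carefully chosen tree edge $e^*=\{u^*,v^*\}$ the sector $\mck_w$ lies on. Only $u^*$ and $v^*$ have $T_N$-neighbors on both sides of $e^*$, each contributing exactly $4(N-1)$, so the Dirichlet form collapses to $4(N-1)(n_{u^*}+n_{v^*})/N^{L+1}$. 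Combined with the elementary inequality $n_u\leq |\mck_v|$ for any $v$ adjacent to $u$ in $T_N$ (which follows from $|\mck_v|=\sum_{u\sim v} n_u$), this gives $\Delta_\mcm\lesssim |\mck_{\rm max}|/N^L$, provided $e^*$ splits $T_N$ into two subtrees of comparable $\pi$-mass so that $\|f-\bar f\|_\pi^{2}=\Theta(1)$.

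The main obstacle is guaranteeing the existence of such a balanced $e^*$---that is, that no single sector carries a majority of the stationary mass. This reduces to showing $|\mck_{\rm max}|/N^L \to 0$, which is already implied by the quoted scaling $|\mck_{\rm max}|\sim L^{3/2}(2\sqrt{N-1})^L$ and is most cleanly obtained from the classical generating-function analysis of simple random walk on $T_N$ (the same input that produces the spectral radius $\r_N$). A minor ancillary point is that the argument above comes with an $O(N)$ multiplicative constant rather than the explicit prefactor $1$ stated in the theorem; tightening this likely requires replacing the sign-function test with one weighted by $\sqrt{|\mck_w|}$, but the exponential scaling---which is the whole content of the theorem---is insensitive to this refinement.
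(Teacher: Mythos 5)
Your argument is essentially the paper's, unpacked by hand. The paper bounds $\De_\mcm$ via Cheeger's inequality after estimating the conductance of the cone $C_2$ (one of the $N$ branches of $T_N$); the cone is cut off by the edge between the root and a depth-$1$ vertex, its boundary size $|\p C_2| = (N-1)|\mck_1^{(L-1)}| = \tfrac{N-1}{N}|\mck_{\rm max}|$ coming from counting length-$(L-1)$ prefixes ending at that depth-$1$ vertex. Your variational test function $f_w=\pm 1$, supported on the two sides of an edge $e^*=\{u^*,v^*\}$ adjacent to the root, produces exactly the same cut, and the quantity $n_{u^*}$ is the same count; in effect you are re-deriving the easy direction of Cheeger ($\De_\mcm \le 2\cp$) from the Rayleigh quotient rather than citing it. One genuine added value in your write-up is the explicit observation that $\mcm = P_{\mathrm{class}}\mcm_L$ is self-adjoint only after restriction to the class-function subspace, and that the restricted eigenvalues embed in the full spectrum; the paper applies Cheeger's inequality to $\mcm_{\rm nonloc}$ without commenting on its lack of reversibility on all of $\mch$, so your reduction is a cleaner justification of that step.

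Two small corrections. First, in your Dirichlet-form collapse only $u^*$ contributes, not both $u^*$ and $v^*$: the outer sum ranges over vertices at depth of parity $L-1$, and $e^*$ necessarily joins a vertex of that parity to one of the opposite parity, so $v^*$ is a sector, not a prefix endpoint, and $n_{v^*}=0$; the final bound is unchanged. Second, you need $\|f-\bar f\|_\pi^2 = \Theta(1)$, which you phrase as requiring a ``balanced'' edge $e^*$; for the root-adjacent cut used here the split is $1/N$ versus $(N-1)/N$, giving $\|f-\bar f\|_\pi^2 = 1 - (1-2/N)^2 = 4(N-1)/N^2$, which is indeed $\Theta(1)$, so no further input about $|\mck_{\rm max}|/N^L\to 0$ is actually needed for this particular choice of cut.
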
 
	
	This result follows from the fact that states evolving under $\mcm$ encounter severe bottlenecks as they move throughout Hilbert space. To see this, define the {\it expansion} $\cp(G)$ of a graph $G$ as \cite{lyons2017probability}
	\be \label{conddef} \cp(G) \equiv \min_{C \, : |C| \leq |G|/2} |\p C| / |C|,\ee 
	where $|\p C|$ denotes the number of edges connecting the subgraph $C$ to $G \setminus C$. Graphs with strong bottlenecks have smaller values of $\cp(G)$, and random walks on graphs with strong bottlenecks mix slowly. This is quantified using Cheeger's inequality \cite{lyons2017probability}, which reads 
	\be \label{cheeger} \frac12 \cp(G_\mch)^2 \leq \De_\mcm \leq 2 \cp(G_\mch).\ee 
    Since $G_\mck$ is formed by coarse-graining $G_\mch$, we have $\Phi(G_\mch) \leq \Phi(G_\mck)$, provided that $|\p C|$ ($|C|$) in \eqref{conddef} are appropriately weighted by the sizes of the superedges (supernodes). 
	The upper bound in \eqref{gapbound} then follows by letting $C$ be one full branch of the tree (shown in Fig.~\ref{fig:trees}~{b} for $N=3$), for which $|\p C| \sim |\mck_{\rm max}|$ and $|C|\sim N^L/N$. This gives an upper bound on $\cp(G_\mck)$---and hence on $\De_\mcm$---scaling as $|\mck_{\rm max}|/N^L$, which is exponentially small in $L$ (exact diagonalization indicates that this bound is in fact saturated \cite{si}).
 This demonstrates the existence of a large bottleneck and shows that the ``outward'' bias discussed above ultimately wins out at long times (the effect of the ``inward'' bias turns out to be to reduce the base of the exponential in \eqref{gapbound} from $1/N$ to $\r_N$). 
	
	An initial state $\k\psi$ chosen randomly from $C$ will thus typically take an exponentially long time to leave $C$, immediately yielding a bound on the circuit-averaged entropy $\ob S_\psi(t) = \oEE_{\{ \mcu\} } S[\mcc_\sfd^t (\proj\psi)]$. To this end, define $t_S(\g) \equiv \min\{t \, : \, \ob S_\psi(t) \geq \g L \ln(N)\}$ as the time at which $\ob S_\psi(t)$ first reaches a fraction $\g$ of its maximal value. We find 	\begin{theorem}\label{thm:maintextsmalls}
		Let $\g$ satisfy $\g_*<\g<1$, $\g_*\equiv 2 (1-v_N\ln(N-1) /  \ln N)$. Then
		\be t_S(\g) \geq C_\g \sqrt L e^{L \l_\g},\ee 
		where $C_\g$ is an unimportant $O(1)$ constant and $\l_\g \equiv \frac12((1-\g/2)\ln(N)/\ln(N-1)-v_N)^2$. 
	\end{theorem}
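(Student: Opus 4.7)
The plan is to reduce the averaged von Neumann entropy to the Shannon entropy of a classical Markov chain on Krylov sectors, then control the latter via a large-deviation analysis of a random walk on a tree. First, by concavity of von Neumann entropy and the diagonality of the circuit-averaged state $\ob\r_\psi(t)$ in the product basis established in the previous part, $\ob S_\psi(t) \leq S[\ob\r_\psi(t)] = H(p(t))$, where $p(t) = \mcm^t \d_\psi$ is a classical distribution over product states. Because $\sfd \gg L$ makes $\mcm_{PF}$ fully mix each sector, $p(t)$ is uniform within each Krylov sector, and one gets the clean decomposition
\begin{equation*}
H(p(t)) = H(p_\mck(t)) + \EE_{p(t)}[\log |\mck|].
\end{equation*}

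I would next identify the sector-valued process as a random walk on a sublattice of $T_N$ and compute its depth-projected generator. A direct reflection-principle count of walks on $T_N$ gives $f_L(d) \propto \binom{L}{(L-d)/2} (N-1)^{(L-d)/2}$ (up to boundary factors), from which the transition ratio $\alpha_d = f_{L-1}(d-1)/f_L(d)$ simplifies to $(L+d)/(2L)$. This yields mean drift $v_N - d/L$ per step, making the depth process a discrete Ornstein--Uhlenbeck chain with equilibrium depth $v_N L$ and mean-reversion scale $L$. Starting from $\k\psi \in \hfroz$ at $d = L$, the chain relaxes to the equilibrium region in $O(L)$ steps; beyond this, its fluctuations are Gaussian with width $\sim \sqrt L$.

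The third step is a hitting-time large-deviation estimate for excursions to lower depths. Writing $d_* = v_N L - m$ with $m > 0$, I would show that $\Pr[\min_{s \leq t} d_s \leq d_*] \lesssim \sqrt L \, t \exp(-L I(v_N - m/L))$, with $I$ the Cram\'er rate function for the simple random walk on $T_N$ (whose Gaussian approximation near $v_N$ is $I(v_N - \mu) \approx \mu^2 N^2/(8(N-1))$). Combining with $H(p_\mck(t)) \leq \log |V(t)| \leq \log t$ and $\EE[\log |\mck|] \leq \log f_L(d_*)$, and using the identity $\log f_L(d) \approx L \ln N - L I(d/L) - d \ln(N-1)$ to organize the three contributions, one finds that on a probability-$\geq 1/2$ event,
\begin{equation*}
H(p(t)) \lesssim \g^* L \ln N + m \ln(N-1), \qquad \g^* \equiv 1 - v_N \ln(N-1)/\ln N,
\end{equation*}
with the crucial observation being that the $\log t$ contributed by $H(p_\mck(t))$ and the $-LI$ term inside $\log f_L(d_*)$ cancel at the threshold time $\log t \sim LI(v_N - m/L)$. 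Imposing $H(p(t)) \geq \g L \ln N$ then fixes $m \sim (\g - \g^*) L \ln N/\ln(N-1)$, and inverting the LDP via the Gaussian form $I(v_N - \mu) \approx \mu^2 N^2/(8(N-1))$ produces the lower bound $t \gtrsim \sqrt L \, e^{L\l_\g}$. The quadratic form $\l_\g = \tfrac12((1-\g/2) \ln N/\ln(N-1) - v_N)^2$ emerges from this inversion once the optimal $m$ is substituted, with $(1-\g/2)\ln N/\ln(N-1)$ identified as the effective target outward-drift fraction needed to make the excursion to $d_*$ achievable in time $t$; the $\sqrt L$ prefactor inherits from the Gaussian normalization in the hitting-time estimate.

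The hard part will be the cancellation and optimization in the third step: one must carefully balance the three competing quantities (sector-count $\log|V(t)|$, sector-size $\log f_L(d_{\min})$, and LDP cost of reaching $d_{\min}$) and verify that the Gaussian regime of $I$---rather than its Cram\'er tail---is the relevant one in the window $\g_* < \g < 1$, which is why the theorem is restricted to that range. A secondary subtlety is handling the non-stationary transient $t \lesssim L$, where the walker is still relaxing from $d = L$ and both the drift and the step variance depend on $d$; this has to be controlled separately from the stationary OU regime in which the Gaussian LDP applies, and is where the assumption $\g_* < \g$ becomes essential for the approximation scheme to be self-consistent.
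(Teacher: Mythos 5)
Your first step---reducing the circuit-averaged von Neumann entropy to the Shannon entropy of the classical Markov distribution $p(t)=\mcm^{t}\delta_{\psi}$ via concavity and diagonality---is indeed the same move the paper makes (via $S(\ob\rho_\psi(t))$). After that your route diverges: you want a chain-rule split $H(p)=H(p_\mck)+\EE_p[\log|\mck|]$ together with a hitting-time large-deviation estimate for the depth process, whereas the paper bounds $\Tr[\Pi_{C_d}^{\perp}\ob\rho(t)]\le t\,\cp(C_d)$ via graph expansion and then converts this to an entropy bound with Fannes' inequality, getting $\ob S\lesssim\ln|C_d|+2tL\ln(N)\cp(C_d)+c$ directly, with no separate handling of the inter-sector entropy.

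The fatal gap is the inequality $H(p_\mck(t))\le\log|V(t)|\le\log t$. The sector-valued process is a random walk on the full Krylov tree, not on $\{0,\dots,L\}$: in $t$ steps from a boundary leaf the walk's support contains every vertex within graph distance $t$, and on $T_N$ this is $\Theta\!\big((N-1)^{\min(t,L)/2}\big)$ distinct sectors, not $O(t)$. The bound $|V(t)|\le t$ would be valid for the \emph{depth projection} of the walk, but the depth projection discards precisely the branching entropy that makes $H(p_\mck)$ of order $L$. At equilibrium one in fact has $H(p_\mck)\approx v_NL\ln(N-1)$, and already by $t\sim L$ the walk can populate exponentially many sectors. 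With this correction, the cancellation you invoke between the ``$\log t$'' term and the $-LI$ inside $\log f_L(d_*)$ simply does not occur, and the two separate bounds $H_\mck\lesssim(L-d_*)\ln(N-1)$ and $\EE[\log|\mck|]\le\log f_L(d_*)$ are not simultaneously tight: their sum overshoots $\ln|C_{d_*}|$ by a Gaussian factor. What actually is tight is the joint bound $H_\mck+\EE[\log|\mck|]=\sum_\mck p(\mck)\log(|\mck|/p(\mck))\le\log\sum_{\mck\in\mathrm{supp}\,p}|\mck|\le\ln|C_{d_*}|$ by Jensen, and once you make that move you are back to the paper's $\ln|C_d|$ bookkeeping. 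A secondary symptom of the gap is an internal inconsistency in your parameters: imposing $H\ge\gamma L\ln N$ with your accounting forces $d_*=(1-\gamma)L\ln N/\ln(N-1)$ and a threshold $\gamma^*=1-v_N\ln(N-1)/\ln N$, which would yield exponent $\tfrac12\big((1-\gamma)\ln N/\ln(N-1)-v_N\big)^2$---not the stated $\lambda_\gamma$ with $1-\gamma/2$ and a threshold $\gamma_*=2\gamma^*$. The factor of two in the theorem originates from the paper's choice $d_\gamma$ with $1-(d_\gamma/L)\ln(N-1)/\ln N=\gamma/2$, i.e.\ the cone supplies exactly half the target entropy while the escape term supplies the rest; your decomposition does not reproduce that accounting.
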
 
	This indicates that when initialized from a typical computational basis product state, the system rapidly reaches an entropy of $S_* = \gamma_* L \ln N$, but that after reaching $S_*$ the entropy growth slows dramatically, taking exponentially long to fully saturate. This can be understood by appealing to the biased random walk introduced above, which implies that almost all product states are located at a depth near $d_* = v_NL$. A state initialized on the boundary of $G_\mck$ may rapidly move inwards to a depth of $d_*$ (during which the ``inward'' bias dominates), but then gets ``stuck'' at $d_*$, where the number of states is largest (and where the ``outward'' bias now dominates).

	Remarkably, the small gap \eqref{gapbound} imprints itself in the expectation values of the charges $Q_a \in [-L/2,L/2]$, despite the fact that expectation values of local operators do not allow one to distinguish different Krylov sectors. This occurs because the pattern of values that $Q_a$ takes on different sectors is strongly anisotropic across $G_\mck$, as illustrated in Fig.~\ref{fig:trees}~{b}. The exponential smallness of $\cp(G_\mch)$ means that the charge of a generic product state will also take exponentially long to relax. 
	The upshot of this is that similarly to Thm.~\ref{thm:maintextsmalls}, we can bound $t_{Q}(\g)$, the time for $\lan Q_a(t)\ran$ to drop below $\g L/2$ when initialized in a state $\k{\psi_{\rm max}}$ of maximal $Q_a$ charge, 
	as follows:
	\begin{theorem}\label{thm:maintextsmallq}
		Let $\g$ satisfy $0<\g<v_N/2$. Then when $\g=\Theta(L^0)$,
		\be \label{chargerel}	t_{Q}(\g) \geq D_\g \sqrt L e^{L (2\g - v_N)^2 /2} \ee 
		where $D_\g$ is another unimportant $O(1)$ constant. In the $\gamma \ra 0$ limit, $t_{Q}(\g) \geq D_\g / \Phi(G_\mch)$. 
	\end{theorem}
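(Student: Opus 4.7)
The plan is to adapt the bottleneck argument of Theorem~\ref{thm:maintextsmalls} to the charge observable $Q_a$. I would first establish the geometric identity
\[Q_a(v) = \sum_{i=1}^{d(v)}(-1)^i \mathbf{1}[e_i=a],\]
where $d(v)$ is the depth of the sector $v$ in $T_N$ and $e_1,\dots,e_{d(v)}$ are the edge labels along the unique root-to-$v$ path. This holds because any length-$L$ walk ending at $v$ decomposes into the direct root-to-$v$ path plus back-and-forth excursions, and each such excursion contributes $0$ to $Q_a$ (its two steps have opposite parities but identical labels, so their contributions cancel). In particular $|Q_a(v)| \leq d(v)/2$, with equality only on zigzag paths, so that $Q_a(X_t) \geq \g L/2$ forces $d(X_t) \geq \g L$, and the sector of $\psi_{\max}$ sits at the maximal depth $d=L$ along a specific zigzag.

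For the main finite-$\g$ exponential bound, I would then analyze the depth process $d_t = d(X_t)$ on $G_\mck$. Computing the transition structure of $\mcm$, one finds that $d_t$ is a Markov chain on $\{0,2,\dots,L\}$ with jumps of size $0,\pm 2$ whose drift at depth $d$ is determined by the ratio $W_{L-1}(\text{parent})/W_{L-1}(\text{child})$ of walk counts on $T_N$, and which points toward the stationary mean $\approx v_N L$. Starting from $d_0=L$, a Cram\'er-type LDP (or equivalently exponential tilting / Azuma--Hoeffding applied to the depth increments, which have unit variance after normalization) gives
\[\mathbb{P}\bigl(\exists\, s \leq t : d_s \leq 2\g L\bigr) \;\leq\; C\, \frac{t}{\sqrt L}\, e^{-L(2\g-v_N)^2/2},\]
with the $\sqrt L$ from local-CLT normalization and the exponent $(2\g-v_N)^2/2$ from the Gaussian rate function. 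Combined with a secondary concentration argument showing that, conditional on the depth staying above $2\g L$, the walk has not strayed far enough from $\psi_{\max}$'s original zigzag branch to reduce $Q_a$ below $\g L/2$ on average, this yields $t_Q(\g) \geq D_\g \sqrt L\, e^{L(2\g-v_N)^2/2}$.

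For the $\g \to 0$ asymptotic, a direct Cheeger bottleneck argument sharpens the bound. Let $C$ be the sectors whose walks end in the subtree $T_u \subset T_N$ at the root-neighbor $u = o\text{-}b$ that is the first step of $\psi_{\max}$'s walk; by symmetry $\pi(C) \approx 1/N$, and the only $G_\mck$-transitions crossing $\partial C$ are between the root sector and the $N-1$ depth-$2$ sectors in $T_u$, giving capacity $Q(\partial C) \sim |\mck_{\max}|/N^{L+2}$ and $\Phi(C) \sim L^{-3/2}\r_N^L \sim \Phi(G_\mch)$. A standard hitting-time lower bound for reversible chains, $\mathbb{E}_{\psi_{\max}}[\tau_{C^c}] \gtrsim 1/\Phi(G_\mch)$, combined with the approximate symmetry among the $N-1$ ``incorrect'' subtrees $T_{u'}$ that forces $\lan Q_a \ran$ averaged on $C^c$ to have the opposite sign from $\lan Q_a \ran$ averaged on $C$, then gives $t_Q(\g) \geq D_\g/\Phi(G_\mch)$ as $\g \to 0$. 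The principal technical difficulty is the secondary concentration step in the finite-$\g$ argument, needed to convert the depth tail bound into a genuine expectation bound on $Q_a$; this requires careful tracking of the walk's location within the zigzag neighborhood of $\psi_{\max}$, which I would approach via a coupled martingale involving the Hamming distance between $X_t$ and the nearest maximal-$Q_a$ state.
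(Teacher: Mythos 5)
Your proposal takes a genuinely different route from the paper's, but it contains an unresolved gap that the paper's argument is specifically engineered to close.

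The paper does not analyze the depth process $d_t$ directly via an LDP. Instead it works with the cone $A = C_{(ba)^{L\eta/2}}$ (all states whose irreducible string begins with $(ba)^{L\eta/2}$), bounds $P(\psi(t)\in A^c\mid\psi(0)\in A)\le t\,\Phi(A)$ using the expansion machinery already established in Lemma~\ref{cdlemma}, and then proves a pairing lemma to control the charge of the states that \emph{stay} in $A$. That pairing lemma is the key missing ingredient in your argument. Knowing that $d(X_t)\ge 2\gamma L$ (equivalently, that the walk has stayed in the cone) pins the first $2\gamma L$ edges to $(ba)^{\gamma L}$, contributing $+\gamma L$ to the unnormalized charge, but the remaining $d(X_t)-2\gamma L$ edges of the irreducible string are essentially random and could contribute as much as $-(d-2\gamma L)/2$, which for $d\approx v_N L$ and small $\gamma$ can drive $Q_a$ negative pointwise. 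What you call the ``secondary concentration step'' is therefore not a technicality but the crux of the theorem: one must show the tail's contribution is nonnegative \emph{in expectation}. The paper does this by cyclically permuting the tail $s_\psi \mapsto T(s_\psi)$; since $T Q_a T^{-1}=-Q_a$ on the tail, paired states $\psi,\widetilde\psi$ have average charge exactly $\eta$, and iterating the pairing on the unpaired remainder shows the average is $\ge\eta$. Your proposed ``coupled martingale involving the Hamming distance to the nearest maximal-$Q_a$ state'' does not obviously produce an expectation bound of this kind, and you yourself flag it as the principal difficulty without closing it.

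A secondary concern: your Cram\'er/Azuma step treats $d_t$ as if it had nearly constant drift, but the transition probabilities of the depth chain depend nontrivially on $d$ through the ratios of walk counts $|\mck_{d-1}^{(L-1)}|/|\mck_d^{(L)}|$; the paper avoids this by never analyzing the depth process directly and instead invoking the explicit cone-expansion formula of Lemma~\ref{cdlemma}, which already encodes the correct Gaussian rate $(2\gamma-v_N)^2/2$. Your $\gamma\to0$ Cheeger argument is essentially right and parallels the paper's Corollary~\ref{cor:phicd}, though the claim that $\langle Q_a\rangle$ averaged over $C^c$ has the opposite sign to its average over $C$ is not quite how the paper obtains the bound (it simply uses $\langle Q_a\rangle_{\psi'}\ge-1$ for $\psi'\in A^c$). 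In short: correct geometric picture and correct target exponent, but the route through the depth LDP plus an unproven concentration lemma does not yet constitute a proof.
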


	\begin{figure}
		\centering 
		\includegraphics[width=0.48\textwidth]{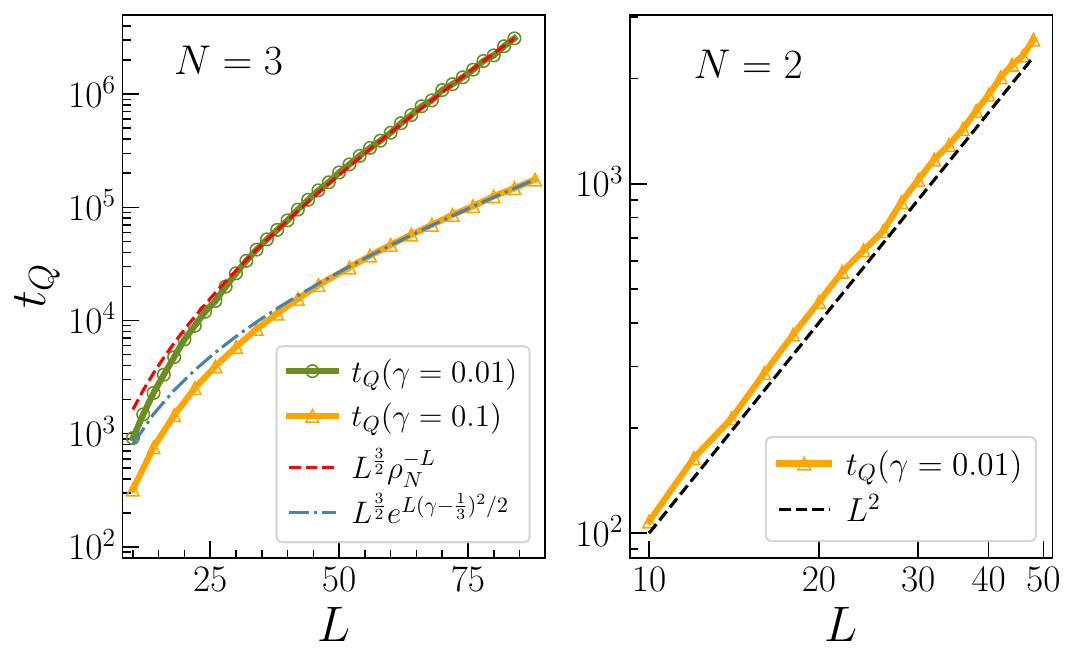}
		\caption{\label{fig:T_relax} The charge relaxation time $t_Q$ for the maximal-$Q_1$ initial state $|\psi(0)\rangle=|21\ran^{\tp L/2}$. {\it Left:} Slow charge relaxation for $N=3$: $t_Q$ for $\gamma=0.1$ and $\gamma=0.01$. The former fits well to $L$ times the bound in \eqref{chargerel} (blue dot-dashed line), while the latter saturates the $\g\ra0$ bound of $\sim 1/\Phi(G_\mch)$ (red dashed line). 
			{\it Right:} Fast relaxation for $N=2$: $t_Q$ on a log-log scale, showing clear diffusive behavior.}
	\end{figure}
	
	The above theorems were all stated for the case of $N>2$. When $N=2$, the constrained dynamics is not strongly fragmented: it instead possesses only $O(L)$ sectors, the largest of which has a dimension smaller than $2^L$ by only a factor of $1/\sqrt L$. These sectors are connected by the bath to form a 2-valent tree (i.e. a line), yielding $\cp(G_\mch) \sim 1/\sqrt L$. The $N=2$ dynamics thus possesses only a weak bottleneck, and
	thermalizes exponentially faster than the $N\geq 3$ models.  

	These results are corroborated in Fig.~\ref{fig:T_relax} by numerically simulating charge relaxation in the classical stochastic model defined by $\mcm$. In these simulations we fix the depth of each RU brickwork layer at $\sfd = 2$, constraining the intra-sector dynamics by spatial locality. On general grounds we expect this to decrease $\De_{\mcm}$ by a factor of $L\inv$, 
 so that Cheeger's inequality for $N=2$ reads $L^{-2}\lesssim \De_\mcm \lesssim L^{-3/2}$. For $N=2$ our numerics give a diffusive $t_Q\sim L^{-2}$, matching this lower bound. For $N=3$ and $\g = \ct(L^0)$, $t_Q(\g)$ is similarly observed to exceed the lower bound of Thm.~\ref{thm:maintextsmallq} by a factor of $L$. Interestingly however, when $\g\ra0$ the lower bound of $1/\cp(G_\mch)\sim L^{3/2}\r_3^{-L}$ appears to be quite nearly saturated, without an extra factor of $L$.
	Finally, we note that when $N=3$, the base of the exponent in \eqref{chargerel} is only $e^{1/18}\approx 1.06$ for small $\g$; this makes the ${\rm poly}(L)$ contribution to $t_Q(\g)$ dominate for moderate system sizes. 
	In the future it would be interesting to explore models with stronger bottlenecks, where the exponential scaling may be more pronounced. 
	
	\paragraph*{Discussion:} In this work, we have seen how HSF can remain robust in the presence of local coupling to a thermal bath, with signatures of fragmentation remaining present even on times exponentially long in system size {(provided the constraints are unbroken away from the bath)}. Our results have focused on models with pair-flip constraints, where thermalization is arrested by the presence of strong bottlenecks encountered by the dynamics as it explores Hilbert space. {However, in Ref.~\cite{wang2025exponentially}, we show that this graph-theoretic picture of thermalization generalizes to a very broad family of strongly-fragmented systems, including all examples currently known in the literature, from the dipole-conserving models of Refs.~\cite{sala2020ergodicity,khemani2020localization} to the general family of semigroup-based dynamics introduced in Ref.~\cite{group_dynamics}. In Ref.~\cite{wang2025exponentially}, strong evidence is built for the conjecture that in {\it all} models of strongly-fragmented dynamics, it always takes at least exponentially long (in system size) for a bath at one end of a chain to thermalize the entire system. This conjecture is argued for based on the ubiquity of the expanding structure of Krylov graphs in strongly-fragmented models, and can be rigorously proven to hold for all of the examples of strongly-fragmented dynamics known to the authors (a proof in full generality appears to require resolving an open conjecture in the theory of expander graphs).}
    
    By viewing one half of a fully-constrained system as a bath for the other half, our results imply that bottlenecks inevitably produce slow {\it intra}-sector dynamics in unperturbed models, a result which may be useful for understanding the slow charge transport of Refs.~\cite{group_dynamics,brighi2023hilbert}. 
	
	Our analytic results have focused on the case of RU dynamics. Hamiltonian dynamics should generically thermalize at least as slowly, although
 there are other reasons why disorder-free Hamiltonian dynamics may be parametrically slower,
 and in the future it would be interesting to investigate this possibility in more detail. 
	It would also be interesting to understand couplings to different types of baths and the resulting steady states, as in Ref.~\cite{li2023hilbert}. 
	
	\paragraph*{Acknowledgments:} E.L. is grateful to Shankar Balasubramanian, Sarang Gopalakrishnan, and Alexey Khudorozhkov for stimulating discussions and collaborations on a related project. He also thanks Ehud Altman, Soonwon Choi, Ilya Gruzberg, and the residents of the 5th floor of Birge Hall, Sam Garratt in particular, for helpful discussions. Y.H. thanks Hyeongjin Kim, Rahul Nandkishore, Anatoli Polkovnikov and Zhi-Cheng Yang for discussions. Some simulations were performed with the help of the ITensor library \cite{fishman2022itensor}. E.L. was supported by a Miller research fellowship. This research is supported in part by the National Science Foundation under Grant No. DMR-2219735 (Y. H. and X. C.).
	
	\bibliography{hsf_impurities_bib}

\begin{thebibliography}{48}%
\makeatletter
\providecommand \@ifxundefined [1]{%
 \@ifx{#1\undefined}
}%
\providecommand \@ifnum [1]{%
 \ifnum #1\expandafter \@firstoftwo
 \else \expandafter \@secondoftwo
 \fi
}%
\providecommand \@ifx [1]{%
 \ifx #1\expandafter \@firstoftwo
 \else \expandafter \@secondoftwo
 \fi
}%
\providecommand \natexlab [1]{#1}%
\providecommand \enquote  [1]{``#1''}%
\providecommand \bibnamefont  [1]{#1}%
\providecommand \bibfnamefont [1]{#1}%
\providecommand \citenamefont [1]{#1}%
\providecommand \href@noop [0]{\@secondoftwo}%
\providecommand \href [0]{\begingroup \@sanitize@url \@href}%
\providecommand \@href[1]{\@@startlink{#1}\@@href}%
\providecommand \@@href[1]{\endgroup#1\@@endlink}%
\providecommand \@sanitize@url [0]{\catcode `\\12\catcode `\$12\catcode `\&12\catcode `\#12\catcode `\^12\catcode `\_12\catcode `\%12\relax}%
\providecommand \@@startlink[1]{}%
\providecommand \@@endlink[0]{}%
\providecommand \url  [0]{\begingroup\@sanitize@url \@url }%
\providecommand \@url [1]{\endgroup\@href {#1}{\urlprefix }}%
\providecommand \urlprefix  [0]{URL }%
\providecommand \Eprint [0]{\href }%
\providecommand \doibase [0]{https://doi.org/}%
\providecommand \selectlanguage [0]{\@gobble}%
\providecommand \bibinfo  [0]{\@secondoftwo}%
\providecommand \bibfield  [0]{\@secondoftwo}%
\providecommand \translation [1]{[#1]}%
\providecommand \BibitemOpen [0]{}%
\providecommand \bibitemStop [0]{}%
\providecommand \bibitemNoStop [0]{.\EOS\space}%
\providecommand \EOS [0]{\spacefactor3000\relax}%
\providecommand \BibitemShut  [1]{\csname bibitem#1\endcsname}%
\let\auto@bib@innerbib\@empty
\bibitem [{\citenamefont {Deutsch}(1991)}]{deutsch1991quantum}%
  \BibitemOpen
  \bibfield  {author} {\bibinfo {author} {\bibfnamefont {J.~M.}\ \bibnamefont {Deutsch}},\ }\bibfield  {title} {\bibinfo {title} {Quantum statistical mechanics in a closed system},\ }\href {https://doi.org/10.1103/PhysRevA.43.2046} {\bibfield  {journal} {\bibinfo  {journal} {Phys. Rev. A}\ }\textbf {\bibinfo {volume} {43}},\ \bibinfo {pages} {2046} (\bibinfo {year} {1991})}\BibitemShut {NoStop}%
\bibitem [{\citenamefont {Srednicki}(1994)}]{srednicki1994chaos}%
  \BibitemOpen
  \bibfield  {author} {\bibinfo {author} {\bibfnamefont {M.}~\bibnamefont {Srednicki}},\ }\bibfield  {title} {\bibinfo {title} {Chaos and quantum thermalization},\ }\href {https://doi.org/10.1103/PhysRevE.50.888} {\bibfield  {journal} {\bibinfo  {journal} {Phys. Rev. E}\ }\textbf {\bibinfo {volume} {50}},\ \bibinfo {pages} {888} (\bibinfo {year} {1994})}\BibitemShut {NoStop}%
\bibitem [{\citenamefont {Rigol}\ \emph {et~al.}(2008)\citenamefont {Rigol}, \citenamefont {Dunjko},\ and\ \citenamefont {Olshanii}}]{rigol2008thermalization}%
  \BibitemOpen
  \bibfield  {author} {\bibinfo {author} {\bibfnamefont {M.}~\bibnamefont {Rigol}}, \bibinfo {author} {\bibfnamefont {V.}~\bibnamefont {Dunjko}},\ and\ \bibinfo {author} {\bibfnamefont {M.}~\bibnamefont {Olshanii}},\ }\bibfield  {title} {\bibinfo {title} {Thermalization and its mechanism for generic isolated quantum systems},\ }\href@noop {} {\bibfield  {journal} {\bibinfo  {journal} {Nature}\ }\textbf {\bibinfo {volume} {452}},\ \bibinfo {pages} {854} (\bibinfo {year} {2008})}\BibitemShut {NoStop}%
\bibitem [{\citenamefont {Nandkishore}\ and\ \citenamefont {Huse}(2015)}]{nandkishore2015many-body}%
  \BibitemOpen
  \bibfield  {author} {\bibinfo {author} {\bibfnamefont {R.}~\bibnamefont {Nandkishore}}\ and\ \bibinfo {author} {\bibfnamefont {D.~A.}\ \bibnamefont {Huse}},\ }\bibfield  {title} {\bibinfo {title} {Many-body localization and thermalization in quantum statistical mechanics},\ }\href {https://doi.org/10.1146/annurev-conmatphys-031214-014726} {\bibfield  {journal} {\bibinfo  {journal} {Annual Review of Condensed Matter Physics}\ }\textbf {\bibinfo {volume} {6}},\ \bibinfo {pages} {15} (\bibinfo {year} {2015})}\BibitemShut {NoStop}%
\bibitem [{\citenamefont {D'Alessio}\ \emph {et~al.}(2016)\citenamefont {D'Alessio}, \citenamefont {Kafri}, \citenamefont {Polkovnikov},\ and\ \citenamefont {Rigol}}]{dalessio2016from}%
  \BibitemOpen
  \bibfield  {author} {\bibinfo {author} {\bibfnamefont {L.}~\bibnamefont {D'Alessio}}, \bibinfo {author} {\bibfnamefont {Y.}~\bibnamefont {Kafri}}, \bibinfo {author} {\bibfnamefont {A.}~\bibnamefont {Polkovnikov}},\ and\ \bibinfo {author} {\bibfnamefont {M.}~\bibnamefont {Rigol}},\ }\bibfield  {title} {\bibinfo {title} {From quantum chaos and eigenstate thermalization to statistical mechanics and thermodynamics},\ }\href {https://doi.org/10.1080/00018732.2016.1198134} {\bibfield  {journal} {\bibinfo  {journal} {Advances in Physics}\ }\textbf {\bibinfo {volume} {65}},\ \bibinfo {pages} {239} (\bibinfo {year} {2016})}\BibitemShut {NoStop}%
\bibitem [{\citenamefont {Alet}\ and\ \citenamefont {Laflorencie}(2018)}]{alet2018many-body}%
  \BibitemOpen
  \bibfield  {author} {\bibinfo {author} {\bibfnamefont {F.}~\bibnamefont {Alet}}\ and\ \bibinfo {author} {\bibfnamefont {N.}~\bibnamefont {Laflorencie}},\ }\bibfield  {title} {\bibinfo {title} {Many-body localization: An introduction and selected topics},\ }\href {https://doi.org/https://doi.org/10.1016/j.crhy.2018.03.003} {\bibfield  {journal} {\bibinfo  {journal} {Comptes Rendus Physique}\ }\textbf {\bibinfo {volume} {19}},\ \bibinfo {pages} {498} (\bibinfo {year} {2018})},\ \bibinfo {note} {quantum simulation / Simulation quantique}\BibitemShut {NoStop}%
\bibitem [{\citenamefont {Abanin}\ \emph {et~al.}(2019)\citenamefont {Abanin}, \citenamefont {Altman}, \citenamefont {Bloch},\ and\ \citenamefont {Serbyn}}]{abanin2019colloquium}%
  \BibitemOpen
  \bibfield  {author} {\bibinfo {author} {\bibfnamefont {D.~A.}\ \bibnamefont {Abanin}}, \bibinfo {author} {\bibfnamefont {E.}~\bibnamefont {Altman}}, \bibinfo {author} {\bibfnamefont {I.}~\bibnamefont {Bloch}},\ and\ \bibinfo {author} {\bibfnamefont {M.}~\bibnamefont {Serbyn}},\ }\bibfield  {title} {\bibinfo {title} {Colloquium: Many-body localization, thermalization, and entanglement},\ }\href {https://doi.org/10.1103/RevModPhys.91.021001} {\bibfield  {journal} {\bibinfo  {journal} {Rev. Mod. Phys.}\ }\textbf {\bibinfo {volume} {91}},\ \bibinfo {pages} {021001} (\bibinfo {year} {2019})}\BibitemShut {NoStop}%
\bibitem [{\citenamefont {Sala}\ \emph {et~al.}(2020)\citenamefont {Sala}, \citenamefont {Rakovszky}, \citenamefont {Verresen}, \citenamefont {Knap},\ and\ \citenamefont {Pollmann}}]{sala2020ergodicity}%
  \BibitemOpen
  \bibfield  {author} {\bibinfo {author} {\bibfnamefont {P.}~\bibnamefont {Sala}}, \bibinfo {author} {\bibfnamefont {T.}~\bibnamefont {Rakovszky}}, \bibinfo {author} {\bibfnamefont {R.}~\bibnamefont {Verresen}}, \bibinfo {author} {\bibfnamefont {M.}~\bibnamefont {Knap}},\ and\ \bibinfo {author} {\bibfnamefont {F.}~\bibnamefont {Pollmann}},\ }\bibfield  {title} {\bibinfo {title} {Ergodicity breaking arising from hilbert space fragmentation in dipole-conserving hamiltonians},\ }\href {https://doi.org/10.1103/PhysRevX.10.011047} {\bibfield  {journal} {\bibinfo  {journal} {Phys. Rev. X}\ }\textbf {\bibinfo {volume} {10}},\ \bibinfo {pages} {011047} (\bibinfo {year} {2020})}\BibitemShut {NoStop}%
\bibitem [{\citenamefont {Khemani}\ \emph {et~al.}(2020)\citenamefont {Khemani}, \citenamefont {Hermele},\ and\ \citenamefont {Nandkishore}}]{khemani2020localization}%
  \BibitemOpen
  \bibfield  {author} {\bibinfo {author} {\bibfnamefont {V.}~\bibnamefont {Khemani}}, \bibinfo {author} {\bibfnamefont {M.}~\bibnamefont {Hermele}},\ and\ \bibinfo {author} {\bibfnamefont {R.}~\bibnamefont {Nandkishore}},\ }\bibfield  {title} {\bibinfo {title} {Localization from hilbert space shattering: From theory to physical realizations},\ }\href {https://doi.org/10.1103/PhysRevB.101.174204} {\bibfield  {journal} {\bibinfo  {journal} {Phys. Rev. B}\ }\textbf {\bibinfo {volume} {101}},\ \bibinfo {pages} {174204} (\bibinfo {year} {2020})}\BibitemShut {NoStop}%
\bibitem [{\citenamefont {Moudgalya}\ and\ \citenamefont {Motrunich}(2022{\natexlab{a}})}]{motrunich}%
  \BibitemOpen
  \bibfield  {author} {\bibinfo {author} {\bibfnamefont {S.}~\bibnamefont {Moudgalya}}\ and\ \bibinfo {author} {\bibfnamefont {O.~I.}\ \bibnamefont {Motrunich}},\ }\bibfield  {title} {\bibinfo {title} {Hilbert space fragmentation and commutant algebras},\ }\href {https://doi.org/10.1103/PhysRevX.12.011050} {\bibfield  {journal} {\bibinfo  {journal} {Phys. Rev. X}\ }\textbf {\bibinfo {volume} {12}},\ \bibinfo {pages} {011050} (\bibinfo {year} {2022}{\natexlab{a}})}\BibitemShut {NoStop}%
\bibitem [{\citenamefont {Pancotti}\ \emph {et~al.}(2020)\citenamefont {Pancotti}, \citenamefont {Giudice}, \citenamefont {Cirac}, \citenamefont {Garrahan},\ and\ \citenamefont {Ba{\~n}uls}}]{pancotti2020quantum}%
  \BibitemOpen
  \bibfield  {author} {\bibinfo {author} {\bibfnamefont {N.}~\bibnamefont {Pancotti}}, \bibinfo {author} {\bibfnamefont {G.}~\bibnamefont {Giudice}}, \bibinfo {author} {\bibfnamefont {J.~I.}\ \bibnamefont {Cirac}}, \bibinfo {author} {\bibfnamefont {J.~P.}\ \bibnamefont {Garrahan}},\ and\ \bibinfo {author} {\bibfnamefont {M.~C.}\ \bibnamefont {Ba{\~n}uls}},\ }\bibfield  {title} {\bibinfo {title} {Quantum east model: Localization, nonthermal eigenstates, and slow dynamics},\ }\href@noop {} {\bibfield  {journal} {\bibinfo  {journal} {Physical Review X}\ }\textbf {\bibinfo {volume} {10}},\ \bibinfo {pages} {021051} (\bibinfo {year} {2020})}\BibitemShut {NoStop}%
\bibitem [{\citenamefont {van Horssen}\ \emph {et~al.}(2015)\citenamefont {van Horssen}, \citenamefont {Levi},\ and\ \citenamefont {Garrahan}}]{van2015dynamics}%
  \BibitemOpen
  \bibfield  {author} {\bibinfo {author} {\bibfnamefont {M.}~\bibnamefont {van Horssen}}, \bibinfo {author} {\bibfnamefont {E.}~\bibnamefont {Levi}},\ and\ \bibinfo {author} {\bibfnamefont {J.~P.}\ \bibnamefont {Garrahan}},\ }\bibfield  {title} {\bibinfo {title} {Dynamics of many-body localization in a translation-invariant quantum glass model},\ }\href@noop {} {\bibfield  {journal} {\bibinfo  {journal} {Physical Review B}\ }\textbf {\bibinfo {volume} {92}},\ \bibinfo {pages} {100305} (\bibinfo {year} {2015})}\BibitemShut {NoStop}%
\bibitem [{\citenamefont {Mukherjee}\ \emph {et~al.}(2021)\citenamefont {Mukherjee}, \citenamefont {Banerjee}, \citenamefont {Sengupta},\ and\ \citenamefont {Sen}}]{mukherjee2021minimal}%
  \BibitemOpen
  \bibfield  {author} {\bibinfo {author} {\bibfnamefont {B.}~\bibnamefont {Mukherjee}}, \bibinfo {author} {\bibfnamefont {D.}~\bibnamefont {Banerjee}}, \bibinfo {author} {\bibfnamefont {K.}~\bibnamefont {Sengupta}},\ and\ \bibinfo {author} {\bibfnamefont {A.}~\bibnamefont {Sen}},\ }\bibfield  {title} {\bibinfo {title} {Minimal model for hilbert space fragmentation with local constraints},\ }\href@noop {} {\bibfield  {journal} {\bibinfo  {journal} {Physical Review B}\ }\textbf {\bibinfo {volume} {104}},\ \bibinfo {pages} {155117} (\bibinfo {year} {2021})}\BibitemShut {NoStop}%
\bibitem [{\citenamefont {Yang}\ \emph {et~al.}(2020)\citenamefont {Yang}, \citenamefont {Liu}, \citenamefont {Gorshkov},\ and\ \citenamefont {Iadecola}}]{yang2020hilbert}%
  \BibitemOpen
  \bibfield  {author} {\bibinfo {author} {\bibfnamefont {Z.-C.}\ \bibnamefont {Yang}}, \bibinfo {author} {\bibfnamefont {F.}~\bibnamefont {Liu}}, \bibinfo {author} {\bibfnamefont {A.~V.}\ \bibnamefont {Gorshkov}},\ and\ \bibinfo {author} {\bibfnamefont {T.}~\bibnamefont {Iadecola}},\ }\bibfield  {title} {\bibinfo {title} {Hilbert-space fragmentation from strict confinement},\ }\href@noop {} {\bibfield  {journal} {\bibinfo  {journal} {Physical review letters}\ }\textbf {\bibinfo {volume} {124}},\ \bibinfo {pages} {207602} (\bibinfo {year} {2020})}\BibitemShut {NoStop}%
\bibitem [{\citenamefont {Brighi}\ \emph {et~al.}(2023)\citenamefont {Brighi}, \citenamefont {Ljubotina},\ and\ \citenamefont {Serbyn}}]{brighi2023hilbert}%
  \BibitemOpen
  \bibfield  {author} {\bibinfo {author} {\bibfnamefont {P.}~\bibnamefont {Brighi}}, \bibinfo {author} {\bibfnamefont {M.}~\bibnamefont {Ljubotina}},\ and\ \bibinfo {author} {\bibfnamefont {M.}~\bibnamefont {Serbyn}},\ }\bibfield  {title} {\bibinfo {title} {Hilbert space fragmentation and slow dynamics in particle-conserving quantum east models},\ }\href@noop {} {\bibfield  {journal} {\bibinfo  {journal} {SciPost Physics}\ }\textbf {\bibinfo {volume} {15}},\ \bibinfo {pages} {093} (\bibinfo {year} {2023})}\BibitemShut {NoStop}%
\bibitem [{\citenamefont {Lan}\ \emph {et~al.}(2018)\citenamefont {Lan}, \citenamefont {van Horssen}, \citenamefont {Powell},\ and\ \citenamefont {Garrahan}}]{lan2018quantum}%
  \BibitemOpen
  \bibfield  {author} {\bibinfo {author} {\bibfnamefont {Z.}~\bibnamefont {Lan}}, \bibinfo {author} {\bibfnamefont {M.}~\bibnamefont {van Horssen}}, \bibinfo {author} {\bibfnamefont {S.}~\bibnamefont {Powell}},\ and\ \bibinfo {author} {\bibfnamefont {J.~P.}\ \bibnamefont {Garrahan}},\ }\bibfield  {title} {\bibinfo {title} {Quantum slow relaxation and metastability due to dynamical constraints},\ }\href@noop {} {\bibfield  {journal} {\bibinfo  {journal} {Physical review letters}\ }\textbf {\bibinfo {volume} {121}},\ \bibinfo {pages} {040603} (\bibinfo {year} {2018})}\BibitemShut {NoStop}%
\bibitem [{\citenamefont {Garrahan}(2018)}]{garrahan2018aspects}%
  \BibitemOpen
  \bibfield  {author} {\bibinfo {author} {\bibfnamefont {J.~P.}\ \bibnamefont {Garrahan}},\ }\bibfield  {title} {\bibinfo {title} {Aspects of non-equilibrium in classical and quantum systems: Slow relaxation and glasses, dynamical large deviations, quantum non-ergodicity, and open quantum dynamics},\ }\href@noop {} {\bibfield  {journal} {\bibinfo  {journal} {Physica A: Statistical Mechanics and its Applications}\ }\textbf {\bibinfo {volume} {504}},\ \bibinfo {pages} {130} (\bibinfo {year} {2018})}\BibitemShut {NoStop}%
\bibitem [{\citenamefont {De~Tomasi}\ \emph {et~al.}(2019)\citenamefont {De~Tomasi}, \citenamefont {Hetterich}, \citenamefont {Sala},\ and\ \citenamefont {Pollmann}}]{detomasi2019dynamics}%
  \BibitemOpen
  \bibfield  {author} {\bibinfo {author} {\bibfnamefont {G.}~\bibnamefont {De~Tomasi}}, \bibinfo {author} {\bibfnamefont {D.}~\bibnamefont {Hetterich}}, \bibinfo {author} {\bibfnamefont {P.}~\bibnamefont {Sala}},\ and\ \bibinfo {author} {\bibfnamefont {F.}~\bibnamefont {Pollmann}},\ }\bibfield  {title} {\bibinfo {title} {Dynamics of strongly interacting systems: From fock-space fragmentation to many-body localization},\ }\href {https://doi.org/10.1103/PhysRevB.100.214313} {\bibfield  {journal} {\bibinfo  {journal} {Phys. Rev. B}\ }\textbf {\bibinfo {volume} {100}},\ \bibinfo {pages} {214313} (\bibinfo {year} {2019})}\BibitemShut {NoStop}%
\bibitem [{\citenamefont {Moudgalya}\ \emph {et~al.}(2022)\citenamefont {Moudgalya}, \citenamefont {Prem}, \citenamefont {Nandkishore}, \citenamefont {Regnault},\ and\ \citenamefont {Bernevig}}]{moudgalya2022thermalization}%
  \BibitemOpen
  \bibfield  {author} {\bibinfo {author} {\bibfnamefont {S.}~\bibnamefont {Moudgalya}}, \bibinfo {author} {\bibfnamefont {A.}~\bibnamefont {Prem}}, \bibinfo {author} {\bibfnamefont {R.}~\bibnamefont {Nandkishore}}, \bibinfo {author} {\bibfnamefont {N.}~\bibnamefont {Regnault}},\ and\ \bibinfo {author} {\bibfnamefont {B.~A.}\ \bibnamefont {Bernevig}},\ }\bibfield  {title} {\bibinfo {title} {Thermalization and its absence within krylov subspaces of a constrained hamiltonian},\ }in\ \href@noop {} {\emph {\bibinfo {booktitle} {Memorial Volume for Shoucheng Zhang}}}\ (\bibinfo  {publisher} {World Scientific},\ \bibinfo {year} {2022})\ pp.\ \bibinfo {pages} {147--209}\BibitemShut {NoStop}%
\bibitem [{\citenamefont {Will}\ \emph {et~al.}(2023)\citenamefont {Will}, \citenamefont {Moessner},\ and\ \citenamefont {Pollmann}}]{will2023realization}%
  \BibitemOpen
  \bibfield  {author} {\bibinfo {author} {\bibfnamefont {M.}~\bibnamefont {Will}}, \bibinfo {author} {\bibfnamefont {R.}~\bibnamefont {Moessner}},\ and\ \bibinfo {author} {\bibfnamefont {F.}~\bibnamefont {Pollmann}},\ }\bibfield  {title} {\bibinfo {title} {Realization of hilbert space fragmentation and fracton dynamics in 2d},\ }\href@noop {} {\bibfield  {journal} {\bibinfo  {journal} {arXiv preprint arXiv:2311.05695}\ } (\bibinfo {year} {2023})}\BibitemShut {NoStop}%
\bibitem [{\citenamefont {Kwan}\ \emph {et~al.}(2023)\citenamefont {Kwan}, \citenamefont {Wilhelm}, \citenamefont {Biswas},\ and\ \citenamefont {Parameswaran}}]{kwan2023minimal}%
  \BibitemOpen
  \bibfield  {author} {\bibinfo {author} {\bibfnamefont {Y.~H.}\ \bibnamefont {Kwan}}, \bibinfo {author} {\bibfnamefont {P.~H.}\ \bibnamefont {Wilhelm}}, \bibinfo {author} {\bibfnamefont {S.}~\bibnamefont {Biswas}},\ and\ \bibinfo {author} {\bibfnamefont {S.}~\bibnamefont {Parameswaran}},\ }\bibfield  {title} {\bibinfo {title} {Minimal hubbard models of maximal hilbert space fragmentation},\ }\href@noop {} {\bibfield  {journal} {\bibinfo  {journal} {arXiv preprint arXiv:2304.02669}\ } (\bibinfo {year} {2023})}\BibitemShut {NoStop}%
\bibitem [{\citenamefont {Li}\ \emph {et~al.}(2023)\citenamefont {Li}, \citenamefont {Sala},\ and\ \citenamefont {Pollmann}}]{li2023hilbert}%
  \BibitemOpen
  \bibfield  {author} {\bibinfo {author} {\bibfnamefont {Y.}~\bibnamefont {Li}}, \bibinfo {author} {\bibfnamefont {P.}~\bibnamefont {Sala}},\ and\ \bibinfo {author} {\bibfnamefont {F.}~\bibnamefont {Pollmann}},\ }\bibfield  {title} {\bibinfo {title} {Hilbert space fragmentation in open quantum systems},\ }\href@noop {} {\bibfield  {journal} {\bibinfo  {journal} {Physical Review Research}\ }\textbf {\bibinfo {volume} {5}} (\bibinfo {year} {2023})}\BibitemShut {NoStop}%
\bibitem [{\citenamefont {Frey}\ \emph {et~al.}(2024)\citenamefont {Frey}, \citenamefont {Mikhail}, \citenamefont {Rachel},\ and\ \citenamefont {Hackl}}]{frey2024probing}%
  \BibitemOpen
  \bibfield  {author} {\bibinfo {author} {\bibfnamefont {P.}~\bibnamefont {Frey}}, \bibinfo {author} {\bibfnamefont {D.}~\bibnamefont {Mikhail}}, \bibinfo {author} {\bibfnamefont {S.}~\bibnamefont {Rachel}},\ and\ \bibinfo {author} {\bibfnamefont {L.}~\bibnamefont {Hackl}},\ }\bibfield  {title} {\bibinfo {title} {Probing hilbert space fragmentation and the block inverse participation ratio},\ }\href@noop {} {\bibfield  {journal} {\bibinfo  {journal} {Physical Review B}\ }\textbf {\bibinfo {volume} {109}},\ \bibinfo {pages} {064302} (\bibinfo {year} {2024})}\BibitemShut {NoStop}%
\bibitem [{\citenamefont {Kohlert}\ \emph {et~al.}(2023)\citenamefont {Kohlert}, \citenamefont {Scherg}, \citenamefont {Sala}, \citenamefont {Pollmann}, \citenamefont {Hebbe~Madhusudhana}, \citenamefont {Bloch},\ and\ \citenamefont {Aidelsburger}}]{kohlert2023exploring}%
  \BibitemOpen
  \bibfield  {author} {\bibinfo {author} {\bibfnamefont {T.}~\bibnamefont {Kohlert}}, \bibinfo {author} {\bibfnamefont {S.}~\bibnamefont {Scherg}}, \bibinfo {author} {\bibfnamefont {P.}~\bibnamefont {Sala}}, \bibinfo {author} {\bibfnamefont {F.}~\bibnamefont {Pollmann}}, \bibinfo {author} {\bibfnamefont {B.}~\bibnamefont {Hebbe~Madhusudhana}}, \bibinfo {author} {\bibfnamefont {I.}~\bibnamefont {Bloch}},\ and\ \bibinfo {author} {\bibfnamefont {M.}~\bibnamefont {Aidelsburger}},\ }\bibfield  {title} {\bibinfo {title} {Exploring the regime of fragmentation in strongly tilted fermi-hubbard chains},\ }\href {https://doi.org/10.1103/PhysRevLett.130.010201} {\bibfield  {journal} {\bibinfo  {journal} {Phys. Rev. Lett.}\ }\textbf {\bibinfo {volume} {130}},\ \bibinfo {pages} {010201} (\bibinfo {year} {2023})}\BibitemShut {NoStop}%
\bibitem [{\citenamefont {Kinoshita}\ \emph {et~al.}(2006)\citenamefont {Kinoshita}, \citenamefont {Wenger},\ and\ \citenamefont {Weiss}}]{kinoshita2006quantum}%
  \BibitemOpen
  \bibfield  {author} {\bibinfo {author} {\bibfnamefont {T.}~\bibnamefont {Kinoshita}}, \bibinfo {author} {\bibfnamefont {T.}~\bibnamefont {Wenger}},\ and\ \bibinfo {author} {\bibfnamefont {D.~S.}\ \bibnamefont {Weiss}},\ }\bibfield  {title} {\bibinfo {title} {A quantum newton's cradle},\ }\href@noop {} {\bibfield  {journal} {\bibinfo  {journal} {Nature}\ }\textbf {\bibinfo {volume} {440}},\ \bibinfo {pages} {900} (\bibinfo {year} {2006})}\BibitemShut {NoStop}%
\bibitem [{\citenamefont {Scheie}\ \emph {et~al.}(2021)\citenamefont {Scheie}, \citenamefont {Sherman}, \citenamefont {Dupont}, \citenamefont {Nagler}, \citenamefont {Stone}, \citenamefont {Granroth}, \citenamefont {Moore},\ and\ \citenamefont {Tennant}}]{scheie2021detection}%
  \BibitemOpen
  \bibfield  {author} {\bibinfo {author} {\bibfnamefont {A.}~\bibnamefont {Scheie}}, \bibinfo {author} {\bibfnamefont {N.}~\bibnamefont {Sherman}}, \bibinfo {author} {\bibfnamefont {M.}~\bibnamefont {Dupont}}, \bibinfo {author} {\bibfnamefont {S.}~\bibnamefont {Nagler}}, \bibinfo {author} {\bibfnamefont {M.}~\bibnamefont {Stone}}, \bibinfo {author} {\bibfnamefont {G.}~\bibnamefont {Granroth}}, \bibinfo {author} {\bibfnamefont {J.}~\bibnamefont {Moore}},\ and\ \bibinfo {author} {\bibfnamefont {D.}~\bibnamefont {Tennant}},\ }\bibfield  {title} {\bibinfo {title} {Detection of kardar--parisi--zhang hydrodynamics in a quantum heisenberg spin-1/2 chain},\ }\href@noop {} {\bibfield  {journal} {\bibinfo  {journal} {Nature Physics}\ }\textbf {\bibinfo {volume} {17}},\ \bibinfo {pages} {726} (\bibinfo {year} {2021})}\BibitemShut {NoStop}%
\bibitem [{\citenamefont {Malvania}\ \emph {et~al.}(2021)\citenamefont {Malvania}, \citenamefont {Zhang}, \citenamefont {Le}, \citenamefont {Dubail}, \citenamefont {Rigol},\ and\ \citenamefont {Weiss}}]{malvania2021generalized}%
  \BibitemOpen
  \bibfield  {author} {\bibinfo {author} {\bibfnamefont {N.}~\bibnamefont {Malvania}}, \bibinfo {author} {\bibfnamefont {Y.}~\bibnamefont {Zhang}}, \bibinfo {author} {\bibfnamefont {Y.}~\bibnamefont {Le}}, \bibinfo {author} {\bibfnamefont {J.}~\bibnamefont {Dubail}}, \bibinfo {author} {\bibfnamefont {M.}~\bibnamefont {Rigol}},\ and\ \bibinfo {author} {\bibfnamefont {D.~S.}\ \bibnamefont {Weiss}},\ }\bibfield  {title} {\bibinfo {title} {Generalized hydrodynamics in strongly interacting 1d bose gases},\ }\href@noop {} {\bibfield  {journal} {\bibinfo  {journal} {Science}\ }\textbf {\bibinfo {volume} {373}},\ \bibinfo {pages} {1129} (\bibinfo {year} {2021})}\BibitemShut {NoStop}%
\bibitem [{\citenamefont {Wei}\ \emph {et~al.}(2022)\citenamefont {Wei}, \citenamefont {Rubio-Abadal}, \citenamefont {Ye}, \citenamefont {Machado}, \citenamefont {Kemp}, \citenamefont {Srakaew}, \citenamefont {Hollerith}, \citenamefont {Rui}, \citenamefont {Gopalakrishnan}, \citenamefont {Yao} \emph {et~al.}}]{wei2022quantum}%
  \BibitemOpen
  \bibfield  {author} {\bibinfo {author} {\bibfnamefont {D.}~\bibnamefont {Wei}}, \bibinfo {author} {\bibfnamefont {A.}~\bibnamefont {Rubio-Abadal}}, \bibinfo {author} {\bibfnamefont {B.}~\bibnamefont {Ye}}, \bibinfo {author} {\bibfnamefont {F.}~\bibnamefont {Machado}}, \bibinfo {author} {\bibfnamefont {J.}~\bibnamefont {Kemp}}, \bibinfo {author} {\bibfnamefont {K.}~\bibnamefont {Srakaew}}, \bibinfo {author} {\bibfnamefont {S.}~\bibnamefont {Hollerith}}, \bibinfo {author} {\bibfnamefont {J.}~\bibnamefont {Rui}}, \bibinfo {author} {\bibfnamefont {S.}~\bibnamefont {Gopalakrishnan}}, \bibinfo {author} {\bibfnamefont {N.~Y.}\ \bibnamefont {Yao}}, \emph {et~al.},\ }\bibfield  {title} {\bibinfo {title} {Quantum gas microscopy of kardar-parisi-zhang superdiffusion},\ }\href@noop {} {\bibfield  {journal} {\bibinfo  {journal} {Science}\ }\textbf {\bibinfo {volume} {376}},\ \bibinfo {pages} {716} (\bibinfo {year} {2022})}\BibitemShut {NoStop}%
\bibitem [{\citenamefont {Guardado-Sanchez}\ \emph {et~al.}(2020)\citenamefont {Guardado-Sanchez}, \citenamefont {Morningstar}, \citenamefont {Spar}, \citenamefont {Brown}, \citenamefont {Huse},\ and\ \citenamefont {Bakr}}]{guardado2020subdiffusion}%
  \BibitemOpen
  \bibfield  {author} {\bibinfo {author} {\bibfnamefont {E.}~\bibnamefont {Guardado-Sanchez}}, \bibinfo {author} {\bibfnamefont {A.}~\bibnamefont {Morningstar}}, \bibinfo {author} {\bibfnamefont {B.~M.}\ \bibnamefont {Spar}}, \bibinfo {author} {\bibfnamefont {P.~T.}\ \bibnamefont {Brown}}, \bibinfo {author} {\bibfnamefont {D.~A.}\ \bibnamefont {Huse}},\ and\ \bibinfo {author} {\bibfnamefont {W.~S.}\ \bibnamefont {Bakr}},\ }\bibfield  {title} {\bibinfo {title} {Subdiffusion and heat transport in a tilted two-dimensional fermi-hubbard system},\ }\href@noop {} {\bibfield  {journal} {\bibinfo  {journal} {Physical Review X}\ }\textbf {\bibinfo {volume} {10}},\ \bibinfo {pages} {011042} (\bibinfo {year} {2020})}\BibitemShut {NoStop}%
\bibitem [{\citenamefont {Scherg}\ \emph {et~al.}(2021)\citenamefont {Scherg}, \citenamefont {Kohlert}, \citenamefont {Sala}, \citenamefont {Pollmann}, \citenamefont {Hebbe~Madhusudhana}, \citenamefont {Bloch},\ and\ \citenamefont {Aidelsburger}}]{scherg2021observing}%
  \BibitemOpen
  \bibfield  {author} {\bibinfo {author} {\bibfnamefont {S.}~\bibnamefont {Scherg}}, \bibinfo {author} {\bibfnamefont {T.}~\bibnamefont {Kohlert}}, \bibinfo {author} {\bibfnamefont {P.}~\bibnamefont {Sala}}, \bibinfo {author} {\bibfnamefont {F.}~\bibnamefont {Pollmann}}, \bibinfo {author} {\bibfnamefont {B.}~\bibnamefont {Hebbe~Madhusudhana}}, \bibinfo {author} {\bibfnamefont {I.}~\bibnamefont {Bloch}},\ and\ \bibinfo {author} {\bibfnamefont {M.}~\bibnamefont {Aidelsburger}},\ }\bibfield  {title} {\bibinfo {title} {Observing non-ergodicity due to kinetic constraints in tilted fermi-hubbard chains},\ }\href@noop {} {\bibfield  {journal} {\bibinfo  {journal} {Nature Communications}\ }\textbf {\bibinfo {volume} {12}},\ \bibinfo {pages} {4490} (\bibinfo {year} {2021})}\BibitemShut {NoStop}%
\bibitem [{\citenamefont {Caha}\ and\ \citenamefont {Nagaj}(2018)}]{caha2018pair}%
  \BibitemOpen
  \bibfield  {author} {\bibinfo {author} {\bibfnamefont {L.}~\bibnamefont {Caha}}\ and\ \bibinfo {author} {\bibfnamefont {D.}~\bibnamefont {Nagaj}},\ }\bibfield  {title} {\bibinfo {title} {The pair-flip model: a very entangled translationally invariant spin chain},\ }\href@noop {} {\bibfield  {journal} {\bibinfo  {journal} {arXiv preprint arXiv:1805.07168}\ } (\bibinfo {year} {2018})}\BibitemShut {NoStop}%
\bibitem [{\citenamefont {Moudgalya}\ \emph {et~al.}(2018)\citenamefont {Moudgalya}, \citenamefont {Rachel}, \citenamefont {Bernevig},\ and\ \citenamefont {Regnault}}]{moudgalya2018exact}%
  \BibitemOpen
  \bibfield  {author} {\bibinfo {author} {\bibfnamefont {S.}~\bibnamefont {Moudgalya}}, \bibinfo {author} {\bibfnamefont {S.}~\bibnamefont {Rachel}}, \bibinfo {author} {\bibfnamefont {B.~A.}\ \bibnamefont {Bernevig}},\ and\ \bibinfo {author} {\bibfnamefont {N.}~\bibnamefont {Regnault}},\ }\bibfield  {title} {\bibinfo {title} {Exact excited states of nonintegrable models},\ }\href {https://doi.org/10.1103/PhysRevB.98.235155} {\bibfield  {journal} {\bibinfo  {journal} {Phys. Rev. B}\ }\textbf {\bibinfo {volume} {98}},\ \bibinfo {pages} {235155} (\bibinfo {year} {2018})}\BibitemShut {NoStop}%
\bibitem [{\citenamefont {Batchelor}\ and\ \citenamefont {Kuniba}(1991)}]{batchelor1991temperley}%
  \BibitemOpen
  \bibfield  {author} {\bibinfo {author} {\bibfnamefont {M.~T.}\ \bibnamefont {Batchelor}}\ and\ \bibinfo {author} {\bibfnamefont {A.}~\bibnamefont {Kuniba}},\ }\bibfield  {title} {\bibinfo {title} {Temperley-lieb lattice models arising from quantum groups},\ }\href@noop {} {\bibfield  {journal} {\bibinfo  {journal} {Journal of Physics A: Mathematical and General}\ }\textbf {\bibinfo {volume} {24}},\ \bibinfo {pages} {2599} (\bibinfo {year} {1991})}\BibitemShut {NoStop}%
\bibitem [{\citenamefont {Aufgebauer}\ and\ \citenamefont {Kl{\"u}mper}(2010)}]{aufgebauer2010quantum}%
  \BibitemOpen
  \bibfield  {author} {\bibinfo {author} {\bibfnamefont {B.}~\bibnamefont {Aufgebauer}}\ and\ \bibinfo {author} {\bibfnamefont {A.}~\bibnamefont {Kl{\"u}mper}},\ }\bibfield  {title} {\bibinfo {title} {Quantum spin chains of temperley--lieb type: periodic boundary conditions, spectral multiplicities and finite temperature},\ }\href@noop {} {\bibfield  {journal} {\bibinfo  {journal} {Journal of Statistical Mechanics: Theory and Experiment}\ }\textbf {\bibinfo {volume} {2010}},\ \bibinfo {pages} {P05018} (\bibinfo {year} {2010})}\BibitemShut {NoStop}%
\bibitem [{\citenamefont {Moudgalya}\ and\ \citenamefont {Motrunich}(2022{\natexlab{b}})}]{moudgalya2022hilbert}%
  \BibitemOpen
  \bibfield  {author} {\bibinfo {author} {\bibfnamefont {S.}~\bibnamefont {Moudgalya}}\ and\ \bibinfo {author} {\bibfnamefont {O.~I.}\ \bibnamefont {Motrunich}},\ }\bibfield  {title} {\bibinfo {title} {Hilbert space fragmentation and commutant algebras},\ }\href@noop {} {\bibfield  {journal} {\bibinfo  {journal} {Physical Review X}\ }\textbf {\bibinfo {volume} {12}},\ \bibinfo {pages} {011050} (\bibinfo {year} {2022}{\natexlab{b}})}\BibitemShut {NoStop}%
\bibitem [{si()}]{si}%
  \BibitemOpen
  \href@noop {} {}\bibinfo {note} {See the supplemental materials for details.}\BibitemShut {Stop}%
\bibitem [{\citenamefont {Wang}\ \emph {et~al.}(2025)\citenamefont {Wang}, \citenamefont {Balasubramanian}, \citenamefont {Han}, \citenamefont {Lake}, \citenamefont {Chen},\ and\ \citenamefont {Yang}}]{wang2025exponentially}%
  \BibitemOpen
  \bibfield  {author} {\bibinfo {author} {\bibfnamefont {C.}~\bibnamefont {Wang}}, \bibinfo {author} {\bibfnamefont {S.}~\bibnamefont {Balasubramanian}}, \bibinfo {author} {\bibfnamefont {Y.}~\bibnamefont {Han}}, \bibinfo {author} {\bibfnamefont {E.}~\bibnamefont {Lake}}, \bibinfo {author} {\bibfnamefont {X.}~\bibnamefont {Chen}},\ and\ \bibinfo {author} {\bibfnamefont {Z.-C.}\ \bibnamefont {Yang}},\ }\bibfield  {title} {\bibinfo {title} {Exponentially slow thermalization in 1d fragmented dynamics},\ }\href@noop {} {\bibfield  {journal} {\bibinfo  {journal} {arXiv preprint arXiv:2501.13930}\ } (\bibinfo {year} {2025})}\BibitemShut {NoStop}%
\bibitem [{\citenamefont {Singh}\ \emph {et~al.}(2021)\citenamefont {Singh}, \citenamefont {Ware}, \citenamefont {Vasseur},\ and\ \citenamefont {Friedman}}]{Singh}%
  \BibitemOpen
  \bibfield  {author} {\bibinfo {author} {\bibfnamefont {H.}~\bibnamefont {Singh}}, \bibinfo {author} {\bibfnamefont {B.~A.}\ \bibnamefont {Ware}}, \bibinfo {author} {\bibfnamefont {R.}~\bibnamefont {Vasseur}},\ and\ \bibinfo {author} {\bibfnamefont {A.~J.}\ \bibnamefont {Friedman}},\ }\bibfield  {title} {\bibinfo {title} {Subdiffusion and many-body quantum chaos with kinetic constraints},\ }\href {https://doi.org/10.1103/PhysRevLett.127.230602} {\bibfield  {journal} {\bibinfo  {journal} {Phys. Rev. Lett.}\ }\textbf {\bibinfo {volume} {127}},\ \bibinfo {pages} {230602} (\bibinfo {year} {2021})}\BibitemShut {NoStop}%
\bibitem [{\citenamefont {Balasubramanian}\ \emph {et~al.}(2023)\citenamefont {Balasubramanian}, \citenamefont {Gopalakrishnan}, \citenamefont {Khudorozhkov},\ and\ \citenamefont {Lake}}]{group_dynamics}%
  \BibitemOpen
  \bibfield  {author} {\bibinfo {author} {\bibfnamefont {S.}~\bibnamefont {Balasubramanian}}, \bibinfo {author} {\bibfnamefont {S.}~\bibnamefont {Gopalakrishnan}}, \bibinfo {author} {\bibfnamefont {A.}~\bibnamefont {Khudorozhkov}},\ and\ \bibinfo {author} {\bibfnamefont {E.}~\bibnamefont {Lake}},\ }\bibfield  {title} {\bibinfo {title} {Glassy word problems: ultraslow relaxation, hilbert space jamming, and computational complexity},\ }\href@noop {} {\bibfield  {journal} {\bibinfo  {journal} {arXiv preprint arXiv:2312.04562}\ } (\bibinfo {year} {2023})}\BibitemShut {NoStop}%
\bibitem [{\citenamefont {Lyons}\ and\ \citenamefont {Peres}(2017)}]{lyons2017probability}%
  \BibitemOpen
  \bibfield  {author} {\bibinfo {author} {\bibfnamefont {R.}~\bibnamefont {Lyons}}\ and\ \bibinfo {author} {\bibfnamefont {Y.}~\bibnamefont {Peres}},\ }\href@noop {} {\emph {\bibinfo {title} {Probability on trees and networks}}},\ Vol.~\bibinfo {volume} {42}\ (\bibinfo  {publisher} {Cambridge University Press},\ \bibinfo {year} {2017})\BibitemShut {NoStop}%
\bibitem [{\citenamefont {Fishman}\ \emph {et~al.}(2022)\citenamefont {Fishman}, \citenamefont {White},\ and\ \citenamefont {Stoudenmire}}]{fishman2022itensor}%
  \BibitemOpen
  \bibfield  {author} {\bibinfo {author} {\bibfnamefont {M.}~\bibnamefont {Fishman}}, \bibinfo {author} {\bibfnamefont {S.}~\bibnamefont {White}},\ and\ \bibinfo {author} {\bibfnamefont {E.}~\bibnamefont {Stoudenmire}},\ }\bibfield  {title} {\bibinfo {title} {The itensor software library for tensor network calculations},\ }\href@noop {} {\bibfield  {journal} {\bibinfo  {journal} {SciPost Physics Codebases}\ ,\ \bibinfo {pages} {004}} (\bibinfo {year} {2022})}\BibitemShut {NoStop}%
\bibitem [{\citenamefont {Oganesyan}\ and\ \citenamefont {Huse}(2007)}]{Huse2007rstatistics}%
  \BibitemOpen
  \bibfield  {author} {\bibinfo {author} {\bibfnamefont {V.}~\bibnamefont {Oganesyan}}\ and\ \bibinfo {author} {\bibfnamefont {D.~A.}\ \bibnamefont {Huse}},\ }\bibfield  {title} {\bibinfo {title} {Localization of interacting fermions at high temperature},\ }\href {https://doi.org/10.1103/PhysRevB.75.155111} {\bibfield  {journal} {\bibinfo  {journal} {Phys. Rev. B}\ }\textbf {\bibinfo {volume} {75}},\ \bibinfo {pages} {155111} (\bibinfo {year} {2007})}\BibitemShut {NoStop}%
\bibitem [{\citenamefont {Santos}(2004)}]{santos2004integrability}%
  \BibitemOpen
  \bibfield  {author} {\bibinfo {author} {\bibfnamefont {L.}~\bibnamefont {Santos}},\ }\bibfield  {title} {\bibinfo {title} {Integrability of a disordered heisenberg spin-1/2 chain},\ }\href@noop {} {\bibfield  {journal} {\bibinfo  {journal} {Journal of Physics A: Mathematical and General}\ }\textbf {\bibinfo {volume} {37}},\ \bibinfo {pages} {4723} (\bibinfo {year} {2004})}\BibitemShut {NoStop}%
\bibitem [{\citenamefont {Bulchandani}\ \emph {et~al.}(2022)\citenamefont {Bulchandani}, \citenamefont {Huse},\ and\ \citenamefont {Gopalakrishnan}}]{bulchandani2022onset}%
  \BibitemOpen
  \bibfield  {author} {\bibinfo {author} {\bibfnamefont {V.~B.}\ \bibnamefont {Bulchandani}}, \bibinfo {author} {\bibfnamefont {D.~A.}\ \bibnamefont {Huse}},\ and\ \bibinfo {author} {\bibfnamefont {S.}~\bibnamefont {Gopalakrishnan}},\ }\bibfield  {title} {\bibinfo {title} {Onset of many-body quantum chaos due to breaking integrability},\ }\href@noop {} {\bibfield  {journal} {\bibinfo  {journal} {Physical Review B}\ }\textbf {\bibinfo {volume} {105}},\ \bibinfo {pages} {214308} (\bibinfo {year} {2022})}\BibitemShut {NoStop}%
\bibitem [{\citenamefont {Hart}(2023)}]{hart2023exact}%
  \BibitemOpen
  \bibfield  {author} {\bibinfo {author} {\bibfnamefont {O.}~\bibnamefont {Hart}},\ }\bibfield  {title} {\bibinfo {title} {Exact mazur bounds in the pair-flip model and beyond},\ }\href@noop {} {\bibfield  {journal} {\bibinfo  {journal} {arXiv preprint arXiv:2308.00738}\ } (\bibinfo {year} {2023})}\BibitemShut {NoStop}%
\bibitem [{\citenamefont {Woess}(2000)}]{woess2000random}%
  \BibitemOpen
  \bibfield  {author} {\bibinfo {author} {\bibfnamefont {W.}~\bibnamefont {Woess}},\ }\href@noop {} {\emph {\bibinfo {title} {Random walks on infinite graphs and groups}}},\ \bibinfo {number} {138}\ (\bibinfo  {publisher} {Cambridge university press},\ \bibinfo {year} {2000})\BibitemShut {NoStop}%
\bibitem [{\citenamefont {Levin}\ and\ \citenamefont {Peres}(2017)}]{levin2017markov}%
  \BibitemOpen
  \bibfield  {author} {\bibinfo {author} {\bibfnamefont {D.~A.}\ \bibnamefont {Levin}}\ and\ \bibinfo {author} {\bibfnamefont {Y.}~\bibnamefont {Peres}},\ }\href@noop {} {\emph {\bibinfo {title} {Markov chains and mixing times}}},\ Vol.\ \bibinfo {volume} {107}\ (\bibinfo  {publisher} {American Mathematical Soc.},\ \bibinfo {year} {2017})\BibitemShut {NoStop}%
\bibitem [{\citenamefont {Gao}\ \emph {et~al.}(2023)\citenamefont {Gao}, \citenamefont {Zhang},\ and\ \citenamefont {Chen}}]{gao2023information}%
  \BibitemOpen
  \bibfield  {author} {\bibinfo {author} {\bibfnamefont {Q.}~\bibnamefont {Gao}}, \bibinfo {author} {\bibfnamefont {P.}~\bibnamefont {Zhang}},\ and\ \bibinfo {author} {\bibfnamefont {X.}~\bibnamefont {Chen}},\ }\href@noop {} {\bibinfo {title} {Information scrambling in free fermion systems with a sole interaction}} (\bibinfo {year} {2023}),\ \Eprint {https://arxiv.org/abs/2310.07043} {arXiv:2310.07043 [quant-ph]} \BibitemShut {NoStop}%
\end{thebibliography}%

\pagebreak
\newpage 
\bigskip 
\onecolumngrid
\begin{center}
	
	\bigskip 
	\bigskip 
	\textbf{\large Exponentially slow thermalization and the robustness of Hilbert space fragmentation: \\[.25cm] Supplementary information}\\[.3cm]

	Yiqiu Han$^1$, Xiao Chen$^1$, and Ethan Lake$^2$ \\[.1cm]
 {\itshape $^1$Department of Physics, Boston College, Chestnut Hill, MA 02467, USA} \\
  {\itshape $^2$Department of Physics, University of California Berkeley, Berkeley, CA 94720, USA} \\
\end{center}

\setcounter{equation}{0}
\setcounter{figure}{0}
\setcounter{table}{0}
\setcounter{page}{1}
\renewcommand{\theequation}{S\arabic{equation}}
\renewcommand{\thefigure}{S\arabic{figure}}

\ss*{Contents} 
		\begin{itemize}
			\item Section~\ref{app:notation}: preliminary remarks on notation 
			\item Section~\ref{app:numerics}: extended numerical results: entanglement dynamics, localization of eigenstates, and level statistics
			\item Section~\ref{app:sector_dims}: calculation of Krylov sector dimensions
			\item Section~\ref{app:stochastics}: details of the mapping between random unitary circuits and stochastic Markov chains
			\item Section~\ref{app:bounds}: proofs of the theorems pertaining to slow thermalization quoted in the main text
			\item Section~\ref{App:TL}: results on the (non)thermalization of Temperley-Lieb chains 
			
		\end{itemize}
		\section{Notational preface} \label{app:notation}
		
		As discussed in the main text, each computational product state $\k{s}$, $s_i \in \{1,\dots,N\}$ can be associated to a random walk on the balanced $N$-tree $T_N$. Proceeding from left to right, each $s_i$ determines where the walk proceeds, with the walk backtracking when $s_i = s_{i-1}$. Each walk which ends at a point of depth $d$ on the tree can be associated with a length-$d$ string which corresponds to the shortest path on the tree from the origin to that point. We will refer to the shortest path associated a string $s$ as $s$'s {\it irreducible string}, which we write as $\irr(s)$. We will let $v_s$ denote the vertex of $T_N$ at which $s$ ends. Since the walks on $T_N$ are non-lazy, a string of length $|s|=L$ can only reach those nodes at a depth whose parity matches that of $L$. Writing $|v_s|$ for the depth of the node $v_s$ (i.e. $|v_s| = |\irr(s)|$), this means that $[v_s]_2 = [L]_2$, where $[\cdot ]_2$ denotes reduction mod 2.
		
		We will let the $\mck_{v_s}^{(L)}$ denote the set of all length-$L$ strings with the same $\irr(s)$. The states in $\mck^{(L)}_{v_s}$ form a basis of the Krylov sector associated with the node $v_s$. We will write $\mck^{(L)}_d$ when we wish to refer to an arbitrary Krylov sector whose associated vertex is at a depth $d$ on $T_N$. In addition, we will write $N_\mck(L)$ for the total number of Krylov sectors on a system of size $L$. Finally, we will represent the projection operator onto the Krylov sector $\mck_{v_s}^{(L)}$ as
		\be \Pi_{\mck_{v_s}^{(L)}}\equiv \sum_{|s\ran\in \mck_{v_s}^{(L)}} |s\ran\lan s|. \ee

		\section{More numerical results} \label{app:numerics}
		In this section we collect additional numerical results regarding the thermalization of Hamiltonian dynamics, with $N=3$ unless explicitly stated otherwise. All of the Hamiltonians we will consider will act on open 1d chains, and will be of the form 
		\be H = H_0 + \l H_{\rm imp},\ee 
		where $H_0$ acts on the entire system (of length $L$), and $H_{\rm imp}$ acts only on sites $\lc < i \leq \ltot$. $H_{\rm tot}$ will always be normalized such that $||H_{\rm imp}||_\infty = \ltot-\lc$. 

		\subsection{Entanglement and Krylov entropy}
		
		\begin{figure*}
			\centering
			\includegraphics[width=0.32\textwidth]{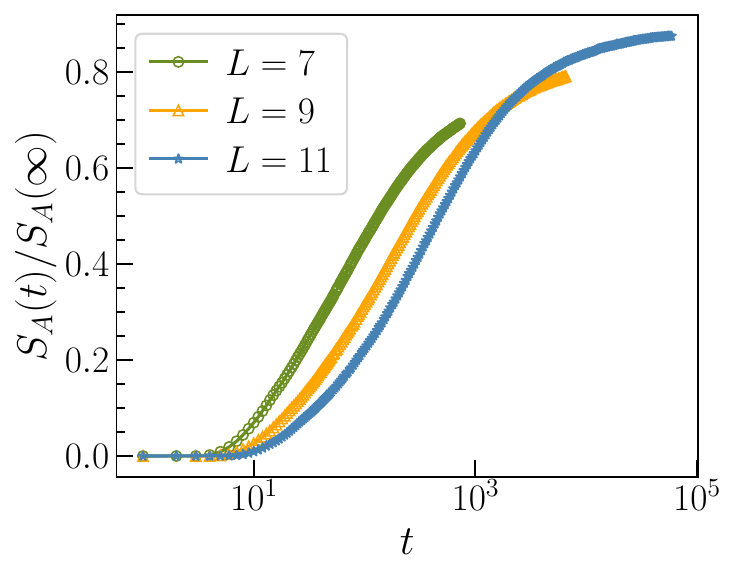} 
			\includegraphics[width=0.32\textwidth]{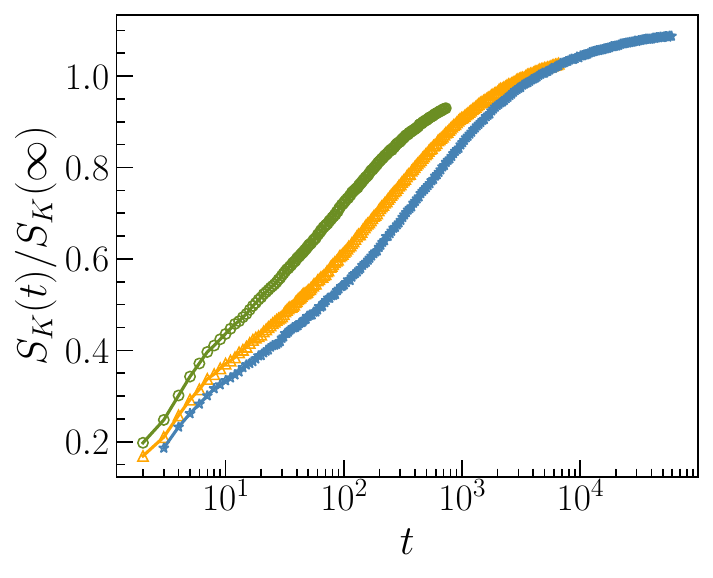}
			\includegraphics[width=0.32\textwidth]{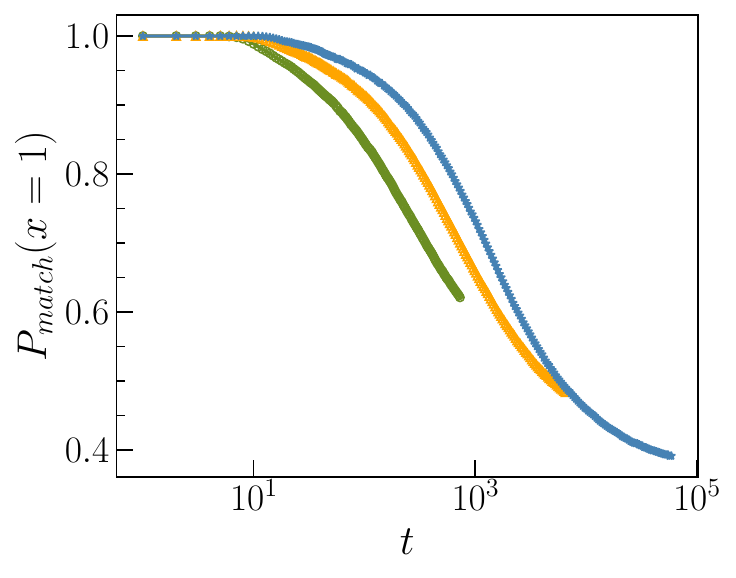}
			\caption{\label{fig:PF_rand} Numerical results of the $N=3$ translation-invariant PF model with random $g^{ab} = (g^{ab})^*$ under a single-site random impurity of strength $\lambda=1$ at $\ltot$, starting from an initial state in $\mch_{\rm froz}$. {\it Left}: The normalized von-Neumann entropy $S_A(t)/S_A(\infty)$, where $A$ is half of the constrained region and $S_A(\infty) = |A|\ln(N)/2-1/2$. {\it Center}: The Krylov entropy $S_K(t)$ normalized by $S_K(\infty)$, where $S_A(\infty)$ is calculated using $p_{v_s} = |\mck_{v_s}| / |\mch|$. {\it Right}: The probability of the spin on the first site to match with its original value.}
		\end{figure*}
		
		In this subsection we show the entanglement dynamics of a more general locally-perturbed PF model, including the half-chain entanglement entropy of the constrained system, and Krylov entropy, a quantity that measures the spreading of the wave function over different Krylov subspaces, i.e.
		\be S_K(t)\equiv -\sum_{v_s} p_{v_s}\ln p_{v_s}, \ee
		where 
		\be p_{v_s}\equiv \lan \psi(t)|\Pi_{\mck_{v_s}}^{(\lc)}\otimes \unit_{\ltot-\lc}|\psi(t)\ran\ee 
		is the probability that the first $\lc$ sites of the state are in sector $\mck_{v_s}$.
		
		We consider a {\it translation-invariant} PF model under a single-site perturbation, described by the Hamiltonian 
		\begin{equation} \label{eq:generalPF}
			H_0=\sum_{a,b=1}^{3}g^{ab}\sum_{i=1}^{L-1} \kb{aa}{bb}_{i,i+1},\qq  H_{\rm imp} = \sum_{a,b=1}^3|a\ran\lan b|_{L},
		\end{equation}
		where $g$ is an arbitrary $N\times N$ Hermitian matrix that we take to be site-independent to avoid possible many-body localization caused by disorder. In Fig.\ref{fig:PF_rand}, we take $\lambda=1$ and average over different realizations of $g^{ab}$ and different initial frozen states. We first consider the half-chain von-Neumann entropy $S_A$ where $A=[1,\lc/2]$ is half of the constrained region. As shown in the left panel, $S_A(t)\sim \ln(t)$ as it approaches its thermal value $S_A(\infty)=\lc\ln(3)/2$. Similarly, in the middle panel, the Krylov entropy increases logarithmically at early times. Here we plot $S_K(t)$ normalized by the value it takes when $p_{v_s} = |\mck_{v_s}|/|\mch|$, i.e. the value it takes when the wavefunction is spread out uniformly across Hilbert space. Since this is not the value of the $p_{v_s}$ which maximizes the entropy, {values greater than 1 are reached at late times}. 
		
		We additionally calculate $P_{\rm match}(x)$, the probability of the spin on site $x$ to be the same as its original value. As shown in the right panel, $P_{\rm match}(x=1)$ starts to deviate from 1 after the influence of the bath has propagated through the whole system, decreasing logarithmically in time to a value that approaches $1/3$ as $L\to\infty$. This further verifies that the time scale of thermalization of the general PF model under a constraint-breaking perturbation is exponentially long in system sizes.

		\ss{Localization of eigenstates}
		
		To determine the extent that eigenstates of $H$ are localized on $T_N$, we can measure the ``expected depth'' $d_\mu$ of each eigenstate $\k\mu$, defined as 
		\be \label{ddef} d_\mu  \equiv \sum_{d=0}^{L/2} \sum_{v_s \, : \, \irr(s) = d} d \lan \mu | \Pi_{\mck_{v_s}} | \mu \ran. \ee 
		
		\begin{figure*}
			\centering \includegraphics[width=.32\tw]{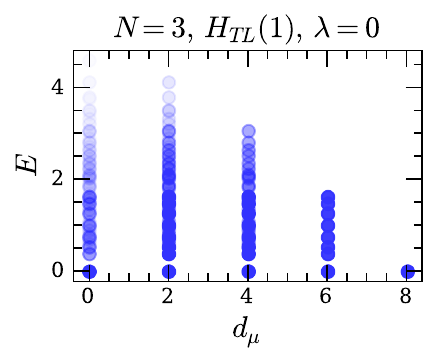} \hfill 
			\includegraphics[width=.32\tw]{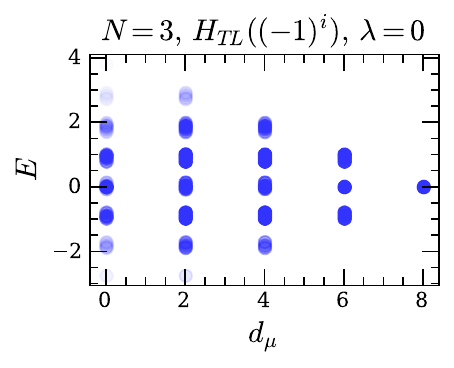} \hfill 
			\includegraphics[width=.32\tw]{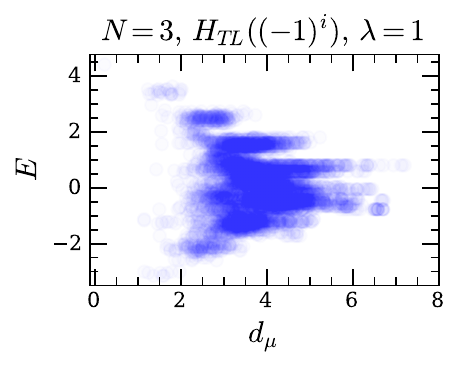} 
			\caption{\label{fig:tl_sorted_spec} Eigenenergies in different variants of $SU(3)$-symmetric models for $L=8$, arranged according to the Krylov distance $d_\mu$ \eqref{ddef} of each eigenstate. {\it Left:} Unperturbed TL model without sublattice staggering ($g_i = 1$). The frozen states at $d=8$ all lie at the bottom of the spectrum. {\it Center:} The same model with with $g_i=(-1)^i$: the frozen states now lie in the middle of the spectrum. {\it Right:} The same model but now with a two-site impurity of strength $\l=1$ at the end of the chain. States with large $d_\mu$ continue to be located roughly in the middle of the spectrum. 
			}  
		\end{figure*}
		\begin{figure*}
			\centering  
			\includegraphics[width=.32\tw]{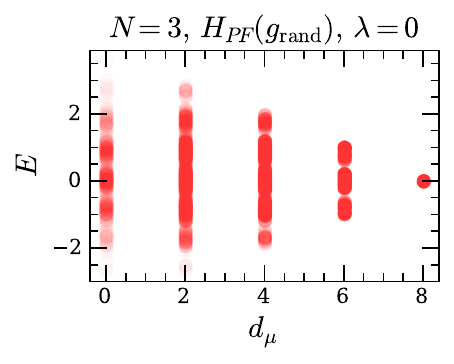} \hfill 
			\includegraphics[width=.32\tw]{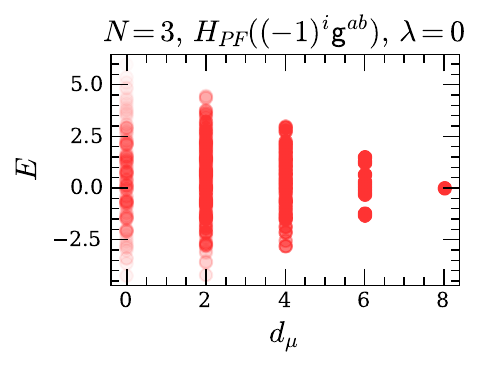} \hfill 
			\includegraphics[width=.32\tw]{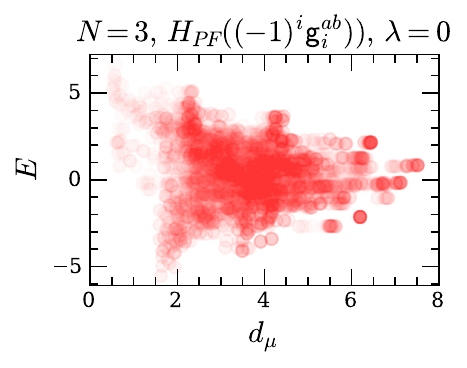} 
			\caption{\label{fig:pf_sorted_spec} Eigenenergies in different $SU(3)$-breaking models for $L=8$, arranged according to the Krylov distance $d_\mu$ \eqref{ddef} of each eigenstate. {\it Left:} A random translation-invariant choice of the pair-flip matrix $g^{ab}$. {\it Center:} The choice $g_i^{ab} = (-1)^i \sfg^{ab}$ adopted in the main text with $\l=0$. The frozen states at $d_\mu=0$ continue to lie roughly in the middle of the spectrum. {\it Right:} The same Hamiltonian but with $\l = 1$, showing a distribution of $d_\mu$ which continues to remain very broad.
			}  
		\end{figure*}
		
		\begin{figure*}
			\centering  
			\includegraphics[width=.32\tw]{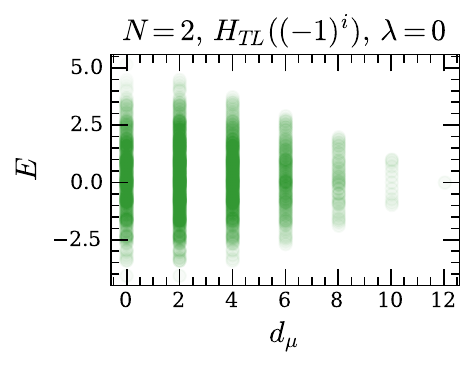} \hfill 
			\includegraphics[width=.32\tw]{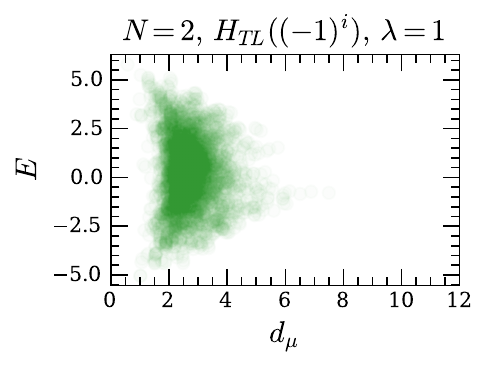} \hfill 	\includegraphics[width=.32\tw]{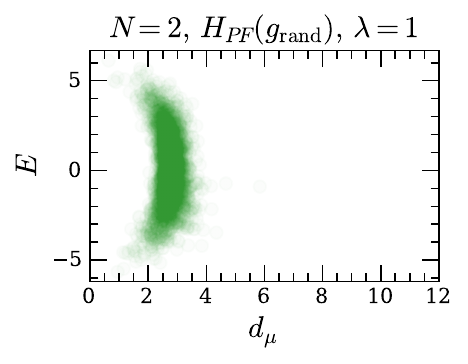} 
			\caption{\label{fig:n2_sorted_spec} Eigenenergies for $N=2$ models. {\it Left:} The $SU(2)$-symmetric model with $g_i = (-1)^i$ and $\l=0$.  {\it Center:} The same model but with $\l=1$, with most eigenstates clustering around a small value of $d_\mu$, as expected from a thermalizing Hamiltonian (on account of the absence of strong fragmentation). {\it Right:} An $SU(2)$-breaking model with random pair-flip matrix $g^{ab}$. The distribution of $d_\mu$s is even more tightly concentrated about the value one would obtain for a thermalizing Hamiltonian.}  
		\end{figure*}
		
		Histograms of $d_\mu$ and eigenstate energy $E$ are shown in Figs.~\ref{fig:tl_sorted_spec},~\ref{fig:pf_sorted_spec}, and~\ref{fig:n2_sorted_spec} for $SU(3)$-symmetric pair-flip models, $SU(3)$-breaking pair-flip models, and the $N=2$ variant of the symmetric model (unitarily equivalent to a perturbed $XXX$ chain), respectively. For the $N=3$ models we use a total system size of $L=8$, while for $N=2$ we set $L=12$. 
		
		In Fig.~\ref{fig:tl_sorted_spec} we study $SU(3)$-symmetric Temperley-Lieb chains of the form 
		\be \label{htl} H_{TL}(g_i) = \sum_i g_i P_{i,i+1} + \l H_{\rm imp}, \qq P_{i,i+1}  \equiv \frac1N \sum_{a,b=1,\dots,N} \kb{a,a}{b,b}_{i,i+1}, \ee 	 	
		where $H_{\rm imp}$ is a random matrix supported only on the last two sites of the chain ($i=7,8$) and normalized so that $||H_{\rm imp}||_\infty = 1$. In the left panel we set $g_i=1, \l=0$ and observe a large dengeneracy of $E=0$ states at the bottom of the spectrum. The model is strongly fragmented since $\l=0$, and so $d_\mu$ is an integer for each $\k\mu$. In the middle panel we take $g_i = (-1)^i$, which puts the frozen states at $E=0$ into the middle of the spectrum. In the right panel we break the fragmentation by taking $\l=1$, which is the value of $\l$ for which $H_{\rm imp}$ has the strongest effect on the spectrum. Now the $d_\mu$ are no longer integers, but nevertheless a fairly broad distribution of $d_\mu$s is observed. A generic model with no localization on the Krylov space tree would have $d_\mu \approx v_NL = L/3$ for all eigenstates. The broad distribution of $d_\mu$ observed here suggests that some degree of localization persists, although future work will be needed to understand to what degree this is due to finite size effects. 
		
		We now consider pair-flip models which lack $SU(3)$ symmetry at $\l=0$. The most general Hamiltonian we will consider is of the form 
		\be \label{hpf} H_{PF}(g_i^{ab}) = \sum_i \sum_{a,b=1}^N g^{ab}_i \kb{aa}{bb}_{i,i+1} + \l H_{\rm imp}.\ee 
		In the left panel of Fig.~\ref{fig:pf_sorted_spec} we show the spectrum at $\l=0$ for a random translation-invariant choice of $g^{ab}_i$. The frozen states at $d_\mu=8$ are observed to lie roughly in the middle of the spectrum. In the center panel we show the choice of $g^{ab}_i$ adopted in the main text, viz. 
		\be g^{ab}_i = (-1)^i\sfg^{ab} \equiv (-1)^i \(\frac13 + \kappa\d^{a,b}\)\ee 
		with $\kappa$ fixed at $2/3$, as in the main text. 
		
		Fig.~\ref{fig:n2_sorted_spec} shows the spectrum of analogous models with $N=2$, which are not strongly fragmented. In the left panel we show \eqref{htl} with $g_i = (-1)^i$ and $\l=0$, which is unitarily equivalent to a staggered XXX chain. In the center panel we set $\l=1$, which brings nearly all of the $\k\mu$ down to $d_\mu/L \approx 1/4$, which is close to what we would expect for a generic Hamiltonian. Finally, in the right panel we let $H$ be a random pair-flip Hamiltonian, observing an even more tightly clustered distribution of $d_\mu$.

		\ss{$r$ statistics}
		
		\begin{figure*}
			\centering
			\includegraphics[width=0.35\textwidth]{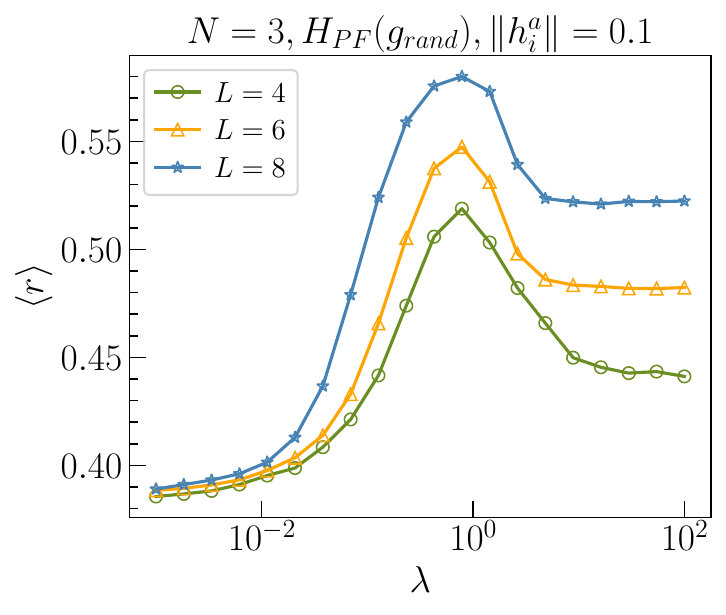} 
			\includegraphics[width=0.38\textwidth]{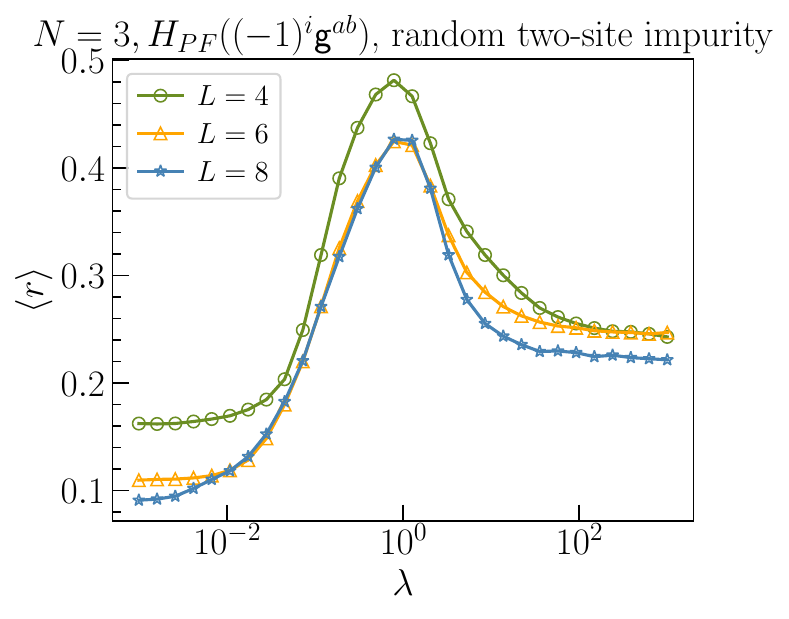}
			\caption{\label{fig:r_stat} The $\lan r\ran$ statistics of $N=3$ models vs. perturbation strength $\lambda$ for $\ltot=4, 6, 8$. {\it Left:} The translation-invariant PF model with random pair-flip matrix $g^{ab}$, a longitudinal field with strength $\|h^a_i\|=0.1$, and a single-site perturbation. When $\lambda=\mco(1)$, $\lan r\ran\to 0.6$ as $L\to\infty$. {\it Right:} The model with $g_i^{ab}=(-1)^i\mathsf{g}^{ab}$ discussed in the main text, under a random two-site impurity. }
		\end{figure*}
		
		The models with $N>2$ that we study all exhibit strong HSF, with $|\mck_{\rm max}|$ being exponentially smaller than $|\mch|$. This implies that if one examines the spectrum of $H$, consecutive eigenstates will almost certainly belong to distinct Krylov sectors, and hence the spectrum of $H$ will exhibit no nearest-neighbor level repulsion in the absence of constraint-breaking terms. How strongly the eigenstates in different sectors hybridize as a constraint-breaking term is applied provides a characterization of the severity by which thermalization is impeded, since such hybridization is a prerequisite for getting initial product states in $\hfroz$ to thermalize. In this subsection we will take some first steps towards studying this question numerically by computing the {\it r-statistic} \cite{Huse2007rstatistics}
		\be \lan r\ran\equiv \lan r_n\ran, \qq r_n\equiv \frac{\min(\delta_n, \delta_{n+1})}{\max(\delta_n,\delta_{n+1})}, \ee
		where $\delta_n\equiv E_{n+1}-E_n$ is the gap between adjacent non-degenerate energy levels $E_{n+1}>E_n$. Computing $\lan r\ran$ also allows us to make contact with research examining how the addition of local and/or weak generic perturbations to integrable Hamiltonians leads to the onset of chaos (see  \cite{santos2004integrability,bulchandani2022onset} for two almost-randomly chosen references on this broad topic). 
		
		As mentioned above, when $\l=0$ --- viz. when the constraint-breaking term is turned off --- we expect Poisson statistics, with $\lan r\ran\approx 0.38$. On the other hand, if the constraint-breaking term strongly hybridizes the states in different sectors, we expect $\lan r\ran$ to be given by the Gaussian unitary ensemble (GUE) value of $\lan r\ran\approx 0.6$. 
		In Fig.\ref{fig:r_stat}, we study the $\lan r\ran$ statistics of the $N=3$ locally-perturbed PF models as a function of the perturbation strength $\lambda$. In the left panel, we consider the translation-invariant random PF model as in Eq.\ref{eq:generalPF}, with an additional longitudinal field $H_f=\sum_i h^a_i|a\ran\lan a|$. As $\lambda\to 0$, $\lan r\ran\approx 0.38$ indicates that the unperturbed PF model possesses a spectrum with Poisson distribution, as expected. In the finite-size numerics, $\lan r\ran$ continues to grow as $\lambda$ increases and peaks at $\lambda\sim1$, approaching 0.6 as $L$ increases, followed by a slight decrease to a plateau as $\lambda\to\infty$. In the right panel, we consider the staggered PF model adopted in the main text, with the perturbation being replaced by a two-site random impurity. This model has similar but smaller finite-size $\lan r \ran$ statistics which does not increase as $L$ increases. This is consistent with the fact that the perturbed staggered PF model thermalizes slower than the perturbed random PF model.
		We conjecture that in the thermodynamic limit, $\lan r \ran$ approaches the GUE value for all $\lambda = \O(1)$. 
		If true, this indicates that the different Krylov sectors become well-hybridized in the thermodynamic limit. This of course does not preclude slow thermalization arising due to the Hilbert space bottleneck mechanism discussed in the main text, although it may mean that the thermalization times of typical Hamiltonian dynamics and the constrained RU dynamics studied in the main text are not parametrically different. 
		
		The plateau for large $\lambda$ can be understood as follows. When $\lambda\to\infty$, $\lambda H_{\rm imp}$ decouples the perturbed site from the system, splitting the spectrum of $H$ into sectors labeled by the eigenstate $|\phi_m\rangle$ of $H_{\rm imp}$, and the total eigenstate of $H$ becomes $|\psi^m_{\ltot}\ran\approx |\psi_\lc\ran \otimes |\phi_m\ran$. Taking the impurity to act on only a single site for simplicity, so that $\ltot=\lc+1$, the effective Hamiltonian in the sector labeled by $m$ is
		\begin{equation}
			\begin{aligned}
				H_{\rm eff}^m &= (\unit_{\lc}\otimes|\phi_m\ran\lan\phi_m|) H (\unit_{\lc}\otimes|\phi_m\ran\lan\phi_m|) \\
				& = \sum_{a,b=1}^N\left(\sum_{i=1}^{\lc-1} g_i^{ab}|aa\ran\lan bb|_{i,i+1}+g_{\lc}^{ab}\lan\phi_m|a\ran\lan b|\phi_m\ran |a\ran\lan b|_{\lc}\right)\otimes |\phi_m\ran\lan\phi_m|\\
				& \equiv \Tilde{H}_{\rm eff}^m \otimes |\phi_m\ran\lan\phi_m|,
			\end{aligned}
		\end{equation}
		with some constant terms ignored. $\Tilde{H}_{\rm eff}^m$ is essentially the same PF model as before, but now defined on a length $\ltot-1$ chain with an impurity Hamiltonian acting on the boundary whose matrix elements are 
		\be [H^m_{\rm imp, eff}]_{ab}=g_{\lc}^{ab}\lan\phi_m|a\ran\lan b|\phi_m\ran\sim O(1).\ee  
		Therefore, the $\lan r \ran$ statistics on a system of size $L$ at $\lambda\to\infty$ should be the same as that of  a system of size $L-1$ at $\lambda=O(1)$. 
		
		From the numerics, there does not naively seem to be  a well-defined transition in the $\lan r \ran$ statistics from Poisson to GUE, as one finds in certain types of perturbed integrable models \cite{bulchandani2022onset}, although a more detailed numerical study will need to be carried out to properly address this question.

		\section{Krylov sector dimensions} \label{app:sector_dims}

		\ss{$N=2$}
		
		We first dispatch with the easy case of $N=2$, for which the dynamics is not fragmented. The tree $T_2$ is simply a line, and the different Krylov sectors can be fully distinguished by the charge $Q_1$ defined in \eqref{charges}, the value of which gives the distance of the Krylov sector along the line. The number of Kyrlov sectors is simply 
		\be N_\mck(L) = L+1.\ee 
		The dimension of a sector whose irreducible string has length $d$ is determined by counting the number of length-$L$ non-lazy random walks on the line which end at a distance of $d>0$ from the origin. This number is simply 
		\bea \dim[\mck_{d}(L)] &= \d_{[d]_2,[L]_2}{L \choose \frac{L+d}2} \\
		& \approx \sqrt{\frac{2L}{\pi(L^2-d^2)}} \exp(L H(p_d)), \eea 
		where $H(x) = -x\ln x - (1-x)\ln(1-x)$ is the binary Shannon entropy and $p_d \equiv (1+d/L)/2$. 
		
		\ss{$N>2$} 
		
		When $N>2$ the dynamics is strongly fragmented, and determining the sizes of the different Krylov sectors is less trivial. 
		We start with the total number of Kyrlov sectors $N_\mck(L)$. 
		From thinking about the tree structure of $T_N$, it is clear that this number is 
		\bea N_\mck(L) = \begin{dcases} 1+ N \sum_{l=1}^{L/2} (N-1)^{2l-1}  & \text{($L$ even)} \\ 
			N \sum_{l=1}^{(L+1)/2} (N-1)^{2l-2} & \text{($L$ odd)} \end{dcases} \quad = \frac{(N-1)^{L+1} - 1}{N-2} = \ct((N-1)^L) .\eea 
		Note that while this number is exponentially large in $L$, it is still exponentially smaller than the Hilbert space dimension $\dim \mch = N^L$. 
		
		We now begin in our determination of the Krylov sector sizes. For simplicity of notation we will restrict our attention to the case when $L$ is even. We start with the size of the largest Krylov sector $\mck_0^{(L)}$, which is identified with the vertex at the center of the tree:
		\begin{figure}
			\centering 
			\includegraphics[width=\textwidth]{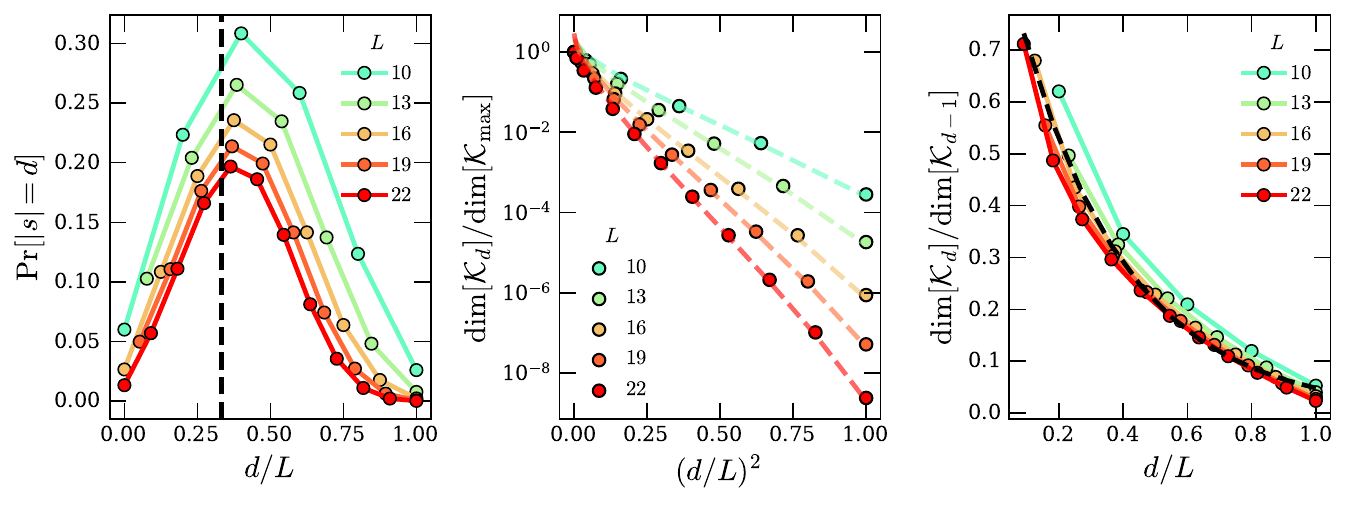}
			\caption{\label{fig:random_words} {\it Left:} The probability that a randomly chosen product state will lie in a Krylov sector with irreducible string of length $d$. The dashed black line lies at $d/L = (1-2/N)$ with $N=3$, the most probable size in the $L\ra\infty$ limit. {\it Center:} the sizes of Krylov sectors $\mck_d$ with length-$d$ irreducible string, compared with the size of the largest Krylov sector (dots). Dashed lines are plotted using the approximate expression in \eqref{approxdim}. {\it Right:} the relative size of succesive Krylov sectors arranged by distance $d$, exhibiting a decay scaling approximately exponentially with $d/L$. The dashed black line is drawn according to \eqref{approxdim}.}
		\end{figure}
		\begin{proposition} \label{prop:kmaxdim} 
			For even $L$, the size of the largest Krylov sector $\mck_0^{(L)}$ is 
			\be \label{kmaxexact} |\mck_0^{(L)}| =  N^{L} \(1 + \frac{1}2 \sum_{n=1}^{L/2} N^{-2n}{1/2\choose n} (-1)^n \g^{2n}\).\ee 

			For both even and odd $L$, in the large $L$ limit $|\mck_0^{(L)}|$ scales as 
			\be \label{kmaxdim} |\mck_0^{(L)}| \sim L^{-3/2}  (N\r)^L,\ee 
			where 
			\be \r \equiv \frac{2 \sqrt{N-1}}{N}\ee 
			is the spectral radius of $T_N$ \cite{lyons2017probability}. 
		\end{proposition}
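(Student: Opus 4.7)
The plan is to count $|\mathcal{K}_0^{(L)}|$ as the number of length-$L$ closed walks on $T_N$ based at the root, and to read off both the exact formula \eqref{kmaxexact} and the asymptotic \eqref{kmaxdim} from a single generating function $A(z) \equiv \sum_{L \geq 0} |\mathcal{K}_0^{(L)}|\, z^L$. First I would derive $A(z)$ via a first-return decomposition: every root-to-root walk factors uniquely as a concatenation of excursions, so $A(z) = 1/(1 - E(z))$. An excursion from the root consists of a step to one of $N$ neighbors (weight $Nz$), an arbitrary walk on the subtree rooted at that neighbor, and a backtrack (weight $z$). The subtree at each neighbor is self-similar --- its root has degree $N-1$ while every other vertex has degree $N$ --- so its own root-to-root GF $\tilde A(z)$ satisfies the quadratic $\tilde A(z) = 1/(1 - (N-1)z^2\, \tilde A(z))$, whose admissible solution is $\tilde A(z) = (1 - \sqrt{1 - 4(N-1)z^2})/(2(N-1)z^2)$. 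Using $E(z) = Nz^2 \tilde A(z)$ and assembling the pieces yields
\be A(z) = \frac{2(N-1)}{(N-2) + N\sqrt{1 - 4(N-1)z^2}}. \ee

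To obtain the exact formula, I would rationalize the denominator to get
\be A(z) = \frac{N\sqrt{1 - 4(N-1)z^2} - (N-2)}{2(1 - N^2 z^2)}, \ee
then expand the numerator by the Newton binomial series $\sqrt{1 - x} = \sum_n \binom{1/2}{n}(-x)^n$ and the denominator as a geometric series $1/(1 - N^2 z^2) = \sum_k N^{2k}z^{2k}$. Taking the Cauchy product and collecting the coefficient of $z^L$ produces a sum of precisely the form \eqref{kmaxexact}; the small-$L$ checks $|\mathcal{K}_0^{(0)}|=1$ and $|\mathcal{K}_0^{(2)}|=N$ pin down any residual normalization constant implicit in the definition of $\gamma$.

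The asymptotic \eqref{kmaxdim} would then follow from Flajolet--Odlyzko singularity analysis applied to $A(z)$. The dominant singularity is the square-root branch point at $z_\ast = 1/(2\sqrt{N-1}) = 1/(N\rho)$, since $1 - 4(N-1)z_\ast^2 = 0$. Writing $1 - 4(N-1)z^2 \sim 2(1 - z/z_\ast)$ near $z_\ast$, one expands $A(z) = c_0 + c_1 \sqrt{1 - z/z_\ast} + O(1 - z/z_\ast)$ with $c_0 = 2(N-1)/(N-2)$ and an explicit negative $c_1$, and the transfer theorem yields $|\mathcal{K}_0^{(L)}| \sim (c_1 / \Gamma(-1/2))\, L^{-3/2}\, z_\ast^{-L} = C_N\, L^{-3/2} (N\rho)^L$. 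For odd $L$, $\mathcal{K}_0^{(L)}$ is empty by parity and the central sector sits at depth $1$; the analogous GF (counting walks from the root to a fixed depth-$1$ vertex) differs only in an overall algebraic prefactor, has the same square-root singularity at $z_\ast$, and therefore the same $L^{-3/2}(N\rho)^L$ scaling. The main obstacle is the bookkeeping in the middle step --- matching the Cauchy product to the precise form of \eqref{kmaxexact} and the convention for $\gamma$ --- whereas the singularity analysis is standard once the algebraic structure of $A(z)$ is in hand.
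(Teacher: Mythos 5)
Your generating-function derivation is in substance identical to the paper's: your first-return decomposition
$A = 1/(1-Nz^2\tilde A)$, $\tilde A = 1/(1-(N-1)z^2\tilde A)$
is the same pair of relations the paper writes as $R = 1 + Nx^2 R B$ and $B = 1 + (N-1)x^2 B^2$ (your $\tilde A$ is their $B$), and you land on the same algebraic function $A(z)$. Likewise, the Cauchy-product extraction of the exact coefficient is precisely what the paper does. Incidentally, the small-$L$ sanity check you flag is genuinely useful here: carrying it out would reveal that the exponent in \eqref{kmaxexact} should read $N^{-2n+1}$ rather than $N^{-2n}$ (the paper's own derivation produces $N^{-2n+1}$, which correctly gives $|\mck_0^{(2)}|=N$; the statement as displayed has dropped the factor of $N$).

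Where you genuinely depart from the paper is the asymptotic step. The paper avoids singularity analysis: it uses the transience of the walk to argue $\sum_{n\geq 1}$ of the relevant series equals $1$, rewrites $|\mck_0^{(L)}|/N^L$ as a tail sum over $n>L/2$, applies Stirling, and evaluates the resulting integral. Your route via Flajolet--Odlyzko is cleaner, more standard, and scales better to the generalization $|\mck_d^{(L)}|$ treated immediately afterward. Two small points you should make explicit if you flesh this out. First, after rationalizing, the denominator $1-N^2z^2$ appears to contribute a pole at $z=1/N < z_*$, but the numerator vanishes there too; since you expand the \emph{un}-rationalized $A(z) = 2(N-1)/\bigl((N-2)+N\sqrt{1-4(N-1)z^2}\bigr)$ near $z_*$ you sidestep this, but it is worth stating that the branch point really is the dominant singularity. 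Second, $A(z)$ is even, so both $\pm z_*$ lie on the circle of convergence; the transfer theorem then gives a result that vanishes for odd $L$ and is doubled for even $L$ (or, equivalently, substitute $w=z^2$ and work with a single singularity). Neither affects the exponent $-3/2$ or the base $N\rho$, so the conclusion stands, and your treatment of odd $L$ via the depth-$1$ generating function $zR(z)B(z)$ is the right move.
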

		\begin{proof}
			
			As in Ref.~\cite{hart2023exact}, our proof will use generating functions to obtain an exact expression for $|\mck_0^{(L)}|$. Since we are assuming $L$ is even, the size of $\mck_0^{(L)}$ can be determined by counting the number of non-lazy simple length-$L$ random walks on $T_N$ which begin and end at the origin. 
			
			Let $R(x)$ be the generating function for non-lazy simple random walks on $T_N$. $R(x)$ is easily seen to obey the recursion relation 
			\be R(x) = 1 +N x^2R(x) B(x),\ee 
			where $B(x)$ is the generating function for returning walks on the {\it rooted} $N$-regular tree $T_{N,r}$ which begin and end at the root vertex (which has degree $N-1$). We thus need a recursion relation for $B(x)$, which is readily obtained as 
			\be B(x) = 1 + x^2(N-1)B(x)^2,\ee 
			which when solved yields \cite{hart2023exact}
			\be B(x) = 2\frac{1-\sqrt{1-(x\g)^2}}{(x\g)^2} \qq \g \equiv 2\sqrt{N-1}.\ee	
			We can now use this expression to get $R(x)$, which we may write after some algebra as\footnote{An aside: one should not be alarmed that $R(1)$ is imaginary. If one wants the expected number of times an infinitely long walk returns to the origin, one needs to write down generating functions for probabilities, rather than for number of paths. Since each individual move has an equal probability of $1/N$, this amounts to sending $x \mt x / N$, which gives 
				\be R(x/N) = \frac{2(N-1)}{N-2+N\sqrt{1-(x\r )^2}},\ee 
				and so sending $x\ra 1$ then gives 
				\be \label{expected_returns}\lan \text{number of returns} \ran = \frac{N-1}{N-2}.\ee 
				This appropriately diverges when $N=2$ but is finite for all $N>2$, and correctly approaches 1 as $N\ra\infty$.} 
			\be\label{nicer} R(x) = \frac{2 + N(\sqrt{1-(\g x)^2 } - 1)}{2(1-(Nx)^2)}.\ee 
			
			We now want to determine the long-walk asymptotics, which requires that we perform the series expansion 
			\be R(x) \equiv \sum_k x^{2k} |\mck_0^{(2k)}|,\ee which takes the form of a convolution between a geometric series and the series coming from the expansion of the square root in the denominator. From \eqref{nicer}, a Taylor expansion gives 
			\bea R(x) & =\frac12\sum_{m=0}^\infty (Nx)^{2m} \( 2-N+N\sum_{n=0}^\infty {1/2\choose k}(-1)^k (\g x)^{2k} \).\eea 
			The $L$th coefficient is then
			\be |\mck_{0}^{(L)}| = N^{L} \(1 + \frac{1}2 \sum_{n=1}^{L/2} N^{-2n+1}{1/2\choose n} (-1)^n \g^{2n}\),\ee
			giving the exact result \eqref{kmaxexact}. 
			
			We now obtain the asymptotic result \eqref{kmaxdim}. For this we rewrite the fractional binomial coefficient above as 
			\be {1/2 \choose n} (-1)^n = - \frac{(2n-3)!!}{2^n n!}.\ee  
			Since the fraction of length-$2k$ walks which return to the origin vanishes as $l\ra\infty$, we have $(|\mck_0^{(L)}|/N^{L}) |_{L\ra \infty} = 0$, which using the above implies 
			\be \sum_{n=1}^\infty \frac{N^{1-2n} \g^{2n} (2n-3)!!}{2^{n+1} n!} = 1.\ee 
			Thus for large $L$ we may write
			\bea \frac{|\mck_0^{(L)}|}{N^{L}} & = \sum_{n=k+1}^\infty \frac{ N^{1-2n} \g^{2n} (2n-3)!!}{2^{n+1} n!} \\ 
			& \approx \sum_{n=L/2+1}^\infty \sqrt{\frac{e^3N^2(1-3/2n)}{16\pi}}(n-3/2)^{-3/2} \exp\(n[-2\ln(N/\gamma) + \ln(1-3/2n) \) \\ 
			& \approx \int_{L/2}^\infty dn \, \sqrt{\frac{N^2 e^3}{16\pi}} n^{-3/2} \exp(-2 n \ln (N/\g)), \eea
			where we used $k!! \approx \sqrt{2k} (k/e)^{k/2}$ at large $k$. 

			Doing the integral (whose exact expression is written in terms of $\G(1/2,L \ln \g)$) then gives
			\be \label{asymptotic_kmax} |\mck_0^{(L)}| \sim L^{-3/2} (N\r)^{L},\ee 
			where $\r = \g / N$ is the spectral radius introduced above. That the base of the exponential is $N\r$ also follows more directly from the fact that $\r$ characterizes the return probability $p_{\rm ret}(L)$ as $\r \equiv \lim_{L\ra \infty} (p_{\rm ret}(L))^{1/L}$ \cite{woess2000random}. 
		\end{proof}
		
		\begin{figure}
			\centering 
			\includegraphics{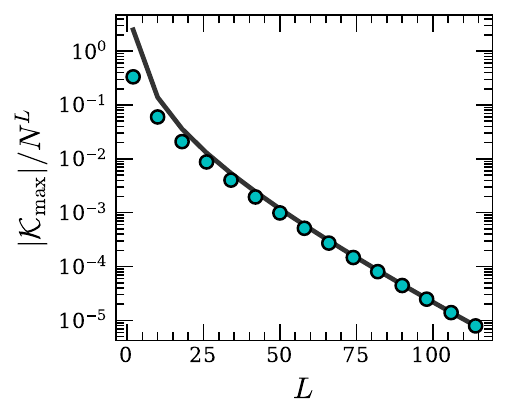}
			\caption{\label{fig:kmaxscaling} Scaling of the size of the largest Krylov sector $\mck_0$ as a function of $L$, plotted for $N=3$. Cyan circles are the exact result \eqref{kmaxexact}, and the gray line is the asymptotic expression \eqref{kmaxdim}.  } 
		\end{figure}
		
		Since $\r < 1$ for all $N>2$, the largest Krylov sector occupies an exponentially small subset of Hilbert space, and the PF model is strongly fragmented for all $N>2$. 
		
		We will also have occasion to know the number of length-$L$ random walks that end a distance $d$ from the origin. To this end, we can generalize the above result about $|\mck_0^{(L)}|$ to 
		\begin{proposition}
			The size of a Krylov sector $\mck_d^{(L)}$ whose irreducible strings have length $d$ is  
			\be \label{kdexact} |\mck_d^{(L)}| = 2^d \sum_{n=0}^{(L+d)/2} \sum_{k=0}^d |\mck_0^{(L+d-2n)}|(-1)^{k+n} {d\choose k} {k/2\choose n} \g^{2(n-d)}.\ee  
		\end{proposition}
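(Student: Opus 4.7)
The plan is to first establish the generating function identity
\begin{equation}
W_d(x) \equiv \sum_L |\mck_d^{(L)}|\, x^L = x^d B(x)^d R(x),
\end{equation}
and then extract coefficients to reproduce the claimed formula. By the walk--word bijection of Sec.~\ref{app:notation}, $|\mck_d^{(L)}|$ equals the number of length-$L$ non-lazy walks on $T_N$ from the origin $v_0$ to any fixed endpoint $v_d$ at depth $d$ (all such endpoints yield the same count by symmetry of $T_N$).

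The combinatorial input is a last-visit decomposition along the unique path $v_0, v_1, \ldots, v_d$. I first split each walk at its last visit to $v_0$: the prefix is an arbitrary returning walk at $v_0$ (generating function $R(x)$), while the suffix leaves $v_0$ and never returns yet still ends at $v_d$. The first suffix step is forced to be to $v_1$, since any other neighbor of $v_0$ opens into a subtree that does not contain $v_d$ and can only be escaped through $v_0$ itself; this contributes an extra factor of $x$. The remainder of the suffix is then a walk from $v_1$ to $v_d$ confined to the subtree hanging off $v_1$ away from $v_0$, which is isomorphic to $T_{N,r}$. Letting $\tilde W_k(x)$ be the analogous generating function for walks on $T_{N,r}$ from its root to a fixed vertex at depth $k$, iterating the same last-visit argument in this subtree gives the recursion $\tilde W_k = xB(x)\,\tilde W_{k-1}$ with base case $\tilde W_0 = B(x)$. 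Unrolling yields $\tilde W_{d-1} = x^{d-1} B(x)^d$, and hence $W_d(x) = R(x)\cdot x\cdot \tilde W_{d-1}(x) = x^d B(x)^d R(x)$.

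For the coefficient extraction I would substitute the closed form $B(x) = 2(1-\sqrt{1-(\g x)^2})/(\g x)^2$ from the proof of Prop.~\ref{prop:kmaxdim}, which gives $x^d B(x)^d = 2^d \g^{-2d} x^{-d}(1 - \sqrt{1-\g^2 x^2})^d$. Two binomial expansions then follow: first $(1-\sqrt{1-\g^2 x^2})^d = \sum_k (-1)^k \binom{d}{k}(1-\g^2 x^2)^{k/2}$, and then the generalized series $(1-\g^2 x^2)^{k/2} = \sum_n (-1)^n \binom{k/2}{n}\g^{2n}x^{2n}$. Multiplying by $R(x) = \sum_m |\mck_0^{(m)}|\, x^m$ and reading off the coefficient of $x^L$ forces $m = L + d - 2n$, reproducing the stated double sum after relabeling.

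The main obstacle is executing the last-visit decomposition rigorously---specifically, verifying that after the last visit to each $v_j$ the walk is forced to advance to $v_{j+1}$ (so that the peeled-off excursions really are confined to the correct rooted subtree), and that these excursions are counted by the same $B(x)$ at every depth because excluding the parent vertex turns the remaining subtree into a copy of $T_{N,r}$. Once the identity $W_d(x) = x^d B(x)^d R(x)$ is in hand, the coefficient extraction is a mechanical binomial manipulation.
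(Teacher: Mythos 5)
Your proposal is correct and follows essentially the same route as the paper: the paper likewise writes $G(x;d)=x^dR(x)B^d(x)$ and expands. The only difference is that the paper asserts this identity with a terse ``this is seen to be,'' while you supply the missing justification---the last-visit decomposition along $v_0,\dots,v_d$ and the recursion $\tilde W_k = xB(x)\tilde W_{k-1}$ with $\tilde W_0=B(x)$---which is sound (the non-return constraint forces each successive step onto the unique path and confines the remainder to a copy of $T_{N,r}$); the coefficient extraction then matches the paper's line for line.
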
 
		
		\begin{proof}
			Let $G(x;d)$ be the generating function for walks which travel to a specific vertex at distance $d$. In terms of the generating functions introduced above, this is seen to be 
			\be G(x;d) = x^d R(x) B^d(x).\ee 
			Performing an expansion of $B^d(x)$, we have 
			\bea G(x;d) & = 2^d \sum_{l,n=0}^\infty\sum_{k=0}^d (-1)^{k+n} {d \choose k} {k/2 \choose n} \g^{2(n-d)} |\mck_0^{(2l)}| x^{2l+2n-d},   \eea 
			and so writing $G(x;d) = \sum_{m=0}^\infty x^{d+2m} |\mck_d^{(d+2m)}|$, we have 
			\be \label{dmckex} |\mck_d^{(d+2m)}| = 2^d \sum_{n=0}^{d+m} \sum_{k=0}^d |\mck_0^{(2(d+m-n))}|(-1)^{k+n} {d\choose k} {k/2\choose n} \g^{2(n-d)}.\ee
			For a length $L$ random walk we will have $m=(L-d)/2$; inserting this above gives the exact expression \eqref{kdexact}. 
		\end{proof} 
		
		\begin{figure}
			\centering 
			\includegraphics{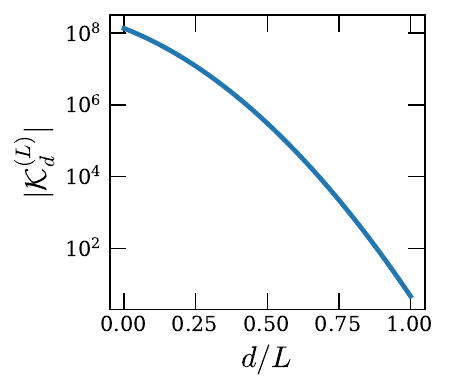}
			\caption{\label{fig:sector_sizes} The asymptotic expression \eqref{approxdim} for the size of $\mck_d^{(L)}$, shown here with $L=20$. }
		\end{figure}
		
		\sss*{Intuition: biased random walks}
		
		We have already obtained an asymptotic expression for $|\mck_0^{(L)}|$ valid at large $L$. When $d/L$ is not very small, we can obtain a complementary asymptotic expression for $|\mck_d^{(L)}|$. This can be done from a straightforward (if unilluminating) expansion of the binomial coefficients in \eqref{dmckex}. More physically, we can argue by realizing that $N(N-1)^{d-1} |\mck_d^{(L)}|$ is equal to the probability for a length-$L$ biased random walk on $\nn$ to end a distance $d$ from the origin; here the radial direction of $T_N$ is identified with $\nn$ and the factor of $N(N-1)^{d-1}$ is equal to the number of sectors at depth $d$. The bias of this walk is probability of moving outward $(1-1/N)$ minus the probability of moving inward $(1/N)$, and hence the walker has velocity 
		\be v_N \equiv 1-2/N.\ee  
		Therefore 
		\bea \label{approxdim} |\mck_d^{(L)}| & \approx  \frac{(1-1/N)^{\frac{L+d}2} N^{- \frac{L-d}2} {L \choose \frac{L+d}2 }}{N(N-1)^{d-1}} N^L \\ 
		& \approx \frac{2(N-1)}{N\sqrt{2\pi L}} N^L \exp\(-\frac{(d-v_NL)^2}{2L} - d \ln (N-1) \),\eea 
		where in the second line we have used that the probability distribution of a random walker on the real line with velocity $v_N$ is $p(x,t) = \frac1{\sqrt{2\pi t}} e^{-(x-v_Nt)^2/2t}$ (with $t=L$ and $x=d$ in the above), and the factor of $2$ comes from the fact that $d$ must have the same parity as $L$ (with the $-d \ln (N-1)$ ensuring that $|\mck_d^{(L)}|$ is always monotonically decreasing with $d$). This function is shown in Fig.~\ref{fig:sector_sizes}. 
		
		%
		%
		%
		%
		%
		%
		%
		%

		\section{From RU to stochastic dynamics} \label{app:stochastics}
		
		In this section we derive the generator of the stochastic dynamics that arises from studying circuit-averaged evolution in the model of open-system RU dynamics introduced in the main text. A single timestep of the dynamics corresponds to evolution under the channel 
		\be \label{mccdef} \mcc(\r) = \mcu^\dag \(\Tr_L[\r] \tp \frac{\unit}N\) \mcu,\ee
		where $\Tr_L$ denotes tracing out the last site of the system, the $\unit$ acts on the $L$th site, and $\mcu$ is a random constraint-preserving unitary defined via a depth-$2$ brickwork circuit: 
		\be \mcu = \( \bigotimes_{i=1}^{L/2-1}  U_{2i,2i+1}\) \(\bigotimes_{i=1}^{L/2}  U_{2i-1,2i}\) ,\ee 
		where each $U_{i,i+1}$ are independent unitaries which preserve the constraint. We will view the $\Tr_L[\r]\tp \unit$ part of \eqref{mccdef} as arising from dynamics in which a ``bath'' system at sites $i>L$, which is coupled to the length-$L$ system through a generic interaction at site $L$, is acted on by generic Haar-random unitary dynamics and then traced out, resulting in completely depolarizing noise being applied to the $L$th site. 
		In the following we will first consider the most general case in which the $U_{i,i+1}$ are constrained only by their preservation of walk endpoints in the computational basis; we refer to this as the case of ``pair-flip'' constraints. In a subsequent subsection we will consider a more restrictive case where the $U_{i,i+1}$ preserve walk endpoints in {\it all} single-site bases; this corresponds to a RU realization of the constraints present in the Temperley-Lieb model.

		\ss{Pair-flip} 
		
		For general pair-flip dynamics, the elementary gates $U_{i,i+1}$ take the form
		\be U_{i,i+1} = \sum_{a,b} U^{PF}_{ab} \kb{aa}{bb} + \sum_{a\neq b} e^{i\phi_{ab}} \kb{ab}{ab} = U^{PF} \oplus \bigoplus_{a\neq b} e^{i\phi_{ab}},\ee 
		where the matrix $U^{PF}$ is drawn from the Haar ensemble on $U(N)$, and the second term in the direct sum acts as a diagonal matrix of random phases on the subspace of states frozen under the PF dynamics, viz. those of the form $\k{a,b}, a\neq b$. 
		
		We will be interested in understanding how an operator $\mco$ evolves under the circuit-averaged dynamics, following \cite{Singh} and the slightly modified treatment given in \cite{group_dynamics}. We let 
		\be \ob\mco(t) \equiv \oEE_{\mcc_t} [ \mcc_t(\mco)]\ee 
		denote the circuit-averaged evolution of $\mco$ over time $t$, where the $\oEE_{\mcc_t}$ denotes averaging over the unitaries constituting $\mcu$. 
		To help with notation, we will divide each unit time interval into three steps of length $t=1/3$: at $t\in \nn$ the depolarizing noise is applied, at $t\in \nn+1/3$ the first layer of $\mcu$ is applied, and at $t\in \nn+2/3$ the second layer is applied. Thus 
		\be \label{t13} \ob\mco(t+1/3) = \Tr_L[\ob\mco(t)] \tp \unit.\ee 
		Decomposing $\ob\mco$ as 
		\be \ob\mco(t) = \bot_{i=1}^{L/2-1} \ob\mco_{2i,2i+1}(t)\ee 
		without loss of generality, performing the Haar average gives 
		\bea \label{t23} \ob\mco_{2i,2i+1}(t+2/3) & = \frac{\Tr [\ob\mco_{2i,2i+1}(t+1/3)  \Pi^{PF}]}{N} \Pi^{PF}  + \sum_{a\neq b} \Tr[\ob\mco_{2i,2i+1}(t+1/3) \Pi^{ab}] \Pi^{ab}, \eea 
		where we have defined the projectors 
		\be \Pi^{PF} \equiv \sum_{a} \proj{a,a}, \qq \Pi^{ab} \equiv \proj{a,b}.\ee 
		The action of the second layer of the brickwork is obtained similarly. 
		
		Note that since $\Pi^{PF}$, $\Pi^{ab}$ are diagonal in the computational basis, so too is $\ob \mco$ after any nontrivial amount of time evolution. We may thus focus on diagonal operators, i.e. states, without loss of generality. From \eqref{t13} and \eqref{t23}, the diagonal operator $\proj\psi$ evolves according to 
		\be \k{\ob\psi(t+1)} = \mcm_o\mcm_e\mcm_L \k{\ob\psi(t)},\ee 
		where 
		\bea \label{mcmdef} \mcm_L & = \unit_{L-1} \tp \frac1N \sum_{a,b} \kb ab \\ 
		\mcm_e & = \bot_{i=1}^{L/2-1} M^{PF}_{2i,2i+1} \\ 
		\mcm_o & = \bot_{i=1}^{L/2} M^{PF}_{2i-1,2i}, \eea 
		where we have defined the 2-site stochastic matrix 
		\be M^{PF}  \equiv  \frac1N\sum_{a,b} \kb{a,a}{b,b} + \sum_{a\neq b} \proj{a,b}.\ee 
		Note that $\mcm_o\mcm_e\mcm_L$ is doubly stochastic and irreducible, and hence its unique steady state is the uniform distribution on $\mch$ (irreducibility would of course fail if $\mcm_L$ were absent). This guarantees that RU dynamics will always thermalize to the uniform distribution at long enough times. 
		
		\ss{Temperley-Lieb} 
		
		For Temperley-Lieb dynamics, the elementary unitary gates are constrained to preserve the pair-flip constraint in {\it any} onsite basis, which is done by enriching the previously studied pair-flip dynamics with $SU(N)$ symmetry. The elementary unitary gates for this model take the form  
		\be U_{i,i+1} = e^{i\phi_{i,i+1}} \Pi^{TL} + (\unit-\Pi^{TL}) ,\ee 
		where 
		\be \Pi^{TL} \equiv \frac1N \sum_{a,b} \kb{a,a}{b,b},\ee and where each $\phi_{i,i+1}$ is sampled randomly from $[0,\twp)$. 
		
		The first part of each timestep, whereby the spin at site $L$ is depolarized, is of course unchanged from the more general pair-flip case. After averaging over the $\phi_{i,i+1}$, one sees that the first layer of the brickwork maps operators as 
		\bea \label{tl_op_evol} \ob \mco_{2i,2i+1}(t+2/3) & =\Pi^{TL} \ob\mco_{2i,2i+1}(t+1/3) \Pi^{TL}+ (\unit - \Pi^{TL}) \ob\mco_{2i,2i+1}(t+1/3) (\unit - \Pi^{TL}), \eea  
		with the second layer of the brickwork acting analogously. 
		
		Because of the $U(N)$ invariance of TL dynamics, an operator which is diagonal in any single site product state basis will remain diagonal in that basis. For concreteness we will continue to use the computational basis, although any other basis is equally fine. Diagonal operators, or equivalently states, evolve as $\k{\ob\psi(t+1)} = \mcm\k{\ob\psi(t)}$, where 
		the Markov generator $\mcm$ has the same form as \eqref{mcmdef}, except with $M^{PF}$ replaced by the matrix $M^{TL}$, where
		\be M^{TL} = \(1 - \frac{2(N-1)}{N^2}\) \sum_{a} \proj{a,a} + \frac2{N^2}\sum_{a\neq b} \kb{a,a}{b,b} + \sum_{a\neq b} \proj{a,b}, \ee 
		which follows from \eqref{tl_op_evol} and reduces to $M^{PF}$ when $N=2$. Thus compared with $M^{PF}$, when $N>2$ TL dynamics has a smaller probability for pairs to flip. In particular, pairs completely cease to flip in the $N\ra\infty$ limit. 
		As with pair-flip, $\mcm$ is doubly stochastic and irreducible, and hence the uniform distribution is $\mcm$'s unique steady state. 

		\section{Rigorous bounds on relaxation and mixing times} \label{app:bounds}
		
		In this section we will prove bounds on the relaxation and mixing times of the Markov process $\mcm$ associated with the pair-flip RU dynamics studied in the main text, as defined in Sec.~\ref{app:stochastics}.

		Before we begin, some remarks and reminders on notation. The initial states we consider will always be computational basis product states, referred to simply as ``product states'' in what follows. $\psi$ will be used to denote an arbitrary product state. We will write $\psi(t)$ for a sequence of product states obtained as a particular realization of the Markov process defined by $\mcm$. This is to be distinguished from the {\it probability distribution} one obtains from evolving a given state $\psi$ for time $t$, which we write as $\k{\mcm^t\psi}$. 		
		As mentioned above, the double stochasticity of $\mcm$ implies that $\mcm$'s equilibrium distribution is uniform on the space of product states, which we will denote as $\mch$ by a slight abuse of notation.

		The irreducible string associated with a product state $\psi$---which is obtained by iteratively removing pairs of adjacent identical characters in $\psi$---will be written as $\bfs_\psi$. In various places below we will write $\bfs_d$ when we wish only to emphasize that the irreducible string in question has length $d$. 
		As defined previously, $\mck_\bfs$ will denote a specific Krylov sector with irreducible string $\bfs$; when we wish to denote an arbitrary $\mck_\bfs$ with $|\bfs|=d$ we will instead similarly write $\mck_d$, and when we wish to explicitly specify the system size we will write $\mck_d^{(L)}$. 
		In all of what follows we will assume for simplicity of notation that $L$ is even, although all results can be readily generalized to odd $L$. 
		
		We will refer to the graph whose vertices are Krylov sectors and whose edges are drawn according to how the coupling to the bath connects the sectors as the {\it Krylov graph}. Since the constraint-breaking term moves the endpoint of the random walk by a distance of exactly 2, each sector in the Krylov graph is connected to $N(N-1)$ other sectors, and thus the Krylov graph is formed by the even / odd (depending on $L\mod2$) sublattice of the symmetric depth $L$ $N$-regular tree (see Fig.~\ref{fig:trees}~a).  
		
		\ss{Markov chains and graph expansion}
		
		We begin by reviewing some central concepts in the theory of Markov chains,\footnote{See e.g. \cite{lyons2017probability} for a good introduction.} letting $\mcm$ to denote the generator of a given Markov process. A key notion in what follows will be that of the expansion:
		\begin{definition}
			Let $R \subset \mch$. The {\it expansion} of $R$ is defined as the amount of probability flow that the uniform distribution experiences out of $R$ during one step of the Markov process $\mcm$:
			\be \label{expansiondef} \cp(R) \equiv \frac1{|R|} \sum_{\psi \in R}\sum_{\psi' \in R^c} \lan \psi' | \mcm | \psi\ran,\ee 
			where the sums run over sets of computational basis product states spanning $R$ and $R^c$, respectively.
		\end{definition}
		
		The utility of this definition is that when $\cp(R)$ is small, states initially in $R$ take a long time to diffuse to its complement $R^c$ \cite{lyons2017probability}: 
		\begin{proposition}
			The probability that a product state $\psi(0)$ randomly drawn from $R$ will leave $R$ in time $t$ is upper bounded by 
			\be \label{pdiffbound} P(\psi(t) \in R^c \, | \, \psi(0) \in R) \leq t \cp(R).\ee 
		\end{proposition}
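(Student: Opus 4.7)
The plan is to combine a union bound over time steps with the double stochasticity of $\mcm$ established in App.~\ref{app:stochastics}. Let $\r_s$ denote the marginal distribution of $\psi(s)$ given that $\psi(0)$ is drawn uniformly at random from $R$, so that $\r_0(\psi) = 1/|R|$ for $\psi\in R$ and $\r_0(\psi) = 0$ otherwise. The event $\{\psi(t)\in R^c\}$ is contained in $\bigcup_{s=1}^{t}\{\psi(s-1)\in R,\,\psi(s)\in R^c\}$, so a union bound immediately gives
\be \label{pf:unionbound}
P(\psi(t)\in R^c\,|\,\psi(0)\in R)\leq \sum_{s=1}^t\sum_{\psi\in R}\sum_{\psi'\in R^c}\r_{s-1}(\psi)\,\lan\psi'|\mcm|\psi\ran.
\ee
It then suffices to show that each $s$-summand is upper bounded by $\cp(R)$.

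I would establish this via a short majorization-style lemma: $\max_\psi \r_s(\psi)$ is nonincreasing in $s$. Indeed, for any $\psi'$,
\be
\r_s(\psi')=\sum_\psi \lan\psi'|\mcm|\psi\ran\,\r_{s-1}(\psi)\leq \max_\psi \r_{s-1}(\psi)\sum_\psi \lan\psi'|\mcm|\psi\ran=\max_\psi \r_{s-1}(\psi),
\ee
where the last equality uses the row-sum condition $\sum_\psi \lan\psi'|\mcm|\psi\ran=1$, which is precisely the double stochasticity of $\mcm$ (as distinct from the column-sum condition that merely conserves probability). Iterating gives $\max_\psi \r_{s-1}(\psi)\leq 1/|R|$ for all $s\geq 1$.

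Substituting this pointwise bound back into \eqref{pf:unionbound} yields
\be
P(\psi(t)\in R^c\,|\,\psi(0)\in R)\leq \sum_{s=1}^t \frac{1}{|R|}\sum_{\psi\in R}\sum_{\psi'\in R^c}\lan\psi'|\mcm|\psi\ran = t\,\cp(R),
\ee
where the final equality is the definition \eqref{expansiondef} of $\cp(R)$. The only genuinely nontrivial ingredient is the appeal to double stochasticity to rule out the dynamics ``loading up'' probability mass onto a small subset of $R$ which happens to have large per-step flux into $R^c$; without this input, the per-step escape probability could in principle exceed $\cp(R)$, and the linear-in-$t$ bound would collapse. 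Everything else is elementary bookkeeping.
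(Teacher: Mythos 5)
Your proof is correct. Both you and the paper rest on the same two ingredients — a union bound over per-step boundary crossings, and double stochasticity of $\mcm$ — but you deploy the latter in a different way. The paper (following Ref.~\cite{lyons2017probability}) uses the conditioning trick: it views $\psi(0)$ as drawn uniformly from all of $\mch$, writes the conditional escape probability as $(|\mch|/|R|)$ times a joint probability, and then invokes stationarity of the uniform law to conclude that each summand $P(\psi(r)\in R^c,\,\psi(r-1)\in R)$ is literally equal to the $r=1$ term. You instead work directly with the conditional process started uniformly on $R$, and substitute a sup-norm contraction lemma — that $\max_\psi \r_s(\psi)$ is nonincreasing because the row sums of $\mcm$ equal one — to bound $\r_{s-1}(\psi) \leq 1/|R|$ pointwise. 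Your route replaces the paper's exact stationarity equality with an inequality (tight only at $s=1$), but this costs nothing since the union bound already discards any cancellations, and the end result is identical. One small stylistic note: the sentence ``the event $\{\psi(t)\in R^c\}$ is contained in $\bigcup_{s=1}^t\{\psi(s-1)\in R,\,\psi(s)\in R^c\}$'' is only true conditionally on $\psi(0)\in R$; your subsequent formula correctly carries that conditioning via $\r_{s-1}$, but the prose would be clearer if it said so explicitly. You also correctly flag the key structural fact — that it is the \emph{row}-sum condition, not mere probability conservation, that makes the argument go through — which is indeed where double stochasticity earns its keep.
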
 
		\begin{proof}
			We proceed following Ref.~\cite{lyons2017probability}. Using the Markovity of the dynamics, 
			\bea P(\psi(t) \in R^c \, | \, \psi(0)\in R)
			& \leq \frac{|\mch|}{|R|} \sum_{r = 1}^t P(\psi(r) \in R^c, \, \psi(r-1) \in R) \\ 
			& = \frac{t|\mch|}{|R|} P(\psi(1) \in R^c , \, \psi(0) \in R) 
			\\ 
			&= t \cp(R),\eea 
			where we have used that the probability of selecting any particular state $\psi$ is $1/|\mch|$. 
		\end{proof} 
		
		Regions with smaller $R$ are more ``cut off'' from their complements $R^c$. The most isolated subregion of $\mch$ defines the expansion of $\mcm$: 
		\begin{definition}
			The expansion of $\mcm$ is defined by the minimal expansion of a subregion of $\mch$: 
			\be \cp(\mcm) \equiv \min_{R\subset \mch\, : \, |R| \leq \frac12 |\mch|} \cp(R).\ee 
		\end{definition} 
		
		$\cp(\mcm)$ thus provides a fundamental measure of the slowness of the dynamics. Its primary utility is that it can be used to bound two important timescales characterizing the slowness of $\mcm$, defined as follows: 
		\begin{definition}
			The {\it relaxation time} is defined as the inverse gap of $\mcm$: 
			\be t_{\rm rel} \equiv \frac1{\De_\mcm},\ee 
			where $\De_\mcm \equiv 1 - \l_2$, with $\l_2$ the second largest eigenvalue of $\mcm$. 
			The {\it mixing time} is defined as the amount of time required for the distribution $\mcm^t \psi$ to become close to the uniform distribution $\pi$: 
			\be t_{\rm mix} \equiv \min \{ t \, : \, \max_{\psi \in \mch} || \mcm^t \psi - \pi ||_1 \leq \frac12 \},\ee 
			where the maximum is over all initial product states in $\mch$.  
		\end{definition} 
		
		$t_{\rm rel}$ and $t_{\rm mix}$ are essentially equivalent in their ability to capture the slowness of $\mcm$, as follows from the general bounds $(t_{\rm rel}-1) \ln 2 \leq t_{\rm mix} \leq t_{\rm rel} \ln(4 |\mch|) $ \cite{levin2017markov}. For concreteness we will focus on $t_{\rm rel}$ in what follows. 
		
		\ms 
		
		The expansion $\cp(\mcm)$ bounds $t_{\rm rel}$ via a fundamental result known as Cheeger's inequality (see e.g. \cite{lyons2017probability}): 
		\be \label{cheeger} \frac{\cp(\mcm)^2}2 \leq  \De_\mcm  \leq 2 \cp(\mcm).\ee 
		In what follows, we will calculate $\cp(\mcm)$ for the Markov chains of interest and will use the above inequality to provide a rather tight constraint on the relaxation time. We will also see how the calculation of $\cp(R)$ for appropriate choices of $R$ can be used to bound entanglement growth and the relaxation times of local operators. 
		
		\ss{Local and non-local chains}
		
		The Markov generator derived in Sec.~\ref{app:stochastics} was obtained by considering brickwork RU dynamics composed of three alternating layers: one layer of constrained 2-site gates on the even sublattice ($\mcm_e$ in the notation of Sec.~\ref{app:stochastics}), one layer on the odd sublattice ($\mcm_o$), and one layer consisting solely of depolarizing noise applied to the spin on site $L$ ($\mcm_L$). In this section, we will write $\mcm_{\rm loc}$ for the Markov generator so obtained: 
		\be \label{mloc}	\mcm_{\rm loc}  \equiv \mcm_o \mcm_e \mcm_L .\ee  
		This dynamics is to be constrasted with a simpler ``non-local'' Markov process, obtained in the limit in which the depolarizing noise is weak, with the number of constrained RU brickwork layers being much larger than the number of applications of the depolarizing noise. In this limit, the constrained part of the dynamics thermalizes within each Krylov sector much faster than the time scale over the constraint-breaking term acts. In this situation, the internal structure of the dynamics within each Krylov sector is trivial: as soon as a state reaches a new sector, it instantly spreads out uniformly across that sector, and the thermalization dynamics is consequently controlled solely by transitions between sectors. We will denote the Markov generator of this dynamics by $\mcm_{\rm nonloc}$: 
		\bea \label{mnonloc} 
		\mcm_{{\rm nonloc}} \equiv  (\mcm_o\mcm_e)^\infty \mcm_L = \Pi_{\rm unif} \mcm_L, \eea 
		where 
		\be  \Pi_{\rm unif} = \sum_{v_s} \frac1{|\mck_{v_s}|} \sum_{s,s' \in \mck_s} \kb{s}{s'}\ee 
		projects product states onto the uniform distribution over the Krylov sector they belong to. 
		
		Analytic bounds on the gap of $\mcm_{\rm nonloc}$ are naturally easier to obtain than bounds on that of $\mcm_{\rm loc}$, since for the former, the locality of the dynamics does not come into play. In almost all of this section we will thus focus on $\mcm_{\rm nonloc}$, rather than $\mcm_{\rm loc}$.	 
		This is done without loss of generality since we are primarily interested in obtaining {\it upper} bounds on the gap, and naturally the lack of locality means that 
		\be \De_{\mcm_{\rm nonloc}} \geq \De_{\mcm_{\rm loc}},\ee 
		so that an upper bound on $\De_{\mcm_{\rm nonloc}}$ will also upper bound $\De_{\mcm_{\rm loc}}$. On physical grounds we in fact expect 
		\be \label{locnonlocscaling} \De_{\mcm_{\rm loc}} \sim \frac{\De_{\mcm_{\rm nonloc}}}{L},\ee 
		with the $1/L$ coming from the time needed for the dynamics to make transitions induced by the bath felt across the system. Indeed, we will see shortly that numerical determinations of the gap agree with this scaling.

		\ss{$N>2$}
		
		In this subsection we prove a variety of results establishing exponentially slow thermalization in the strongly fragmented models obtained when $N>2$. The case of $N=2$, where thermalization is expected to be much faster, is dealt with in a subsequent subsection. 
		
		\sss{The spectral gap}
		
		We begin by proving the following theorem: 
		\begin{theorem} \label{thm:mcmexp}
			For $N>2$, the expansion of the Markov process $\mcm_{\rm nonloc}$ at large $L$ satisfies 
			\be \label{cpmcmbound} \cp(\mcm) \leq  \frac C{L^{3/2}} \r^{L},\ee 
			where $C$ is an unimportant $O(1)$ constant 
			\be \label{specrad} \r \equiv \frac{2\sqrt{N-1}}{N} < 1\ee 
			is the spectral radius of the symmetric $N$-ary tree. 
		\end{theorem}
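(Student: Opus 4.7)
The plan is to exhibit an explicit subset $R \subset \mch$ attaining (up to polynomial prefactors) the claimed exponential bound, since the definition $\cp(\mcm_{\rm nonloc}) = \min_{R \, : \, |R|\leq |\mch|/2} \cp(R)$ reduces the theorem to upper-bounding $\cp(R)$ for a well-chosen region. Motivated by the tree structure of the Krylov graph, I would take $R$ to be the union of all Krylov sectors whose irreducible string begins with a fixed letter $a$---that is, one full branch of $T_N$ hanging off the root. By the $N$-fold symmetry between branches, $|R| = (|\mch| - |\mck_0^{(L)}|)/N \leq |\mch|/N \leq |\mch|/2$, so $R$ is an admissible test set.

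The heart of the proof is the boundary flux $F \equiv \sum_{\psi \in R}\sum_{\psi' \in R^c}\langle \psi' | \mcm_{\rm nonloc} | \psi\rangle$. Recalling that $\mcm_{\rm nonloc} = \Pi_{\rm unif}\mcm_L$, the bath step $\mcm_L$ replaces $\psi_L$ by a uniform character, which is equivalent to sending the walk endpoint of $\psi$ to a uniformly random neighbor of the penultimate walk vertex on $T_N$. The crucial geometric observation is that, among Krylov sectors in $R$, only the depth-$2$ sectors $(a, s_2)$ with $s_2 \neq a$ have a distance-$2$ neighbor on $T_N$ lying outside $R$---namely the root---while every sector in the $a$-branch at depth $d \geq 4$ has its grandparent, its $N-2$ same-depth siblings, and its $(N-1)^2$ grandchildren all within the $a$-branch.

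For a state $\psi \in \mck_{(a,s_2)}$, the bath step sends the endpoint to the root only when the penultimate vertex is the parent $(a)$ of the endpoint (forcing $\psi_L = s_2$ and leaving exactly $|\mck_1^{(L-1)}|$ choices for the first $L-1$ characters of $\psi$) and then $\psi'_L$ is chosen equal to $a$, which occurs with probability $1/N$. Summing the contribution $|\mck_1^{(L-1)}|/N$ over the $N-1$ depth-$2$ sectors in the $a$-branch yields $F = (N-1)|\mck_1^{(L-1)}|/N$. The elementary identity $|\mck_0^{(L)}| = N |\mck_1^{(L-1)}|$---which follows since every length-$L$ walk returning to the root is a length-$(L-1)$ walk ending at one of the $N$ depth-$1$ vertices followed by the unique edge back---then gives
\[\cp(R) = \frac{F}{|R|} = \frac{(N-1)|\mck_0^{(L)}|}{N\bigl(|\mch|-|\mck_0^{(L)}|\bigr)} \sim \frac{N-1}{N}\,L^{-3/2}\rho^L,\]
where in the asymptotic step I use $|\mch| = N^L$ together with $|\mck_0^{(L)}|\sim L^{-3/2}(N\rho)^L$ from Proposition~\ref{prop:kmaxdim}.

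The main obstacle is the careful enumeration carried out in the previous paragraph: for each $\psi \in \mck_{(a,s_2)}$ one must separate the cases in which the penultimate walk vertex is the parent of the endpoint (at depth $1$) versus one of its grandchildren (at depth $3$), and verify that only the former configuration produces a bath transition that escapes $R$. This is conceptually straightforward but demands care in tracking factors of $N$ versus $N-1$; once this decomposition is pinned down, the assertion that sectors of depth $\geq 4$ contribute nothing to $F$ is immediate from the distance-$2$ neighbor structure on the tree.
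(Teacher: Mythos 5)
Your proof is correct and takes essentially the same route as the paper: the paper's argument proceeds by computing the expansion $\cp(C_d)$ of a general cone (Lemma~\ref{cdlemma}) and then specializing to $C_2$, a full branch of the tree (Corollary~\ref{cor:phicd}), which is exactly your test set $R$, and the flux and volume counts match. Your exact identity $|\mck_0^{(L)}| = N|\mck_1^{(L-1)}|$ is a nice simplification that the paper handles only asymptotically; the paper instead works out $\cp(C_d)$ for all $d$ because that generality is reused in the later entropy and charge-relaxation bounds.
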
 
		
		\begin{figure}
			\centering 
			\includegraphics{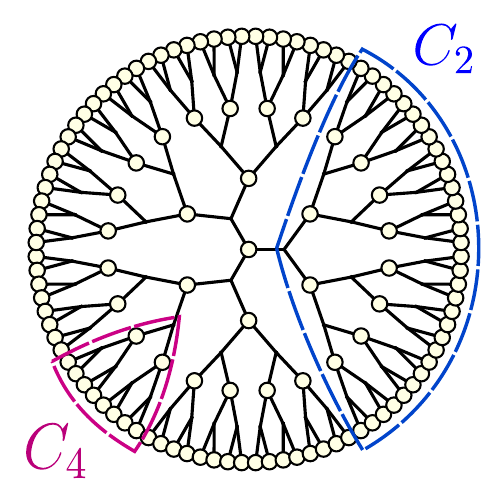} 
			\caption{\label{fig:cones_illustration} The krylov graph for $L=6,N=3$, with examples of regions $C_2, C_4$ indicated by the dashed lines.   }
		\end{figure}
		
		To prove this, we need the following definition: 
		\begin{definition}
			For a length $d-1$ irreducible string $\bfs_{d-1}$, define the {\it cone} $C_{\bfs_{d-1}} \subset \mch $ as 
			the subregion of Hilbert space spanned by the $N-1$ sectors which lie at depth $d$ and whose parent vertex is associated with the string $s_{d-1}$, together with all sectors which are children of these sectors. More formally, we have 
			\be \label{conedef} C_{s_{d-1}} \equiv \bigoplus_{s'\, : \, (s'_1\dots s'_{d-1}) = s_{d-1}, \, |s'| \geq d} \mck_{s'},\ee 
			where the sum is over all irreducible strings of length $\geq d$ whose first $d-1$ entries are equal to $s_{d-1}$. When the exact string $s_{d-1}$ is not important, we will simply write $C_d$ instead of $C_{s_{d-1}}$. A graphical illustration of this definition is shown in Fig.~\ref{fig:cones_illustration}.	
		\end{definition} 
		
		Our proof of Theorem~\ref{thm:mcmexp} will rely on the following Lemma: 
		\begin{lemma} \label{cdlemma}
			
			The expansion of $C_d$ is 
			\be \label{cdphi} \cp(C_d) = \frac{N-1}N \frac{|\mck_{d-1}^{(L-1)}|}{\sum_{c=0}^{(L-d)/2} |\mck^{(L)}_{d+2c}| (N-1)^{2c+1} }. \ee 
			In particular, when both $L,d$ are large and $L-d = \ct(L)$, $\cp(C_d)$ behaves as 
			\bea \label{approxphicd}
			\cp(C_d) \approx \frac{2(N-1)e^{(d/L-v_N)(1-v_N)}}{N^2} \( \ct(d-v_NL) (d/L-v_N) + \ct(v_NL-d) \frac{e^{-\frac{(d-v_NL)^2}{2L}}}{\sqrt{\twp L}} \),\eea 
			where $v_N = 1-2/N$ as before. 
			
		\end{lemma}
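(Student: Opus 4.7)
The plan is to compute $\cp(C_d)$ directly from the definition \eqref{expansiondef} using the two-step structure $\mcm_{\rm nonloc}=\Pi_{\rm unif}\mcm_L$. From a product state $\psi\in C_d$ with walk endpoint $v$ at depth $d_v$, applying $\mcm_L$ yields a uniform mixture over the $N$ states obtained by replacing $a_L$; each such choice relocates the walk endpoint to one of the $N$ neighbors of the $(L-1)$-step vertex $v_{L-1}$. The projection $\Pi_{\rm unif}$ then spreads the state uniformly across the Krylov sector of the resulting endpoint, so the probability $q(\psi)$ of transitioning into $C_d^c$ equals $1/N$ times the number of choices of $a_L$ that land in a sector outside $C_d$.

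The next step is a case analysis, splitting according to whether the last step of $\psi$'s walk was forward (``Case A'', $|v_{L-1}|=d_v-1$) or a backtrack (``Case B'', $|v_{L-1}|=d_v+1$). Parity preservation under the single-site flip forbids reaching the depth-$(d-1)$ cone root $s_{d-1}$ itself, so the new endpoint has depth $d_v$ or $d_v\pm 2$. Case B is quickly dispatched: all new endpoints sit at depth $\geq d_v\geq d$ and are descendants of $s_{d-1}$, so no exits occur. Case A with $d_v\geq d+2$ likewise has its ``grandparent'' target at depth $d_v-2\geq d$ lying on the path from $s_{d-1}$ to $v$, so these transitions also stay in $C_d$. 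Only Case A with $d_v=d$ can exit: there $v_{L-1}=s_{d-1}$, and exactly one choice of $a_L$ sends the endpoint to the parent of $s_{d-1}$ at depth $d-2$, giving $q(\psi)=1/N$, while the remaining $N-1$ options either fix $v$ or shuffle it among the $N-2$ sibling children of $s_{d-1}$.

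Now counting: there are $N-1$ depth-$d$ sectors in $C_d$ (the children of $s_{d-1}$), and the Case A walks terminating in each such sector are in bijection with length-$(L-1)$ walks ending at $s_{d-1}$ (one per admissible forward label $a_L$), contributing $|\mck_{d-1}^{(L-1)}|$ per sector. The total exit flow is therefore $(N-1)|\mck_{d-1}^{(L-1)}|/N$. Dividing by $|C_d|=\sum_{c=0}^{(L-d)/2}(N-1)^{2c+1}|\mck_{d+2c}^{(L)}|$ -- obtained by noting that $s_{d-1}$ has $(N-1)^{2c+1}$ descendant sectors at depth $d+2c$ -- establishes the exact formula \eqref{cdphi}.

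For the asymptotic expression \eqref{approxphicd}, I would substitute the Gaussian approximation \eqref{approxdim} into both the numerator and the cone-size sum. The shift $(d,L)\mapsto(d-1,L-1)$ in the numerator produces an extra factor $e^{(d/L-v_N)(1-v_N)}$ after Taylor-expanding the quadratic exponent to leading order in $1/L$. The denominator reduces (up to $c$-independent prefactors) to $\sum_c e^{-(d+2c-v_NL)^2/(2L)}$, which splits into two regimes: when $d<v_NL$ the Gaussian peak lies inside the summation window, and a continuum approximation gives $\sqrt{\pi L/2}$; when $d>v_NL$ the summand is monotonically decreasing in $c$, and the Mills-ratio estimate for the Gaussian tail yields $\sim L/[2(d-v_NL)]\cdot e^{-(d-v_NL)^2/(2L)}$. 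Taking the ratio in each regime then produces the stated two-term formula, with the Gaussian in the numerator canceling against the tail estimate in Case (ii) to leave only the linear factor $(d/L-v_N)$. I expect the main obstacle to be the careful tracking of $\sqrt{L}$-type prefactors and verifying that the matching between the two regimes near $d\sim v_NL$ is consistent, since both the Mills-ratio and continuum-Gaussian approximations degrade in the crossover region.
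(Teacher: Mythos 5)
Your proof is correct and follows essentially the same route as the paper's: compute the exit flow as $\tfrac{N-1}{N}\,|\mck_{d-1}^{(L-1)}|$ (one exit-capable state per length-$(L-1)$ walk reaching $s_{d-1}$, per depth-$d$ child, each with exit probability $1/N$), divide by $|C_d|=\sum_c(N-1)^{2c+1}|\mck_{d+2c}^{(L)}|$, and then substitute the Gaussian approximation \eqref{approxdim}, splitting the resulting sum/integral according to the sign of $d-v_N L$. Your explicit forward-vs.-backtrack case analysis usefully fills in the paper's terser claim that only those $\psi$ in a depth-$d$ sector whose walk visits $s_{d-1}$ at step $L-1$ can exit, but the counting, the resulting exact formula, and the asymptotic evaluation all coincide with the paper's.
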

		\begin{proof}
			
			\begin{figure}
				\centering
				\includegraphics{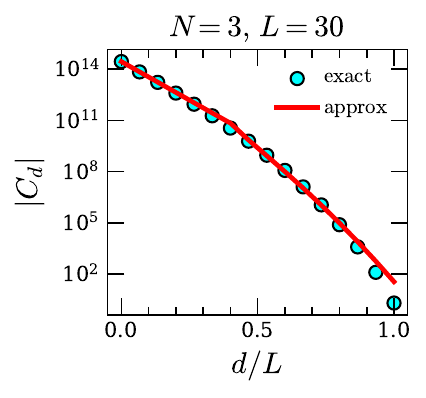}
				\caption{\label{fig:conevolume} The number of states in the cone $C_d$ for $N=3,L=30$. The exact result (circles) is compared with the asymptotic expression \eqref{approxconevol} (solid line).
				}
			\end{figure}
			
			To determine $\cp(C_d)$, we need to know the sizes of both $C_d$ and its boundary. 
			The exact size of $C_d$ is 
			\bea  \label{exactcdvol} |C_d| & = \sum_{c=0}^{(L-d)/2} |\mck_{d+2c}|(N-1)^{2c+1}.\eea 
			When both $L$ and $d$ are large, we may use \eqref{approxdim} to write 
			\bea |C_d| & \approx \frac{ (N-1)^{2-d}N^{L-1}} {\sqrt{2\pi L}}\int_0^{L-d} dx\, \exp\(-\frac{(x+d-v_NL)^2}{2L}  \). 
			\eea 
			The value of this expression depends on the sign of $d/L - v_N$, which we will assume is $\ct(L)$. If $d/L > v_N$, then the saddle point of the integrand does not lie within the integration domain, and we may approximate the integral by $e^{-(d-v_NL)^2/2L}/(d/L-v_N)$. If $d/L<v_N$ we may use the saddle point approximation, with the integral becoming $\sqrt{2\pi L}$. Thus 
			\be \label{approxconevol} |C_d| \approx  \frac{ (N-1)^{2-d}N^{L-1}} {\sqrt{2\pi L}} \( \ct(d-v_NL) \frac{ e^{-(d-v_NL)^2/(2L)}}{d/L-v_N} + \ct(v_NL-d) \sqrt{\twp L} \),\ee 
			which is shown plotted against the exact result \eqref{exactcdvol} in Fig.~\ref{fig:conevolume}. 
			This result means that when $d>v_NL$, most of the states in $C_d$ are concetrated at a depth of $d$, while when $d<v_NL$, most of the states are concentrated within a window of width $\sim \sqrt L$ around $v_NL$. This can be understood simply from the concentration of the biased random walk discussed near \eqref{approxdim}.

			To get $\cp(C_d)$, we need to calculate the probability for states in $C_d$ to move to the complement $C_d^c$ under one step of $\mcm_{\rm nonloc}$. Clearly the only states which can do so are those in the $N-1$ sectors at depth $d$. Under $\mcm_{\rm nonloc}$, an arbitrary state is equally likely (with probability $1/N$) to move to each of the $N-1$ distinct sectors it is connected to. Therefore since there are $N-1$ sectors in $C_d$ whose states can be connected to $C_d^c$, 
			\be \sum_{\psi \in C_d} \sum_{\psi' \in C_d^c} \lan \psi' |\mcm_{\rm nonloc} |\psi\ran  = \frac{N-1}N|\{\psi\in \mck_d\, : \, \p \psi \cap \mck_{d-2} \neq 0\}|\ee 
			where $\mck_d$ is one of the depth $d$ sectors in $C_d$, $\mck_{d-2}$ is the depth $d-2$ sector it is connected to, and $\p \psi$ denotes those states that have nonzero overlap with $\mcm_{\rm nonloc} \k\psi$. If a state in $\mck_d$ is to have $\p\psi \cap \mck_{d-2}\neq 0$, the length-$L$ walk on $T_N$ associated to $\psi$ must lie at depth $d-1$ at step $L-1$. Therefore 
			\be |\{ \psi \in \mck_d \, : \, \p \psi \cap \mck^{(L)}_{d-2} \} | = |\mck_{d-1}^{(L-1)}|,\ee 
			from which the exact expression \eqref{cdphi} then follows. The approximate expression in \eqref{approxphicd} is then obtained using \eqref{approxdim} and a bit of algebra. 
		\end{proof}
		
		The most important aspect of \eqref{approxphicd} is that $\cp(C_d)$ is exponentially small in $L$ if $d< v_NL$, while it is $O(1)$ and roughly $L$-independent when $d>v_NL$. From \eqref{expansiondef}, this means that a state initially localized in a random state in $C_d$ will take exponentially long to diffuse out of $C_d$ when $d<v_NL$, while it can take only $O(1)$ time when $d>v_NL$. This means that diffusion on the Krylov graph is fast for states corresponding to random walks that reach a distance further from the center of the graph than the expected distance of $v_NL$, and slow for states that reach a distance less than $v_NL$. The crossover between these two regimes occurs over a window of depths centered on $v_NL$ and of width $\sim \sqrt L$, which is where most of the states in $\mch$ lie.

		As a particular case of the previous lemma and \eqref{asymptotic_kmax}, we have 
		\begin{corollary} \label{cor:phicd}
			The conductance of the region $C_2$ is 
			\be \label{c2cond} \cp(C_2) = (N-1) \frac{|\mck^{(L-1)}_1|}{N^L - |\mck_0^{(L)}|} \sim L^{-3/2} \r^L,\ee 
			where $\sim$ denotes equality in the asymptotic scaling sense. 
		\end{corollary}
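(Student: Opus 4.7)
The plan is to deduce both statements as direct specializations: the exact identity comes from Lemma \ref{cdlemma} at $d=2$ combined with a bookkeeping of the sector sizes, while the scaling comes from a singularity-analysis argument analogous to the one used in the proof of Proposition \ref{prop:kmaxdim}.

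First I would specialize Lemma \ref{cdlemma} to $d=2$:
\[
\cp(C_2) \;=\; \frac{N-1}{N}\,\frac{|\mck_1^{(L-1)}|}{\sum_{c=0}^{(L-2)/2}|\mck^{(L)}_{2+2c}|\,(N-1)^{2c+1}}.
\]
Reindexing the sum by $d = 2+2c$, so that $(N-1)^{2c+1}=(N-1)^{d-1}$, and partitioning $\mch$ into Krylov sectors using that the number of length-$d$ irreducible strings is $N(N-1)^{d-1}$ for $d\geq 1$, I would obtain
\[
\sum_{\substack{d\geq 2\\ d\text{ even}}} N(N-1)^{d-1}|\mck_d^{(L)}| \;=\; N^L - |\mck_0^{(L)}|.
\]
This identifies the denominator of the lemma as $(N^L - |\mck_0^{(L)}|)/N$, which after cancellation with the explicit $1/N$ factor gives the exact expression in the statement.

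Second, for the asymptotic scaling I would need $|\mck_1^{(L-1)}|$ at large $L$. Here I would mimic the generating function argument from the proof of Proposition \ref{prop:kmaxdim}. The series $G(x;d) = x^d R(x) B(x)^d$ has singularities coming from those of $R(x)$ and $B(x)$; since $B(x)$ is analytic at $x=1/\g$ with $B(1/\g)=2$, and since the apparent singularities of $R(x)$ at $x=\pm 1/N$ are removable as verified in the proof of Proposition \ref{prop:kmaxdim}, the dominant singularity of $G(x;d)$ is the square-root branch point of $R(x)$ at $x=1/\g$. Standard transfer theorems then give $|\mck_d^{(L)}|\sim c_d L^{-3/2}\g^L$ for each fixed $d$, with $c_d$ an $L$-independent constant. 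Setting $d=1$ and shifting $L\mapsto L-1$ yields $|\mck_1^{(L-1)}|\sim c_1 L^{-3/2}(N\r)^L$. Finally, since $\r<1$, Proposition \ref{prop:kmaxdim} implies $|\mck_0^{(L)}|/N^L\to 0$, so $N^L - |\mck_0^{(L)}|\sim N^L$; combining these,
\[
\cp(C_2) \;\sim\; (N-1)\,\frac{L^{-3/2}(N\r)^L}{N^L} \;=\; (N-1)\,L^{-3/2}\r^L \;\sim\; L^{-3/2}\r^L.
\]

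The main obstacle, such as it is, is the singularity analysis for the fixed-$d$ scaling. The approximation \eqref{approxdim} derived earlier in the excerpt is tailored to the biased-random-walk regime $d=\Theta(L)$ and gives a different (non-matching) exponent when naively applied at $d=O(1)$, so one cannot just invoke it. The careful step is thus confirming that $B(x)$ neither introduces a new singularity nor cancels the $(1-(\g x)^2)^{1/2}$ branch of $R(x)$ at $x=1/\g$; this is routine because $B(1/\g)=2$ is finite and nonzero and $B$ is analytic in a neighborhood of $1/\g$, but it is the one place where genuine attention is needed. Everything else is algebraic bookkeeping downstream of Lemma \ref{cdlemma} and Proposition \ref{prop:kmaxdim}.
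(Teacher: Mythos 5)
Your derivation of the exact identity is correct and matches the paper's implicit reasoning: the reindexing $d'=2+2c$ together with the partition $N^L = |\mck_0^{(L)}| + \sum_{d'\geq 2,\, \text{even}} N(N-1)^{d'-1}|\mck_{d'}^{(L)}|$ gives exactly the stated denominator.

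Your asymptotic argument, however, contains a concrete error: $B(x)$ is \emph{not} analytic at $x=1/\gamma$. From $B(x)=2\bigl(1-\sqrt{1-(\gamma x)^2}\bigr)/(\gamma x)^2$ it has the same square-root branch point as $R(x)$, so $G(x;d)=x^d R(x) B(x)^d$ picks up square-root singular contributions from \emph{both} factors, and one must check they do not cancel. Writing $s=\sqrt{1-(\gamma x)^2}$, near $x=1/\gamma$ one finds $R=\tfrac{2(N-1)}{N-2}-\tfrac{2N(N-1)}{(N-2)^2}s+\cdots$ and $B=2-2s+\cdots$, so the coefficient of $s$ in $RB$ is $-8(N-1)^2/(N-2)^2\neq 0$ for $N>2$; the $L^{-3/2}\gamma^L$ transfer-theorem scaling does go through, but only after this cancellation check, which your proposal misattributes to a nonexistent analyticity. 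This step is also entirely avoidable: a length-$L$ closed walk on $T_N$ must sit at a depth-$1$ vertex at step $L-1$ and then take its unique step back, so $|\mck_0^{(L)}| = N\,|\mck_1^{(L-1)}|$. Combined with \eqref{asymptotic_kmax} this gives $|\mck_1^{(L-1)}|\sim L^{-3/2}(N\rho)^L/N$ directly and is almost certainly the one-line reduction the paper intends when it cites Lemma~\ref{cdlemma} and \eqref{asymptotic_kmax}.
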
 
		This then gives the desired bound appearing in Theorem \ref{thm:mcmexp}. \qed

		\ms 
		\begin{figure}
			\centering  \includegraphics{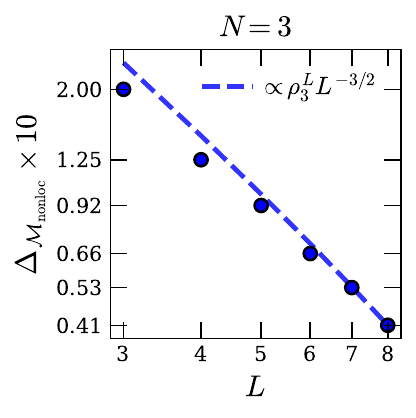} 		\includegraphics{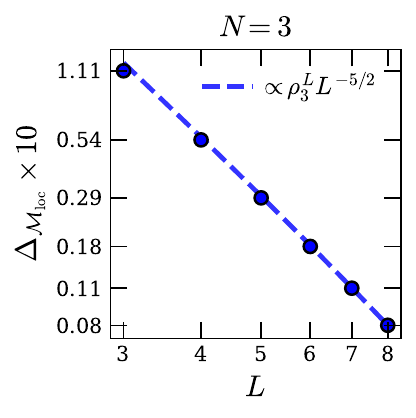} 
			\caption{\label{fig:N3_markov_gap} Markov gaps for $N=3$, computed with exact diagonalization. {\it Left:} The gap $\De_{\mcm_{\rm nonloc}}$ of the non-local chain (blue circles) fit to the analytic bound $\propto \r_3^L L^{-3/2}$ (dashed line). {\it Right:} As left, but for the local chain $\mcm_{\rm loc}$, and fit to $\propto \r_3^L L^{-5/2}$.  }
		\end{figure}
		By Cheeger's inequality, we thus have the following corollary: 
		\begin{corollary} \label{cor:trelbounds}
			The relaxation time $t_{\rm rel} \equiv \De_{\mcm_{\rm nonloc}}\inv$ satisfies 
			\be t_{\rm rel} \geq C' L^{3/2} \r^{-L} \ee 
			for an $O(1)$ constant $C'$.
		\end{corollary}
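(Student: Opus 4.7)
The plan is to derive this corollary directly by plugging the expansion bound of Theorem~\ref{thm:mcmexp} into the right-hand side of Cheeger's inequality \eqref{cheeger}. Since we want a \emph{lower} bound on $t_{\rm rel} = 1/\Delta_{\mcm_{\rm nonloc}}$, we need an \emph{upper} bound on the gap, which is precisely what Cheeger provides through the inequality $\Delta_\mcm \leq 2\Phi(\mcm)$.

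First I would invoke Theorem~\ref{thm:mcmexp}, which gives $\Phi(\mcm_{\rm nonloc}) \leq C L^{-3/2} \rho^L$ for some $O(1)$ constant $C$ and $\rho = 2\sqrt{N-1}/N < 1$. Then I would apply the upper half of Cheeger's inequality \eqref{cheeger} to obtain
\begin{equation}
\Delta_{\mcm_{\rm nonloc}} \leq 2\Phi(\mcm_{\rm nonloc}) \leq 2C L^{-3/2} \rho^L.
\end{equation}
Inverting this bound gives
\begin{equation}
t_{\rm rel} = \frac{1}{\Delta_{\mcm_{\rm nonloc}}} \geq \frac{1}{2C} L^{3/2} \rho^{-L},
\end{equation}
so setting $C' \equiv 1/(2C)$ yields the claim.

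There is really no obstacle here: the corollary is a one-line consequence of combining two results already at our disposal. The only thing worth remarking is that the bound we obtain is expected to be tight up to polynomial corrections in $L$, which is consistent with the numerical data shown in Fig.~\ref{fig:N3_markov_gap} (left) for $\mcm_{\rm nonloc}$. For the local chain $\mcm_{\rm loc}$, the analogous bound inherits an additional factor of $L$ from the intra-sector equilibration time, in line with the heuristic scaling \eqref{locnonlocscaling} and the fit shown in Fig.~\ref{fig:N3_markov_gap} (right).
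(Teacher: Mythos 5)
Your proposal is correct and matches the paper's own reasoning exactly: the corollary is stated immediately after the sentence ``By Cheeger's inequality, we thus have the following corollary,'' i.e.\ it is obtained by combining Theorem~\ref{thm:mcmexp} with the upper half $\De_\mcm \leq 2\cp(\mcm)$ of \eqref{cheeger} and inverting. Nothing to add.
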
 
		
		The above arguments have only established an upper bound on the expansion, but in fact we expect this bound to be fairly tight, as we expect $C_2$ to be fairly close to the true region of minimal expansion. As steps in this direction, we note first that the region with minimal expansion will always be connected.\footnote{This follows simply from the fact that for positive numbers $A_{1,2}, V_{1,2}$, 
			\be \frac{A_1+A_2}{V_1+V_2} = \frac{A_1}{V_1}p + \frac{A_2}{V_2}(1-p) \geq \min(A_1/V_1, A_2/V_2),\ee 
			where $p\equiv (1/V_2) / (1/V_1 + 1/V_2)$. } 
		Suppose now that $\cp(R)$ is minimized by a connected region whose boundary defines a cut between vertices on the Krylov graph, i.e. suppose that for all $\mck_s$, either $\mck_s \subset R$ or $\mck_s \cap R = \emptyset$. If this is true then $\cp(R)$ is obviously minimized for $R = C_d$ for some $d$, since for a given $\mck_s \in R$, including every child sector of $\mck_s$ in $R$ increases $|R|$ but leaves $\sum_{\psi \in R} \sum_{\psi'\in R^c} \lan \psi' | \mcm_{\rm nonloc}|\psi\ran$ unchanged. This $C_2$ defines a minimal expansion region if one can show that a minimal expansion cut must always be made {\it between} nodes on the Kyrlov graph, instead of being made {\it within} any particular node. This may not be true in complete generality, but we expect $C_2$ to be close enough to the region of minimal expansion that $\cp(\mcm_{\rm nonloc})$ still follows the same asymptotic scaling as the upper bound \eqref{cpmcmbound}. We thus conjecture that there exist constants $C_1,C_2$ such that 
		\be \label{guess_upper} C_1 L^{-3} \r^{2L} \leq \De_{\mcm_{\rm nonloc}} \leq C_2 L^{-3/2} \r^L.\ee 
		While $\r<1$---implying exponentially large relaxation times---the factors of $L^{-3/2}, L^{-3}$ appearing in the above inequality dominate over the exponentiall parts for modest values of $L$, meaning that for smaller system sizes we expect a mostly power-law scaling of $t_{\rm rel}$. 
		
		In Fig.~\ref{fig:N3_markov_gap} we determine the gaps of both $\mcm_{\rm nonloc}$ and its local variant $\mcm_{\rm loc}$ for very small system sizes using exact diagonalization. For the small values of $L$ available, we observe a scaling of $\De_{\mcm_{\rm nonloc}} \sim \r_3 L^{-3/2}$, consistent with a saturation of the upper bound on $t_{\rm rel}$ obtained from \eqref{guess_upper}. We likewise observe a good fit of $\De_{\mcm_{\rm loc}}$ to $\r_3 L^{-5/2}$, with locality thus providing an extra factor of $1/L$, as advocated for around \eqref{locnonlocscaling}.

		In the following subsections, we prove that the exponentially long thermalization time is also manifested in both the growth of entanglement entropy, and in the expectation values of certain local operators.

		\sss{Entanglement entropy}
		
		We now use the formalism developed in the previous section to bound entropy growth. 		
		We will first focus on the case where the dynamics is that of a random unitary circuit perturbed by depolarizing noise on the boundary. In this setting, our diagnostic of thermalization will be the von Neumann entropy of the time-evolved state: 
		\be  \ob{S}(t;\psi)  \equiv \oEE_{\mcc_t}  S( \mcc_t(\proj\psi)),\ee
		where the average is over depth-$t$ quantum circuits $\mcc_t$ defined as in the main text, and $\psi$ is a product state of our choosing. More generally, for a subspace $R\subset \mch$ spanned by product states, we will be interested in the average of the entanglement entropy when the initial states are sampled uniformly from $R$: 
		\be \ob{S}(t;R) \equiv \oEE_{\psi \in R} \ob{S}(t;\psi).\ee 
		The ultimate fixed point of the dynamics we consider is always the maximally mixed state $\unit$, so that regardless of $R$, $\ob S(t\ra\infty;R) = L \ln N$. 
		
		We will prove the following: 
		
		\begin{theorem}	\label{thm:entropy}
			Let $C_d$ be as in \eqref{conedef}, and suppose that $d$ satisfies
			\be \label{dassump} d < v_NL, \qq v_NL-d = \ct(L). \ee 
			Then 
			\be\label{entanglementbound} \ob S(t;C_d) \lesssim L \ln(N) \( 1 - \frac{d}L \frac{\ln (N-1)}{\ln (N)} + t \frac {F_d}{\sqrt L} e^{-L\frac{(d/L-v_N)^2}2}\) + c,\ee 	
			where $c=1/e + 2\ln(N-1)-\ln(N)$ and $F_d$ is an $O(1)$ constant:
			\be \label{fddef} F_d = \frac{4(N-1)}{\sqrt\twp N^2} e^{(d/L-v_N)(1-v_N)}. \ee 
		\end{theorem}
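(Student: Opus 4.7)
:} The plan is to replace the double average (over circuits and over initial states in $C_d$) with a single Shannon-entropy calculation on a Markov distribution, and then feed in the volume and expansion asymptotics already established in Lemma~\ref{cdlemma}.

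First, by concavity of the von Neumann entropy,
\be \ob S(t;\psi) = \oEE_{\mcc_t}\, S(\mcc_t(\proj\psi)) \leq S\!\(\oEE_{\mcc_t}[\mcc_t(\proj\psi)]\) = H(\mcm^t \psi),\ee
where the final equality uses the observation in App.~\ref{app:stochastics} that the circuit-averaged state is diagonal in the computational basis, with diagonal entries $\mcm^t\psi$. A second application of concavity, now to the average over $\psi \in C_d$, gives $\ob S(t;C_d) \leq H(p_t)$ with $p_t \equiv \oEE_{\psi\in C_d}[\mcm^t\psi]$ the distribution obtained by evolving the uniform distribution on $C_d$ for $t$ steps of $\mcm$.

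Next I would split $p_t = (1-q_t)\, p_t^{\rm in} + q_t\, p_t^{\rm out}$, where $p_t^{\rm in},p_t^{\rm out}$ are the normalized restrictions of $p_t$ to $C_d$ and $C_d^c$. The escape probability $q_t$ is precisely the probability that a walker initialized uniformly in $C_d$ has left $C_d$ by time $t$, which by \eqref{pdiffbound} satisfies $q_t \leq t\, \cp(C_d)$. The mixture identity $H(p_t) = (1-q_t)H(p_t^{\rm in}) + q_t H(p_t^{\rm out}) + H_2(q_t)$ (with $H_2$ the binary entropy) combined with the maximum-entropy bounds $H(p_t^{\rm in}) \leq \ln|C_d|$ and $H(p_t^{\rm out}) \leq L\ln N$ then yields
\be H(p_t) \leq \ln|C_d| + q_t L\ln N + H_2(q_t).\ee

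The final step is substitution. In the regime \eqref{dassump} the saddle-point branch of Lemma~\ref{cdlemma} yields $\cp(C_d) \leq F_d L^{-1/2}\, e^{-L(d/L-v_N)^2/2}$ with $F_d$ as in \eqref{fddef}, while \eqref{approxconevol} gives $\ln|C_d| \leq L\ln N - d\ln(N-1) + 2\ln(N-1) - \ln N - \tfrac12 \ln(2\pi L)$. Using $H_2(q_t) \leq 1/e + q_t$ gathers the $O(1)$ pieces into the constant $c = 1/e + 2\ln(N-1) - \ln N$ (the $-\tfrac12 \ln(2\pi L)$ is negative and discarded, and the stray $q_t$ is absorbed into the $t$-dependent term since $q_t \ll t L^{1/2} \ln N \cdot F_d L^{-1/2} e^{\ldots}$), while $q_t L\ln N$ reproduces the displayed $t$-dependent contribution. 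The main technical obstacle is ensuring the numerical constants $F_d$ and $c$ come out exactly as stated: the two concavity reductions and the Cheeger-type escape bound are mechanical, and Lemma~\ref{cdlemma} already provides sharp asymptotics, but some care is needed to track factors of $(N-1)$ and to confirm that the mixture-entropy step is tight enough to land on \eqref{fddef}; any looseness simply folds into the $\lesssim$ of the statement.
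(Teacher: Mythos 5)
Your proof is correct, and it reaches the stated bound by a route that is parallel to but genuinely distinct from the paper's. Both proofs open identically with concavity of the entropy, reducing the problem to bounding the entropy of the averaged state $\ob\r(t;d)$. Where you diverge is the key lemma: the paper treats $\ob\r(t;d)$ as a quantum state, compares it to its normalized restriction $\s^R$ via the trace distance $T(\s,\s^R)=2\Tr[\Pi_R^\perp\s]$, and invokes Fannes' inequality to get $S(\s)\leq\ln|R|+2L\ln(N)\Tr[\Pi_R^\perp\s]+1/e$. You instead exploit the fact (which the paper also uses, but only implicitly) that $\ob\r(t;d)$ is diagonal in the computational basis, so its von Neumann entropy is the Shannon entropy of a classical distribution $p_t$; you then apply the elementary grouping identity $H(p_t)=(1-q_t)H(p_t^{\rm in})+q_tH(p_t^{\rm out})+H_2(q_t)$. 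This is the more natural tool given the diagonal structure, and it is actually tighter: Fannes costs a factor of $2$ through the trace distance, whereas the chain rule gives $q_tL\ln N + H_2(q_t)$ directly. As a result your bound would produce a constant $F_d/2$ where \eqref{fddef} has $F_d$ (you overstate the lemma slightly—\eqref{approxphicd} gives $\cp(C_d)\approx \frac{F_d}{2\sqrt L}e^{-L(d/L-v_N)^2/2}$, not $\frac{F_d}{\sqrt L}$—but a tighter upper bound still proves the $\lesssim$ statement). One small slip: the $-\tfrac12\ln(2\pi L)$ you append to $\ln|C_d|$ is spurious, since the $\sqrt{2\pi L}$ in \eqref{approxconevol} cancels exactly in the $d<v_NL$ branch; as you note it would be discarded anyway. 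Finally, one thing the paper's Fannes route buys that yours does not: the subsequent corollary bounding the half-chain entanglement entropy under closed unitary evolution uses monotonicity of trace distance under partial trace, which has no direct classical analogue—the Shannon-entropy shortcut is tied to the full-system open dynamics where the averaged state stays diagonal.
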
 
		
		This means that the entropy of the system will take a time exponentially long in system size to 
		saturate if initialized in a random state in $C_d$, although it may quickly approach the sub-maximal volume-law value of 
		\be \ln (|C_d| )\approx L \ln(N) \( 1 - \frac{d}L \frac{\ln (N-1)}{\ln (N)}\).\ee 
		In particular, it is easy to see that $\ob S(t;\proj \psi) \leq S(t;C_d)$ for all $d$ if $\k\psi$ is any state on the boundary of the Krylov graph, and thus a bound on the thermalization times of such boundary states may be obtained by minimizing the RHS of \eqref{entanglementbound} over $d$ satisfying \eqref{dassump}. 
		
		\begin{proof}
			By the concavity of the entropy and linearity of the trace, we have 
			\be \ob{S}(t;C_d) \leq S(\oEE_{\mcc_t} \oEE_{\psi \in R_d} \mcc_t(\proj\psi)) \equiv S(\ob{\r}(t;d)),\ee 
			where $\ob\r(t;d)$ is the channel- and state-averaged density matrix: 
			\be \ob{\r}(t;d) = \oEE_{\mcc_t} \oEE_{\psi \in C_d} \mcc_t( \proj{\psi}) .\ee 
			Normally this inequality is of limited use when studying random unitary circuits, since the averaged reduced density matrix is usually rendered trivial by the average over the RU part of the dynamics. In our case, the slow mixing of $\mcm$ will mean this is not so; indeed our above result on $t_{\rm rel}$ will be seen to imply that for exponentially long times, $\ob\r(t;d)$ has most of its support on only an exponentially small fraction of Hilbert space. 
			
			Anticipating this, we will first show that if $\s$ is a density matrix mostly supported on some subspace $R$, and if $\s$ does not connect $R$ with its complement, then $S(\s)$ cannot be much more than $\ln |R|$. To this end, define 
			\be \s^R \equiv \frac{\Pi_R \s \Pi_R}{\Tr[\Pi_R\s]},\ee 
			where $\Pi_R$ projects onto $R$. If $\Pi_R \s \Pi_R^\perp = 0$ (where $\Pi^\perp_R \equiv \unit - \Pi_R$), then 
			the trace distance between $\s$ and $\s^R$ is 
			\bea T(\s,\s^R) & = \Tr|\s^R- \s| \\ 
			& = \Tr\left| \frac{\Pi_R \s \Pi_R}{\Tr[\Pi_R \s]} - \Pi_R \s \Pi_R - \Pi_R^\perp \s \Pi_R^\perp \right| \\ 
			& = 1 - \Tr[\Pi_R\s] + \Tr[\Pi_R^\perp \s] \\ 
			& = 2 \Tr[\Pi_R^\perp \s]. \eea 
			Fannes' inequality then implies\footnote{This is not the strongest version of Fannes' inequality, but strengthening it only modifies the unimportant $L$-independent part.}
			\be \label{fannes} |S(\s) - S(\s^R)| \leq T(\s,\s^R)\ln( N^L)+ \frac1e = 2L\ln (N)\Tr[\Pi_R^\perp \s] + \frac1e.\ee 
			Since ${\rm rank}(\s^R) \leq |R|$, we have $S(\s^R) \leq \ln |R|$. Therefore 
			\be \label{ssigbound} S(\s) \leq \ln|R| + 2L\ln(N) \Tr[\Pi_R^\perp\s] +\frac1e.\ee 
			
			We now apply the above inequality to the model under study. From the definition of $\ob\r_A(t;d)$, 
			\be \Tr[\Pi^\perp_{C_d} \ob\r(t;d)] = P(\psi(t) \in C_d^c \, | \ \psi(0) \in C_d),\ee 
			where the probability on the RHS is calculated using $\mcm$. From \eqref{pdiffbound}, we then know that 
			\be \Tr[\Pi_{C_d}^\perp \ob\r(t;d)] \leq t \cp(C_d).\ee 
			Since $\ob\r(t;d)$ is diagonal in the computational basis, we have $\Pi_{C_d} \ob \r(t;d) \Pi^\perp_{C_d} = 0$, which allows us to apply \eqref{ssigbound} to give
			\be \ob S(t;C_d) \leq \ln(|C_d|) + 2tL\ln(N) \cp(C_d) + \frac1e.\ee 
			
			By \eqref{approxphicd} and our assumption that $d<v_NL$, the expansion of $C_d$ is 
			\be \cp(C_d) \approx  \frac{F_d}{2\sqrt{L}} e^{-L \frac{(d/L-v_N)^2}2 },\ee 
			where the $O(1)$ constant $F_d$ is defined as in \eqref{fddef}. Relatedly, \eqref{approxconevol} gives $|C_d| \approx (N-1)^{2-d} N^{L-1}$ since $d<v_NL$, and so  
			\be \ob S(t;C_d) \lesssim L \ln(N) \( 1 - \frac{d}L \frac{\ln (N-1)}{\ln (N)} + t \frac {F_d}{\sqrt L} e^{-L\frac{(d/L-v_N)^2}2}\),  \ee 
			where the $\lesssim$ indicates that we have dropped the constant $c$ appearing in \eqref{entanglementbound}. 
		\end{proof}
		
		This Theorem shows that the entropy for typical states in $C_d$ will take a time exponential in system size to fully saturate provided $v_N-d/L$ is positive and order $L^0$. To make this more concrete, we define the entanglement saturation time $t_S(\g;\psi)$ as the time needed for $\ob{S}(t;\psi)$ to reach a fraction $\g$ of its maximal value $\ob S(t\ra\infty;\psi) = L\ln N$: 
		\be t_S(\g;\psi) \equiv \min\{ t\, : \, \ob{S}(t;\psi) \geq \g L\ln N\}.\ee 
		We then have: 
		\begin{corollary} \label{cor:longts}
			Suppose that $\g$ satisfies 
			\be \label{gammaconds} 1>\g> \g_*, \qq \g_* \equiv 2\(1-v_N\frac{\ln (N-1)}{\ln (N)}\),\ee 
			with $|\g-\g_*| = \ct(L^0)$. 
			Then if $\k\psi$ is a state on the boundary of the Krylov graph,
			\be \label{longts} t_S(\g;\psi) \geq \frac{\g}{2 F_{d_\g}} \sqrt L e^{L\frac{(d_\g/L - v_N)^2}2},\ee 
			where 
			\be d_\g = L(1-\g/2)\frac{\ln(N)}{\ln(N-1)}.\ee  
		\end{corollary}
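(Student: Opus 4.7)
The plan is to combine the bound on $\ob S(t;C_d)$ from Theorem~\ref{thm:entropy} with the observation, asserted in the paragraph following that theorem, that for any state $\k\psi$ on the boundary of the Krylov graph and any admissible $d$, $\ob S(t;\psi) \leq \ob S(t;C_d)$. Granting this, any valid choice of $d$ yields a lower bound on $t_S(\gamma;\psi)$ via Theorem~\ref{thm:entropy}, so the task reduces to picking a $d$ that gives a clean scaling and then verifying the hypotheses of Theorem~\ref{thm:entropy} are met.

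The key insight is to balance the two terms on the right-hand side of \eqref{entanglementbound}. The time-independent "volume" contribution is $1 - (d/L)\ln(N-1)/\ln N$, which decreases in $d$, while the time-dependent contribution carries the exponentially small factor $e^{-L(d/L - v_N)^2/2}$, which grows as $d\to v_NL$. I would choose $d = d_\gamma = L(1-\gamma/2)\ln N/\ln(N-1)$, designed so that the volume term equals exactly $\gamma/2$, leaving a budget of $\gamma/2$ for the time-dependent contribution to reach the threshold $\gamma L \ln N$ (the additive $c$ and $\lesssim$ fuzz out an $O(L^0)$ correction that is irrelevant once $\gamma$ is bounded away from $\gamma_*$).

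Before plugging in, I would check that $d_\gamma$ satisfies the hypotheses $d_\gamma < v_N L$ and $v_NL - d_\gamma = \Theta(L)$ required by Theorem~\ref{thm:entropy}. The first inequality rearranges to $1 - \gamma/2 < v_N \ln(N-1)/\ln N$, i.e.\ $\gamma > 2(1 - v_N\ln(N-1)/\ln N) = \gamma_*$, which is precisely the hypothesis \eqref{gammaconds} of the corollary; the $\Theta(L)$ gap follows from the stipulation $|\gamma - \gamma_*| = \Theta(L^0)$. Substituting $d = d_\gamma$ into \eqref{entanglementbound}, the requirement that $\ob S(t;\psi) \geq \gamma L \ln N$ forces
\[
\frac{\gamma}{2} \;\leq\; t \, \frac{F_{d_\gamma}}{\sqrt L}\, e^{-L (d_\gamma/L - v_N)^2/2},
\]
and solving for $t$ produces the claimed bound \eqref{longts}.

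The main potential obstacle is justifying the monotonicity statement $\ob S(t;\psi) \leq \ob S(t;C_d)$ for boundary $\psi$. Intuitively this is clear: a frozen state occupies a 1-dimensional Krylov sector on the boundary of $T_N$, so the probability distribution $\mcm^t\k\psi$ remains more concentrated (and hence lower-entropy) than a distribution initialized uniformly over $C_d$, which already starts with entropy $\approx \ln|C_d|$ and can only lose concentration as it leaks out of $C_d$. Making this rigorous would use the concavity of entropy together with the fact that the probability of $\psi(t)$ lying outside $C_d$ starting from a boundary state inside $C_d$ is at most that for a uniform initial state in $C_d$ (which is controlled by $\Phi(C_d)$, exactly as in Theorem~\ref{thm:entropy}). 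Everything else is algebraic manipulation of the bound already proved.
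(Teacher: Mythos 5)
Your proposal is correct and follows exactly the same route as the paper's own (very terse) proof, which simply says to fix $d = d_\gamma$ and apply Theorem~\ref{thm:entropy}. You additionally spell out the verification that $\gamma > \gamma_*$ is equivalent to the hypothesis $d_\gamma < v_N L$ of Theorem~\ref{thm:entropy}, and you correctly flag the reliance on the monotonicity $\ob S(t;\psi) \leq \ob S(t;C_d)$ for boundary states $\psi$ — a step the paper also asserts without detailed justification in the paragraph preceding the corollary.
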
 
		\begin{proof}
			This follows directly from Theorem~\ref{thm:entropy} after fixing $d=d_\g$, so that $1-(d/L)\ln(N-1)/\ln(N) = \g/2$. 
		\end{proof}
		
		Note that $|\g-\g_*| = \ct(L^0)$ implies $|d_\g - v_NL| = \ct(L)$, so that $t_S(\g;\psi)$ is exponentially large in $L$.
		The result \eqref{entanglementbound} suggests that a product state at the boundary of the Krylov graph thermalizes in the following two-stage process. In the first stage, the system undergoes a period of rapid entanglement growth as it quickly occupies sectors at depths $d \geq v_N L$, with the entropy reaching a value of $\ob S\approx \ln |C_{v_NL}|$. In the second stage, it undergoes a gradual $\ln(t)$ growth which takes exponentially long in $L$ to saturate to the steady state value. Note that the length of the first stage becomes smaller as $N$ gets larger, on account of the fact that $v_{N\ra\infty} = 1$.

		While there are exponentially many states on the boundary of the Krylov graph, such states are still an exponentially small fraction of all computational basis product states. Nevertheless, a random product state is exponentially likely to have an entanglement saturation time scaling in the same way as states on the boundary of the Krylov graph: 
		\begin{corollary}
			For $\g$ satisfying \eqref{gammaconds}, 
			\be P_{\psi}\[ t_S(\g;\psi) \geq \frac{\g}{2F_{d_\g}}\sqrt L e^{L\frac{(d_\g/L - v_N)^2}2}\] \gtrsim 1- \frac{\sqrt L}{\sqrt\twp d_\g} e^{-L\frac{(d_\g/L-v_N)^2}2},\ee 
			where $\psi$ is sampled uniformly from all computational basis product states. 
		\end{corollary}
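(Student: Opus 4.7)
The plan is to reduce the claim to Corollary~\ref{cor:longts} by showing that a uniformly random product state $\psi$ almost surely lies deep enough in the Krylov graph to inherit the same lower bound on $t_S$. The reduction rests on a Gaussian tail estimate for the depth $D(\psi)$ of the irreducible string of $\psi$, and on a pointwise version of the entropy bound underlying Theorem~\ref{thm:entropy}.

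First I would establish the tail bound on $D(\psi)$. If $\psi = |a_1,\dots,a_L\rangle$ with the $a_i$ i.i.d.\ uniform on $\{1,\dots,N\}$, the associated walk on $T_N$ is a non-lazy simple random walk moving outward with probability $(N-1)/N$ and inward with probability $1/N$, with drift $v_N = 1-2/N$. From the sector-count formula derived in Appendix~\ref{app:sector_dims},
\begin{equation}
P[D(\psi)=d] = \frac{N(N-1)^{d-1}\,|\mck_d^{(L)}|}{N^L} \;\approx\; \frac{2}{\sqrt{2\pi L}}\exp\!\left(-\frac{(d-v_N L)^2}{2L}\right)
\end{equation}
for $d$ of the same parity as $L$, i.e.\ $D(\psi)$ is a discrete Gaussian with mean $v_NL$ and variance $L$. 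Under the hypothesis $1>\gamma>\gamma_*$ with $|\gamma-\gamma_*|=\ct(L^0)$ one has $v_NL - d_\gamma = \ct(L)>0$, and Mills' ratio yields
\begin{equation}
P[D(\psi)<d_\gamma] \;\leq\; \frac{\sqrt L}{\sqrt{2\pi}\,(v_NL-d_\gamma)}\exp\!\left(-\frac{L(d_\gamma/L-v_N)^2}{2}\right),
\end{equation}
which matches the stated bound up to the $\ct(L^0)$ factor relating $d_\gamma$ to $v_N L - d_\gamma$.

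Next I would argue that on the complementary event $\{D(\psi)\geq d_\gamma\}$ the proof of Corollary~\ref{cor:longts} applies pointwise: the first $d_\gamma-1$ letters of $\psi$'s irreducible string single out a specific cone $C_{s_{d_\gamma-1}(\psi)}$ containing $\psi$, and rerunning the Fannes-plus-escape-probability argument behind Theorem~\ref{thm:entropy} with the initial state fixed to $\psi$ rather than averaged over the cone yields the same bound provided the single-state escape probability $P_\psi(\psi(t)\notin C_{d_\gamma})$ is controlled by $t\,\cp(C_{d_\gamma})$. The threshold time $T^* = (\gamma/(2F_{d_\gamma}))\sqrt L\,e^{L(d_\gamma/L-v_N)^2/2}$ is precisely where the resulting entropy bound equals $\gamma L \ln N$, forcing $t_S(\gamma;\psi)\geq T^*$ on this event and, combined with the tail bound above, giving the corollary.

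The main obstacle is this pointwise escape bound: \eqref{pdiffbound} is proved for a uniformly random starting state in $R$, and the reversibility-based rearrangement in its proof extends to a fixed starting state only after paying a factor $|\mch|/|C_{d_\gamma}|\sim(N-1)^{d_\gamma}$ that is exponential in $L$. To preserve the exponent $(d_\gamma/L-v_N)^2/2$ one must exploit that a typical $\psi$ with $D(\psi)\geq d_\gamma$ in fact has $D(\psi)\sim v_NL$, starting Krylov-graph distance $\ct(L)$ inward from the cone boundary; the outward drift $v_N$ of the $T_N$ random walk then makes first passage to depth $d_\gamma$ exponentially unlikely, with the same large-deviation rate $(v_N - d_\gamma/L)^2/2$ as $\cp(C_{d_\gamma})$. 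This exactly compensates the $|\mch|/|C_{d_\gamma}|$ factor and recovers the required pointwise escape bound up to at most polynomial corrections, which are absorbed into the implicit $\gtrsim$ in the statement.
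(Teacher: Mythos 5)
Your proposal reproduces the paper's own argument. The paper's proof has exactly two steps: (i) the Gaussian tail estimate $P_\psi[|\irr(\psi)|\leq d_\gamma]\lesssim\frac{\sqrt L}{\sqrt{2\pi}d_\gamma}e^{-L(d_\gamma/L-v_N)^2/2}$ obtained from the biased-random-walk distribution of the irreducible-string depth, and (ii) the observation that the concentration of $|\irr(\psi)|$ about $v_NL$ lets one transfer the escape bound behind Corollary~\ref{cor:longts} to a typical random $\psi$. Your first paragraph is (i)---and your Mills-ratio computation in fact produces the more natural prefactor $v_NL-d_\gamma$ where the paper writes $d_\gamma$, an $\mathcal{O}(1)$ discrepancy you correctly absorb into the $\gtrsim$---and your second and third paragraphs are (ii). The one thing you do better than the paper is to flag explicitly the pointwise-versus-cone-averaged gap in the escape estimate \eqref{pdiffbound}, which the paper dispatches with the single phrase ``in the same way as in Corollary~\ref{cor:longts}''; your sketch of the first-passage compensation (a typical $\psi$ starts a distance $\Theta(L)$ inward from the cone boundary, so escaping requires the same large-deviation rate that appears in $\cp(C_{d_\gamma})$) is the mechanism the paper is implicitly relying on, though neither you nor the paper carry the first-passage estimate through in full. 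Overall: same route, same strengths, same unproved step.
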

		Thus with unit probability in the $L\ra\infty$ limit, the dynamics initialized from $\k\psi$ takes exponentially long to thermalize.
		\begin{proof}
			As we have seen above, if $\psi$ is sampled uniformly from $\mch$, for large $L$, $|\irr(\psi)|$ will be distributed according to a biased random walk on $\nn$ with velocity $v_N$, and hence $P_\psi[|\irr(\psi)| = d] \approx \frac1{\sqrt{\twp L}} e^{-L(d/L-v_N)^2/2}$. Thus 
			\be P_\psi[|\irr(\psi)| \leq d_\g] \lesssim \frac1{\sqrt{\twp L}} \int_0^{d_\g}dx\,  e^{-L\frac{(x/L-v_N)^2}2} \approx \frac{\sqrt L}{\sqrt\twp d_\g} e^{-L \frac{(d_\g/L - v_N)^2}2},\ee
			which is exponentially small in $L$ since $|d_\g  - v_NL| = \ct(L)$. Thus a randomly drawn $\psi$ is exponentially likely to be contained in $C_{d_\g}$. Furthermore, the concentration of $|\irr(\psi)|$ about $v_NL$ means that the saturation time of this randomly chosen state will be bounded using $1/\cp(C_{d_\g})$ in the same way as in Corollary~\ref{cor:longts}.  \end{proof} 
		
		Thus far we have focused on the entropy of the full system's density matrix when undergoing evolution by the open dynamics $\mcc_t$. In the setting with closed system time evolution performed by an appropriate random unitary circuit $\mcu_t$, 
		\begin{corollary} 
			For a spatial bipartition $AB$, $|A|=|B|=L/2$, define the circuit-averaged bipartite entanglement entropy as 
			\be \ob S_A(t;\psi) \equiv \oEE_{\mcu_t} S(\mathrm{Tr}_B[\mcu_t^\da \proj \psi \mcu_t]).\ee 
			Then $\ob S_A(t;C_d) \equiv \oEE_{\psi \in C_d} \ob S_A(t;\psi)$ satisfies 
			\be \label{sabound}  
			\ob S_A(t;C_d) \lesssim L \ln(N) \( 1 - \frac{d}L \frac{\ln (N-1)}{\ln (N)} + t \frac {2F_d}{\sqrt L} e^{-L\frac{(d/L-v_N)^2}2}\) + c,\ee 	
			whose only difference with respect to the bound \eqref{entanglementbound} is a factor of 2 in the term proportional to $t$. 
		\end{corollary}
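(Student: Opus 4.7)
I would adapt the proof of Theorem~\ref{thm:entropy} almost verbatim to the bipartite setting. By concavity of the von Neumann entropy and linearity of the partial trace,
\begin{equation*}
\ob S_A(t;C_d) \leq S\bigl(\ob\rho_A(t;d)\bigr), \qquad \ob\rho_A(t;d) \equiv \oEE_{\mcu_t}\oEE_{\psi\in C_d}\Tr_B\bigl[\mcu_t^{\dagger}\proj\psi\mcu_t\bigr].
\end{equation*}
As in App.~\ref{app:stochastics}, circuit averaging renders $\ob\rho_A(t;d)$ diagonal in the computational basis on $A$, so that its diagonal entries give a classical probability distribution $p_A(s_A)$ equal to the marginal on $A$ of the circuit-averaged full distribution.

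I would then introduce a subspace $R_A\subset \mch_A$ with $\ln|R_A|\lesssim \ln|C_d|$. A natural candidate is the span of those $|s_A\rangle$ whose partial walk on $T_N$ through sites $1,\dots,L/2$ is compatible with the branch defining $C_d$; the counting from Lemma~\ref{cdlemma}, applied to the depth-$L/2$ truncation of $T_N$, should furnish the desired bound on $\ln|R_A|$. Fannes' inequality applied to $\ob\rho_A$, which lives on $\mch_A$ of dimension $N^{L/2}$, then gives, exactly as in \eqref{fannes}--\eqref{ssigbound},
\begin{equation*}
S(\ob\rho_A) \leq \ln|R_A| + L\ln(N)\,\Tr[\Pi_{R_A}^{\perp}\ob\rho_A(t;d)] + \tfrac{1}{e},
\end{equation*}
reducing the task to bounding $\Tr[\Pi_{R_A}^\perp \ob\rho_A]$.

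The factor of 2 in \eqref{sabound} relative to \eqref{entanglementbound} arises from two additive sources of leakage out of $R_A$ in the effective marginal dynamics. The first is the physical constraint-breaking in the unconstrained region, which drives the full joint distribution out of $C_d$ at the rate $\cp(C_d)$ established in the proof of Theorem~\ref{thm:entropy}. The second is the partial trace over $B$ at the bipartition cut, which, after the RU average, acts on the marginal as an independent local noise channel at site $L/2$ and contributes an additional outflow of the same order. Combining these two contributions and iterating the expansion bound \eqref{pdiffbound} over $t$ steps yields $\Tr[\Pi_{R_A}^\perp \ob\rho_A]\leq 4t\cp(C_d)$. Substituting this into the Fannes bound above, together with the asymptotics for $|C_d|$ and $\cp(C_d)$ from Lemma~\ref{cdlemma}, then produces \eqref{sabound}, with the extra factor of $2$ in front of $F_d$ tracing back precisely to the cut-induced leakage.

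\textbf{Main obstacle.} The delicate technical point is to rigorously establish this additive two-source decomposition of the leakage, and to track constant factors carefully enough to pin down exactly the coefficient $2F_d$. Tracing out $B$ is not literally a depolarizing channel on site $L/2$; its action on the marginal must be reconstructed from the joint circuit-averaged Markov chain and shown to contribute additively to the off-diagonal block of that chain connecting $C_d$ to its complement. The cleanest route is probably to exploit purity ($\ob S_A = \ob S_B$) together with an extension of the Markov-chain derivation of App.~\ref{app:stochastics} to a setup with effective noise applied at both site $L$ and the bipartition site, verifying that each source independently satisfies an expansion-type bound of magnitude $\cp(C_d)$.
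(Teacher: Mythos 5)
Your approach diverges from the paper's at a key juncture and is considerably more complicated than necessary. The paper does \emph{not} introduce a new subspace $R_A\subset\mch_A$ of the half-chain; it reuses the \emph{same} full-chain subspace $R=C_d$ from the proof of Theorem~\ref{thm:entropy} and invokes one inequality you have overlooked: monotonicity of the trace distance under partial trace. With $\s^R \equiv \Pi_R\s\Pi_R/\Tr[\Pi_R\s]$ as before, one has $T(\s_A,\s_A^R)\leq T(\s,\s^R)=2\Tr[\Pi_R^\perp\s]$, and Fannes' inequality applied on $\mch_A$ (of dimension $N^{L/2}$) then gives $|S(\s_A)-S(\s_A^R)|\leq L\ln(N)\Tr[\Pi_R^\perp\s]+1/e$. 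Since ${\rm rank}(\s_A^R)\leq{\rm rank}(\s^R)\leq|C_d|$, the remainder of the proof of Theorem~\ref{thm:entropy} carries over verbatim, with $\Tr[\Pi_R^\perp\s]\leq t\cp(C_d)$ exactly as before. This sidesteps entirely the need to construct a half-chain subspace, to reason about what the partial trace over $B$ does to the effective marginal dynamics, or to posit an effective noise channel at the bipartition cut.

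The mechanism you offer for the factor of $2$ --- two additive sources of leakage, one from the bath and one from tracing out $B$ --- is not where the factor comes from, and your key step $\Tr[\Pi_{R_A}^\perp\ob\rho_A]\leq 4t\cp(C_d)$ has no clear derivation: tracing out $B$ is a deterministic coarse-graining of the joint distribution, not an additional stochastic process that can be assigned its own expansion-type escape rate, so the ``cut-induced leakage'' you want to add is not a well-defined quantity. This is the genuine gap in your proposal, and it is precisely what the monotonicity trick circumvents. Moreover, if you actually track constants in the paper's own derivation, the half-chain Fannes prefactor is $\ln(N^{L/2})=(L/2)\ln N$ rather than $L\ln N$, which makes the $t$-dependent term a factor of two \emph{smaller} than in \eqref{entanglementbound}, not larger; the ``$2F_d$'' in the stated corollary is best read as a bookkeeping slip (or a deliberately loose bound) rather than something the proof produces, so it is not a constant worth engineering an argument to hit. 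The substantive content, which the monotonicity argument delivers cleanly, is simply that $\ob S_A(t;C_d)$ obeys an exponential-in-$L$ saturation-time bound of the same form as $\ob S(t;C_d)$.
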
 
		
		\begin{proof} 
			The reasoning is almost exactly the same as in the proof of Theorem~\ref{thm:entropy}. The only difference is in the anologue of \eqref{fannes} that one obtains, which instead reads
			\be \label{fanneshalfchain} |S(\s_A) - S(\s_A^R)| \leq T(\s_A,\s^R_A)\ln( N^{L/2})+ \frac1e \leq L\Tr[\Pi_R^\perp \s]\ln N + \frac1e.\ee 
			This follows from \eqref{fannes} after using the fact that the trace distance is monotonically decreasing under partial trace, so that the reduced density matrices $\s_A, \s_A^R$ of the $\s,\s^R$ appearing in \eqref{fannes} satisfy 
			\be T(\s_A,\s_A^R) \leq 2 \Tr[\Pi_R^\perp \s].\ee 
		\end{proof} 
		Note that the maximum possible value of $\ob S_A(t;C_d)$ is $\frac L2\ln N$, which is smaller than the $t=0$ value of \eqref{sabound} only if
		\be \ln (|C_{v_NL}|) = L \ln(N) \( 1- \frac dL \frac{\ln(N-1)}{\ln(N)}\) < \frac L2 \ln N \implies  N \geq 5,  \ee 
		from which we conclude that \eqref{sabound} provides a meaningful bound only if $N\geq 5$. 

		\sss{Operator relaxation times} 
		
		We now examine how operator expectation values diagnose the long relaxation times computed above. 
		For an operator $\mco$ of unit norm and a computational basis product state $\k\psi$, we define the relaxation time $t_{\mco}(\g;\psi)$ by the time needed for the expectation value of $\mco$ in the circuit-averaged state $ \ob\r(t;\psi) \equiv  \oEE_{\mcc_t}\mcc_t(\proj\psi)$ to relax to within an amount $\g$ of its circuit-averaged equilibrium value, where we require that $0<\g<1$, $\g = \ct(L^0)$. Since the circuit-averaged density matrix at long times is simply $\unit / |\mch|$, this definition reads  
		\be t_\mco(\g;\psi) \equiv \min\{t \, : \, |\lan \mco\ran_{\ob\r(t;\psi)} - \frac1{|\mch|}\Tr[\mco] |\leq \g \}.\ee 
		
		It is not obvious that operators with exponentially long relaxation times exist. If one was willing to give up locality, a naive guess would be to let $\k\psi$ be a state at the edge of the Krylov graph, and to set $\mco = \proj{\wt \psi}$,  
		where $\k{\wt\psi}$ is any product state on the edge of the tree whose Hamming distance with $\k\psi$ is $L$. In this case $\lan \mco \ran_{\ob \r_\psi(t)}$ vanishes at $t=0$ and indeed takes a time of $\sim t_{\rm rel}$ to increase to its equilibrium value, but that value is $\Tr[\proj{\wt\psi}]/|\mch| = N^{-L}$, whose smallness means that $t_{\proj{\wt \psi}}(\g;\psi) = 0$ by virtue of our requirement that $\g = \ct(L^0)$. 
		
		Fortunately, there nevertheless exist local operators whose relaxation times are exponentially long. These are the normalized charge operators 
		\be Q_a = \frac2L\sum_i (-1)^i\proj{a}_i.\ee  
		Indeed, let $\psi_{{\rm max},a}$ be a product state with maximal $Q_a$ charge $\lan Q_a\ran_{\psi_{{\rm max},a}} =1$. For concreteness, we will fix $\psi_{{\rm max},a} = (ba)^{L/2}$ where $b = a +1 \mod N$. 
		In this section, we will prove the following theorem: 
		\begin{theorem}\label{thm:longcharge} 
			Let $0<\g<v_N/2$, $\g = \ct(L^0)$. Then the relaxation time of $Q_a$ in the state $\k{\psi_{{\rm max},a}}$ is exponentially long: 
			\be \label{tqadef} t_{Q_a} (\g;\psi_{{\rm max},a}) \gtrsim D_\g \sqrt L e^{L \frac{(2\g-v_N)^2}2},\ee 
			where the $O(1)$ constant $D_\g$ is defined as 
			\be D_\g \equiv \frac{2(1+2\g)(N-1)}{N^2 \sqrt\twp} e^{-(2\g-v_N)(1-v_N)}.\ee 
		\end{theorem}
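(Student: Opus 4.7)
My plan is to adapt the cone-based bottleneck strategy used in the proof of Theorem~\ref{thm:entropy}, now tracking the expectation of $Q_a$ rather than the entanglement entropy. Let $\bfs_0 = (ba)^{L/2}$ denote the irreducible string of $\psi_{\max,a}$, and define $R \equiv C_{\bfs_0[1:d-1]}$ to be the cone on the Krylov graph rooted at the depth-$(d{-}1)$ ancestor of $\mck_{\bfs_0}$ along the path from the origin, with $d$ chosen to be approximately $2\gamma L + 1$. Then $\mck_{\bfs_0} \subset R$, so $\psi_0 \in R$, and Lemma~\ref{cdlemma} together with the hypothesis $\gamma < v_N/2$ yields an exponentially small expansion $\cp(R) \approx F_d/(2\sqrt L) \cdot e^{-L(2\gamma - v_N)^2/2}$.

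\textbf{Charge structure on $R$ and Fannes-type reduction.} Every $\psi \in R$ has an irreducible string whose first $d-1$ characters coincide with $\bfs_0[1:d-1]$; since $\bfs_0$ places an $a$ at each even position, this fixed prefix contributes exactly $\approx 2\gamma$ to the normalized charge $Q_a$, while the random continuation has mean zero and $O(1/\sqrt L)$ fluctuations by a standard concentration estimate for biased random walks on $T_N$. In particular, the subset of $R$ on which $Q_a < 2\gamma - O(1/\sqrt L)$ has exponentially small $\pi_R$-measure. Because $Q_a$ is constant on each Krylov sector and $\|Q_a\|_\infty \le 1$, the projection/truncation trick used in Theorem~\ref{thm:entropy} (with Fannes replaced by the H\"older bound $|\Tr[Q_a(\s-\s^R)]|\le \|\s-\s^R\|_1$) yields the inequality
\begin{equation}
	\lan Q_a(t)\ran \ge 2\gamma - (1+2\gamma)\,P_{\psi_0}(\psi(t)\notin R) - O(1/\sqrt L).
\end{equation}
Demanding the left-hand side exceed $\gamma$ reduces the theorem to showing that $P_{\psi_0}(\psi(t)\notin R) \lesssim \gamma/(1+2\gamma)$ for times polynomially below $1/\cp(R)$.

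\textbf{Exit probability from a specific initial state -- the main obstacle.} The main technical hurdle is upgrading the averaged escape bound~\eqref{pdiffbound}, which assumes the initial state is sampled uniformly from $R$, to a pointwise bound starting from the specific state $\psi_0$. I plan to handle this using the reversibility of $\mcm_{\rm nonloc}$ under the uniform stationary measure: the total probability current across $\partial R$ per unit time equals $\pi(R)\cp(R)$, and a time-reversal plus union-bound argument over $s = 0,\ldots,t-1$ promotes this into the pointwise estimate $P_{\psi_0}(\tau_R \le t) \lesssim t\,\cp(R)$, with only an $O(1)$ multiplicative loss reflecting that $\psi_0$ sits at the tip of the cone $R$ -- i.e., as far as possible from $\partial R$ -- rather than atypically close to the boundary. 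Substituting this into the previous inequality and then inserting the asymptotic form of $\cp(R)$ recovers the claimed lower bound $t_{Q_a}(\gamma;\psi_{\max,a}) \gtrsim D_\gamma\sqrt L\, e^{L(2\gamma - v_N)^2/2}$, with the explicit $\gamma$- and $N$-dependence of $D_\gamma$ arising from bookkeeping the prefactor $F_d$.
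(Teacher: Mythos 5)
Your overall strategy matches the paper's: both identify the cone $A = C_{(ba)^{L\eta/2}}$ with $\eta=2\gamma$, bound the escape probability via $\Phi(A)=\Phi(C_{L\eta+2})$, and treat the escaped part crudely with $\|Q_a\|_\infty\le1$. The difference is in how the charge of the part of the distribution that \emph{stays} in $A$ is controlled. You argue that the set of $\psi\in A$ with $Q_a<2\gamma-O(1/\sqrt L)$ has small $\pi_A$-measure, and try to conclude via a H\"older/Fannes bound. This step has a real gap: small $\pi_A$-measure of the low-charge tail does not control $\Tr[Q_a\,\sigma^R]$ for the time-evolved conditional distribution $\sigma^R=\Pi_A\mcm^t\psi_0\,\Pi_A/\Tr[\Pi_A\mcm^t\psi_0]$, because nothing a priori prevents $\sigma^R$ from concentrating precisely on that tail (indeed, a state wandering toward $\partial A$ before escaping is biased toward low $Q_a$). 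The paper closes this gap with a dedicated pairing lemma: for each $\psi'\in A$ it constructs a partner $\tilde\psi'$ by cyclically permuting the suffix $s_{\psi'}$ of the irreducible string, observes that $T$ swaps sublattices and hence flips the suffix charge, and then uses the equality $\EE_{\psi\in A}P(\psi(t)=\psi'|\psi(0)=\psi)=\EE_{\psi\in A}P(\psi(t)=\tilde\psi'|\psi(0)=\psi)$ to show the ensemble-averaged conditional charge is exactly $\geq\eta$, \emph{for all $t$}, without any concentration input. Your argument would need an analogue of that identity (or some other mechanism preventing $\sigma^R$ from collecting in the low-charge tail) to go through.

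On the second issue — upgrading from the escape bound $P(\psi(t)\in A^c\,|\,\psi(0)\in A)\le t\Phi(A)$ (uniform initial state) to a pointwise bound for the specific $\psi_0=\psi_{\max,a}$ — you are right to flag this as the main obstacle, and in fact the paper's written proof does not explicitly address it either: the bound is established for $\lan Q_a\ran_A(t)=\EE_{\psi\in A}\lan Q_a\ran_{\bar\rho(t;\psi)}$ and the transfer to $\psi_{\max,a}$ is simply asserted by setting $\eta=2\gamma$. The fix the authors appear to have in mind (visible only in a commented-out passage) is a monotonicity statement: deeper states escape more slowly, so the initial state at the tip of the cone does at least as well as the ensemble average. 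Your proposed reversibility/time-reversal argument by itself gives $P_{\psi_0}(\tau_A\le t)\le t\Phi(A)/(\pi_A(\psi_0)|A|)\cdot\pi(A)$-type bounds with a prefactor that is exponentially bad unless you feed in exactly the same kind of monotonicity input; "$O(1)$ loss because $\psi_0$ is far from $\partial A$" is the right intuition but is not supplied by reversibility alone. So you have correctly identified the weakness, but the mechanism you propose does not yet close it.
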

		
		We will in fact prove a slightly more general version of this theorem which allows for more freedom in the choice of the initial state. 

		\begin{figure*}
			\centering 
			\includegraphics[width=\tw]{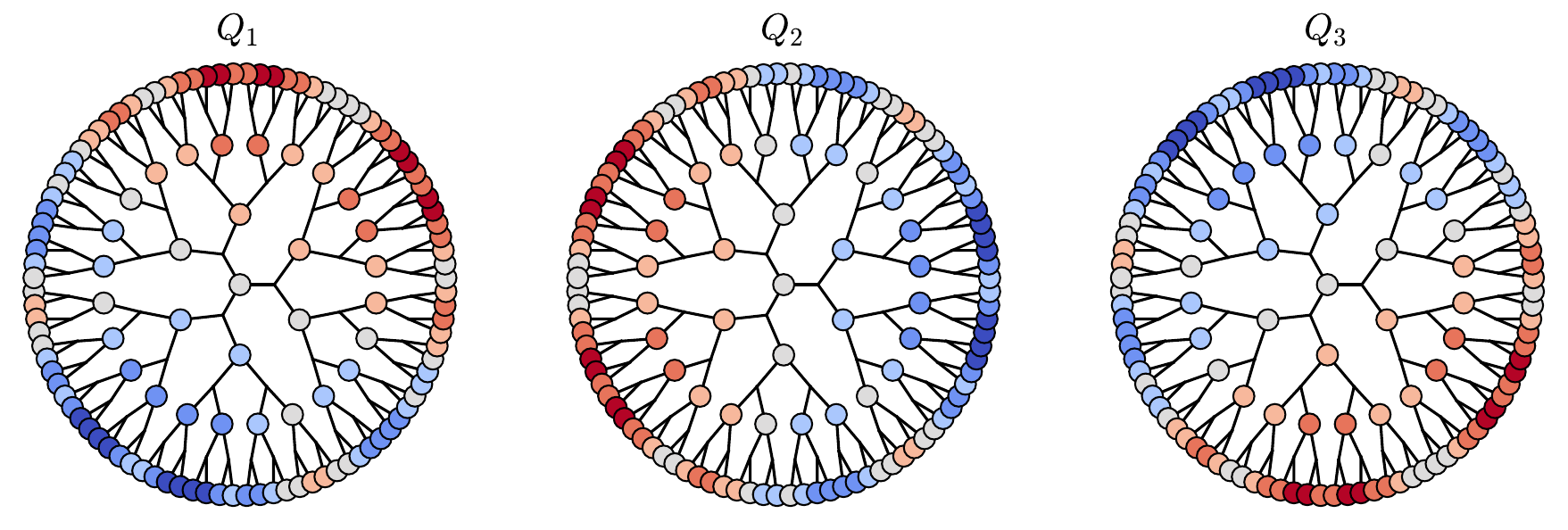}  
			\caption{\label{fig:sector_charges} The quantum numbers of each Krylov sector under the symmetries $Q_a$ for $N=3$ and a system of size $L=6$. The darkest red sectors have the maximum value of $Q_a = L/2$; the darkest blue have $Q_a = -L/2$. } 
		\end{figure*} 

		\begin{proof} 
			Our proof strategy is to define a subspace $A$ of states in $\mch$ which all have a large nonzero expectation value of $Q_a$, and then argue that a system initialized in $A$ typically takes an exponentially long time to move to the complement $A^c$ of $A$ in $\mch$. While knowing the values of $Q_a$ in a given product state do not allow one to distinguish where in the Krylov graph that state lies, the distribution of $Q_a$ charges in the Krylov graph is not homogeneous (see Fig.~\ref{fig:sector_charges}), and this can be used to select out an appropriate choice of $A$. Fixing $b \equiv (a + 1) \mod N$ as above, and assuming that $L\eta \in 2 \nn$ in what follows for simplicity of notation, the space we choose is the cone  
			\be A = C_{(ba)^{L\eta/2}} =  \{ \psi \, : \, \irr(\psi) = (ba)^{\lfloor L\eta/2\rfloor } \times \S^*\},\ee 
			where $\S^*$ is the set of the irreducible strings of all the product states of length less than $L(1-\eta)$, namely $A$ is the space of all product states whose irreducible strings have the first $2\lfloor L\eta/2\rfloor$ elements equal to $(ba)^{L\eta/2}$ (c.f. \eqref{conedef}).
			
			Let us compute the expected value of $Q_a$ obtained after evolving a random state in $A$ for time $t$, 
			\be \lan Q_a\ran_A(t) \equiv  \oEE_{\psi \in A} \lan Q_a \ran_{\ob\r(t;\psi)} .\ee 
			This is 
			\bea \label{qalpht} \lan Q_a\ran_A(t) & = \oEE_{\psi \in A} \( \sum_{\psi' \in A} P(\psi(t) = \psi' | \psi(0) = \psi) \lan Q_a\ran_{\psi'} + \sum_{\psi' \in A^c} P(\psi(t) = \psi' | \psi(0) = \psi) \lan Q_a\ran_{\psi'}  \) \\ 
			& \geq \oEE_{\psi \in A} \( \sum_{\psi' \in A} P(\psi(t) = \psi' | \psi(0) = \psi) \lan Q_a\ran_{\psi'} \) -  P(\psi(t) \in A^c  \, | \,  \psi(0) \in A), \eea 
			where we used $\min_{\psi \in \mch} \lan Q_a \ran_\psi = -1$. To deal with the first term, we need 
			
			\begin{lemma} 
				The average charge of states which begin in $A$ and remain in $A$ at time $t$ satisfies 
				\be \oEE_{\psi \in A}  \sum_{\psi' \in A} P(\psi(t) = \psi' \, |\, \psi(0) = \psi) \lan Q_a\ran_{\psi'}  \geq  \eta(1- P(\psi(t) \in A^c \, | \, \psi(0) \in A)).\ee 
			\end{lemma}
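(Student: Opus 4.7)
The plan is to lift the chain from $\mch$ to the Krylov graph, where sectors $\mck \in A$ carry weight $|\mck|$. Since the non-local chain equidistributes within each sector instantaneously, the lifted transition kernel $P^{\mathrm{Kr}}_t(\mck,\mck')$ is well-defined, and a short count of pairs $(\psi,\psi')$ with $\psi\in\mck$, $\psi'\in\mck'$ agreeing on sites $1,\ldots,L-1$ is manifestly symmetric in $\mck \leftrightarrow \mck'$, yielding the detailed balance $|\mck|P^{\mathrm{Kr}}_t(\mck,\mck') = |\mck'|P^{\mathrm{Kr}}_t(\mck',\mck)$ even though $\mcm_{\rm nonloc}$ itself need not be symmetric on $\mch$. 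Since $\lan Q_a\ran_{\psi'}$ depends only on $\mck'=\mck_{\irr(\psi')}$ (as pair-flips conserve $Q_a$), applying reversibility to swap summation indices recasts the LHS of the lemma as
\be F = \frac{1}{|A|}\sum_{\mck'\in A}|\mck'|\,Q_a(\mck')\,P^{\mathrm{Kr}}_t(\mck',A). \ee

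I would then decompose $Q_a(\mck_{(ba)^{L\eta/2}\cdot\tau}) = \eta + q(\tau)$ with $q(\tau) = \tfrac{2}{L}\sum_{j=2}^{|\tau|}(-1)^j\,\mathbb{1}[\tau_j = a]$; this decomposition is exact because $L\eta$ is even, which preserves position parities of $\tau$ upon embedding into $\irr(\psi)$. Since the same detailed-balance manipulation gives $p_{\mathrm{stay}} = \frac{1}{|A|}\sum_{\mck'\in A}|\mck'|\,P^{\mathrm{Kr}}_t(\mck',A)$, the lemma reduces to
\be S := \sum_{\mck'\in A}|\mck'|\,q(\tau(\mck'))\,P^{\mathrm{Kr}}_t(\mck',A) \ge 0. \ee
The crucial structural observation is that both $|\mck'|$ and $P^{\mathrm{Kr}}_t(\mck',A)$ depend only on the depth $d'=|\tau(\mck')|$ of $\mck'$ below $v_0:=v_{(ba)^{L\eta/2}}$: the former by the depth-only symmetry of $T_N$, and the latter because the Krylov-graph neighborhood of $A$ is homogeneous at each depth (a depth-$d'$ sector has a fixed count of grandparent/sibling/grandchild neighbors lying inside versus outside $A$) and the elementary transition rates depend only on this local type. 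Factoring out these depth-only weights reduces the task to the combinatorial claim $\sigma(d'):=\sum_{|\tau|=d',\,\tau_1\neq a}q(\tau) \ge 0$ for every $d'\ge 1$.

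For the combinatorial step I would use the transfer matrix $T=J-I$ on the alphabet $\{1,\ldots,N\}$, for which $(T^{j-1})_{x,a}$ counts irreducible length-$j$ strings with $\tau_1=x$ and $\tau_j=a$. Its spectral decomposition $T = (N-1)P_{\mathbf 1} - (I-P_{\mathbf 1})$ with $P_{\mathbf 1}=J/N$, combined with the $(N-1)^{d'-j}$ free continuations, gives
\be N_j(d') = \frac{(N-1)^{d'}}{N} - \frac{(-1)^{j-1}(N-1)^{d'-j+1}}{N}. \ee
Summing $(-1)^j N_j(d')$ from $j=2$ to $d'$ and separating the two eigenvalue contributions then yields
\be \sigma(d') = \frac{2}{LN}\Big[(N-1)^{d'}\,\mathbb{1}[d'\text{ even}] + \frac{(N-1)^{d'}-(N-1)}{N-2}\Big] \ge 0, \ee
closing the argument. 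I expect the most delicate step to be verifying the depth-only property of $P^{\mathrm{Kr}}_t(\mck',A)$: while the local transition rates are clearly depth-homogeneous within $A$, formalizing that the $t$-step survival probability is branch-independent requires a careful use of the automorphism group of the sub-$T_N$ rooted at $v_0$, since these automorphisms are not induced by any global permutation of the spin alphabet.
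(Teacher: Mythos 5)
Your proof is correct, but it takes a genuinely different route from the paper's. The paper also reduces the lemma to showing that the "weight" $w(\psi') = \oEE_{\psi\in A}P(\psi(t)=\psi'\,|\,\psi(0)=\psi)$, viewed as a measure on $\psi'\in A$, has $Q_a$-average at least $\eta$; but instead of a closed-form depth-resolved calculation it uses an iterated pairing scheme. Writing $\irr(\psi') = (ba)^{L\eta/2}\cdot s_{\psi'}$, it pairs $\psi'$ with a partner $\wt\psi'$ whose tail is the cyclic shift $T(s_{\psi'})$, invokes the equality $w(\psi') = w(\wt\psi')$ (which, as in your proof, rests on the depth-homogeneity of transitions on the Krylov graph — the cyclic shift is not induced by any spin permutation), and observes that each such pair has average charge exactly $\eta$ because $T$ interchanges sublattices and hence flips the tail's contribution to $Q_a$; unpaired states (those with $[s_{\psi'}]_2 = a$) are handled recursively, generating pairs of average charge $\eta + 2n/L$. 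Your approach replaces this recursive pairing with a single identity: the detailed-balance relation $|\mck|P^{\mathrm{Kr}}_t(\mck,\mck') = |\mck'|P^{\mathrm{Kr}}_t(\mck',\mck)$ (which you justify correctly by the symmetric pair-count, and which holds for the lumped chain precisely because $\Pi_{\rm unif}$ makes it a genuine Markov chain on $G_\mck$ whose stationary measure is $|\mck|$) flips the sum to a manifestly depth-only form $\propto |\mck'|\,Q_a(\mck')\,P^{\mathrm{Kr}}_t(\mck',A)$, after which the transfer-matrix computation $\sigma(d') = \frac{2}{LN}\bigl[(N-1)^{d'}\mathbb{1}[d'\ \text{even}]+\frac{(N-1)^{d'}-(N-1)}{N-2}\bigr]\ge 0$ closes things in one stroke. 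I checked the transfer-matrix arithmetic and it is right, including the parity bookkeeping that makes the $\eta + q(\tau)$ split exact.

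Two remarks on relative merits. Your route trades the paper's recursive pairing (which requires care: $T(s_{\psi'})$ need not be an allowed irreducible string when $(s_{\psi'})_{|s_{\psi'}|} = (s_{\psi'})_1$, a case the paper's sketch does not explicitly address) for a clean global sum, at the cost of an up-front justification of detailed balance. You flag the depth-only property of $P^{\mathrm{Kr}}_t(\mck',A)$ as the delicate step; this is indeed the common load-bearing assumption of both proofs, and your automorphism argument for it is sound: all sector-to-sector transition rates in $G_\mck$ are functions only of the pair of $T_N$-depths and the local relation (grandparent/sibling/grandchild), the rooted subtree beneath $v_0$ has automorphisms acting transitively on each depth level and preserving those data, and escape from $A$ is possible only through $v_0$, so the $t$-step survival probability is a function of depth alone. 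Strictly speaking the detailed-balance step is not essential — one could argue directly, as the paper implicitly does, that $w(\psi')$ itself depends only on $|\irr(\psi')|$ — but it does make the depth-only dependence of the final weight $|\mck'|P^{\mathrm{Kr}}_t(\mck',A)$ more transparent than that of the start-averaged $w(\psi')$.
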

			\begin{proof}
				This is true because the average value of $Q_a$ for states in $A$ is at least $\eta$. Showing this is  complicated slightly by the fact that there exist states in $A$ with $\lan Q_a\ran_\psi$ as small as $\eta  - (1-\eta) = 2\eta-1$. Our strategy will be to show that these states always pair up with states of larger charge to give an average charge bounded below by $\eta$. 
				
				To this end, for each $\psi \in A$, write $\irr(\psi) = (ba)^{L\eta/2} \times s_\psi$ for some length $|\irr(\psi)|-L\eta$ irreducible string $s_\psi$. 
				Suppose first that the second entry of $s_\psi$ is not equal to $a$, $[s_\psi]_2 \neq a$. Then define $\wt \psi$ as the state whose irreducible string differs from $\irr(\psi)$ by a cyclic permutation on the last $|s_\psi|$ entries: 
				\be  s_{\wt \psi} = T(s_\psi),\ee 
				where $T$ is the cyclic permutation $T(a_1\cdots a_n) = a_2a_3\cdots a_{n-1} a_1$. Note that $\irr(\wt\psi) = \irr(\psi) \times s_{\wt \psi}$ is an allowed irreducible string since $[s_{\wt \psi}]_2 \neq a$ by assumption. Thus
				\be \oEE_{\psi' \in A} P(\psi(t) = \psi \, |\, \psi(0) = \psi') = \oEE_{\psi' \in A} P(\psi(t) = \wt \psi \, |\,\psi(0) = \psi').\ee 
				We may thus write 
				\bea\label{qqin} \oEE_{\psi \in A}  \sum_{\psi' \in A\, } P(\psi(t) = \psi' | \psi(0)  = \psi) \lan Q_a\ran_{\psi'} & =
				\oEE_{\psi \in A}\sum_{\psi' \in A\, \, : \, [s_{\psi'}]_2 \neq a}  P(\psi(t) = \psi' | \psi(0) = \psi) \frac{ \lan Q_a\ran_{\psi'} + \lan Q_a \ran_{\wt \psi'} }2 \\ & \qq + \oEE_{\psi \in A}\sum_{\psi' \in A \, : \, [s_{\psi'}]_2 = a } \oEE_{\psi \in A}P(\psi(t) = \psi'\, | \, \psi(0) = \psi) \lan Q_a \ran_\psi, \eea 
				The point of writing things like this is that $(\lan Q_a\ran_{\psi'} + \lan Q_a \ran_{\wt \psi'} )/2= \eta $, simply because the $Q_a$ charge of $s_{\wt \psi}$ is opposite to that of $s_{\psi}$ (on account of the fact that $T$ interchanges sublattices and thus $TQ_a T\inv = -Q_a$), meaning that $(\lan Q_a\ran_{\psi'} + \lan Q_a \ran_{\wt \psi'} )/2$ receives contributions only from the first $L\eta$ characters of $\irr(\psi')$, which carry a $Q_a$ charge of $\eta$. Therefore 
				\bea\label{qqin2} \oEE_{\psi \in A}  \sum_{\psi' \in A\, } P(\psi(t) = \psi' | \psi(0)  = \psi) \lan Q_a\ran_{\psi'} & = \eta 
				\oEE_{\psi \in A}\sum_{\psi' \in A\, \, : \, [s_{\psi'}]_2 \neq a}  P(\psi(t) = \psi' | \psi(0) = \psi)  \\ & \qq +\oEE_{\psi \in A} \sum_{\psi' \in A \, : \, [s_{\psi'}]_2 = a }  P(\psi(t) = \psi'\, | \, \psi(0) = \psi) \lan Q_a \ran_\psi. \eea 
				The second summand in \eqref{qqin2} can be dealt with similarly. Since this summand contains only states with $[s_{\psi'}]_2 = a$, $(ba)^{L\eta/2} \times T(s_{\psi'})$ is not an allowed irreducible string. We thus instead split up the states in the sum as $\psi' = (ba)^{L\eta/2} ca \times p_{\psi'}$ for some $c \neq a$, where $p_{\psi'}$ is a length $L-L\eta/2-2$ irreducible string with $[p_{\psi'}]_1 \neq a$. We can then pair up the subset of these states with $[p_{\psi'}]_2 \neq a$ in the same manner as was done above by defining an appropriate $\wt \psi'$ obtained from cyclically shifting $p_{\psi '}$; each pair appearing in the sum is then seen to have an average $Q_a$ charge of $\eta  +2/L$. Repeating this process, the successive paired states one generates are all seen to have average charge $\eta  + 2n/L$, with $0<n \leq  (1-\eta)L/2$. Since these average charges are all strictly greater than $\eta $, we obtain the bound 
				\bea  \oEE_{\psi \in A}  \sum_{\psi' \in A\, } P(\psi(t) = \psi' | \psi(0)  = \psi) \lan Q_a\ran_{\psi'} & \geq \eta\oEE_{\psi \in A} \sum_{\psi' \in A}P(\psi(t) = \psi' \, | \, \psi(0) = \psi) \\ 
				& = \eta P(\psi(t) \in A \, | \, \psi(0) \in A) \\ 
				& = \eta  (1 - P(\psi(t) \in A^c \, | \, \psi(0) \in A)),\eea 
				which is what we wanted to show. 
			\end{proof}
			
			This result lets us write \eqref{qalpht} as 
			\bea \label{qbound} \lan Q_a\ran_A(t) & \geq \eta - (1+\eta) P(\psi(t) \in A^c \, |\, \psi(0) \in A)),\eea 
			with the second term being bounded from above by $t\cp(A)$ as in \eqref{pdiffbound}. Since $A$ is a union of cones, the expansion of $A$ is simply
			\be \cp(A) = \cp(C_{L\eta+2}),\ee 
			and so 
			\bea  \lan Q_a\ran_A(t) & \geq \eta - t(1+\eta) \cp(C_{L\eta+2}).\eea 
			As we saw in \eqref{approxphicd}, $\cp(C_d)$ is exponentially small only when $d < v_NL, |d/L-v_N| = \ct(L^0)$. To get a long relaxation time, we thus will need to assume that 
			\be \eta < v_N,\ee 
			with $|\eta - v_N| = \ct(L^0)$. If this is the case, we conclude from \eqref{approxphicd} that 
			\bea  \lan Q_a\ran_A(t) & \geq \eta - \frac{t}{D_{\eta/2} \sqrt L e^{L \frac{(\eta-v_N)^2}2}},\eea 
			with the constant 
			\be D_{\eta/2} \equiv  \frac{N^2 \sqrt{\twp}}{2(1+\eta)(N-1)} e^{-(\eta-v_N)(1-v_N)}.\ee 
			The advertised bound on $t_{Q_a}(\g;\psi_{{\rm max},a})$ is then obtained by setting $\eta = 2\g$. Notably, when $\gamma\to 0$, $\lan Q_a \ran_A(t)\gtrsim 2\gamma - t\Phi(C_2)$, so that the lower bound of $t_Q$ is equal to that of the relaxation time, i.e.,
			\be t_{Q_a}(\gamma\to 0;\psi_{\rm max, a})\gtrsim 1/\Phi(C_2)\sim L^{3/2}\rho_3^{-L}. \ee
		\end{proof}

		\ss{$N=2$} 
		
		We now compute the expansion for $N=2$. In this case we expect a large expansion---and hence a fast mixing time---due to the absence of strong Hilbert space bottlenecks. 
		
		In this subsection we will find it most convenient to label the Krylov sectors by their charge $Q$, defined as 
		\be Q \equiv \sum_i (-1)^i Z_i,\ee 
		with $\lan Q\ran_\psi$ measuring the endpoint of the random walk defined by the product state $\psi$. 
		For a length-$L$ system, there are thus $L+1$ sectors $\mck_Q$, with $Q \in \{ -L,-L+2,\cdots,L-2,L\}$ (in the language of the charges $Q_a$ discussed for $N>2$, $Q = Q_1-Q_2$). The dimensions of these sectors are accordingly determined as 
		\be |\mck_Q| = {L \choose \frac{L+Q}2}.\ee 
		
		We now argue that charge relaxation in this case is polynomially fast. In particular, we will argue that the spectral gap of $\mcm_{\rm nonloc}$ satisfies 
		\be \frac1{\pi L} \leq \De_{\mcm_{\rm nonloc}} \leq \sqrt{ \frac{8}{\pi L}}.\ee 
		We will give a rigorous proof of the upper bound, and a slightly less rigorous one for the lower bound. As in our analysis of the $N>2$ case, the lack of rigour for the lower bound comes from making the assumption that the subset $S\subset\mch$ with minimal expansion is determined by a cut which passes ``between'' two Krylov sectors, rather than cutting ``within'' a given sector. Even if this is not true, we expect the minimal expansion to be asymptotically the same as the minimum expansion of a region defined by making only inter-sector cuts.

		With this assumption, it is straightforward to see that the $S$ with minimal expansion can be taken without loss of generality to be of the form
		\be S_{Q} = \bigcup_{Q' \geq Q} \mck_{Q'},\ee 
		with the minimal $S_Q$ having $Q \geq 0$ without loss of generality; in the language of our $N>2$ discussion this is simply a cone $C_Q$. To find the minimal $S_Q$, we use the recursion relation 
		\be |\mck_Q| = |\p S_Q| + |\p S_{Q+2}|,\ee 
		which holds for all $Q \geq 0$ and follows from the fact that each state in $\mck_Q$ is connected to exactly one state in  $\mck_{Q+2}\cup\mck_{Q-2}$. Solving this recursion relation for $|\p S_Q|$ yields 
		\be \label{recurs}|\p S_{Q}| = \sum_{Q' =Q}^{L} (-1)^{\frac{Q'-Q}2} |\mck_{Q'}|,\ee  
		where the sum accordingly only includes those $Q'$ with the same parity as $L$. 
		Now the difference in expansions between adjacent sectors is 
		\be \cp(S_{Q+2}) - \cp(S_Q) = \frac{|S_Q| |\p S_{Q+2}| - |S_{Q+2} ||\p S_Q|}{|S_Q||S_{Q+2}|},\ee 
		which can be evalulated using \eqref{recurs} and the dimensions $|\mck_Q|$, with some unilluminating algebra showing that the RHS is always positive, meaning that $\cp(S_Q)$ is minimized on the smallest value of $Q$ (viz. $Q_{min} =L\mod2$). Taking $L \in 2\nn+1$ for notational simplicity, this gives 
		\be \cp_* =\cp(S_1) =2^{1-L} \sum_{k=0}^{(L-1)/2} (-1)^k {L \choose \frac{L+1}2+k} = \frac{L+1}{L2^L} {L \choose \frac{L+1}2} \approx \sqrt{\frac{2} \pi L}, \ee 
		where we used $|S_1| = |\mch|/2$ in the second equality, $\sum_{k=0}^l (-1)^k {2l+1\choose l+k+1} = \frac{l+1}{2l+1} {2l+1\choose l+1}$ in the third, and Stirling's approximation in the fourth. Cheeger's inequality thus tells us that 
		\be \frac1{\pi L} \leq \De_{\mcm_{\rm nonloc}} \leq \sqrt{ \frac{8}{\pi L}},\ee 
		which is what we wanted to show. 
		
		\begin{figure}
			\centering 
			\includegraphics{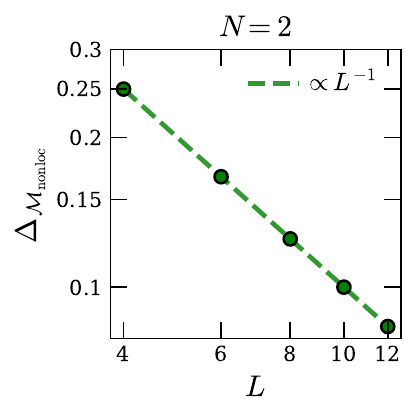} \includegraphics{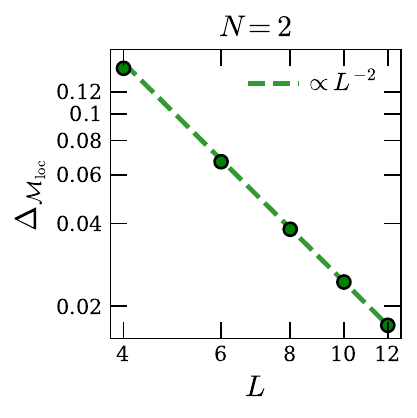} 
			\caption{\label{fig:N2_markov_gap} Markov gaps for $N=2$, computed with exact diagonalization. {\it Left:} The gap $\De_{\mcm_{\rm nonloc}}$ of the non-local chain (green circles) fit to the analytic bound $\propto L^{-1}$ (dashed line). {\it Right:} As left, but for the local chain $\mcm_{\rm loc}$, and fit to $\propto L^{-2}$. }
		\end{figure}
		
		Calculating the gaps of $\De_{\mcm_{\rm loc}}, \De_{\mcm_{\rm nonloc}}$ exactly for small values of $L$ with exact diagonalization yields the scaling shown in Fig.~\ref{fig:N2_markov_gap}. Even for very small values of $L$, the scaling of $\De_{\mcm_{\rm nonloc}}$ fits very well to the linear lower bound of $\sim L\inv$. The gap of the local chain $\mcm_{\rm loc}$ is (as expected) observed to scale slower by one power of $L$ as $\De_{\mcm_{\rm loc}} \sim L^{-2}$, consistent with the simulations of Fig.~\ref{fig:T_relax}. These finding are consistent with those reported in Ref.~\cite{gao2023information}.

		\section{Lack of thermalization in Temperley-Lieb models perturbed by a single-site impurity} \label{App:TL}
		
		In this section we show that a certain class of $SU(N)$ symmetric models---referred to as Temperely-Lieb Hamiltonians in what follows---are such that they remain fragmented even when perturbed by an arbitrary term that has support only on a {\it single site}. 		
		The Hamiltonians we consider are of the form \cite{batchelor1991temperley,aufgebauer2010quantum,moudgalya2022hilbert}
		\be H_{TL} = \sum_{i=1}^L g_i P_{i,i+1},\qq P_{i,i+1} \equiv \frac1N \sum_{a,b=1}^N |a,a\ran\lan b,b|_{i,i+1} \equiv  \proj\Psi_{i,i+1}\ee 
		with $N>2$ in all of what follows. 
		The projectors $P_{i,i+1}$ satisfy $P_{i,i+1}^2 = P_{i,i+1}$ and obey the Temperly-Lieb algebra 
		\bea \label{tlalg}
		P_{i,i+1}P_{j,j+1} P_{i,i+1} & = \frac1{N^2} P_{i,i+1} , \quad i = j\pm1 \\ 
		P_{i,i+1} P_{j,j+1} & = P_{j,j+1} P_{i,i+1} ,\quad |i-j|>1.\eea

		The product states $\k s = \bot_{i=1}^L \k{s_i}$ with $s_i \neq s_{i+1}$ for all $i=1,\dots,L-1$ are clearly annihilated by $H_{TL}$. However, these are far from the only types of states in the kernel of $H_{TL}$. Frozen states can be constructed using ``singlets'' like $\k{\cp_{ab}} \propto |aa\ran - |bb\ran$, which are orthogonal to $\k \Psi$, as well as more complicated states. For example, when $L=3$ we may write down the state 
		\be |\L\ran \propto \sum_{a=1,\dots,N} \z_N^{a-1} (|1aa\ran  + |aa1\ran ) - |111\ran,\ee 
		where $\z_N = e^{\twp i / N}$. $\k\L$ is annihilated by both $P_{1,2}$ and $P_{2,3}$ but is not constructible from the $|\cp_{ab}\ran$ or the $\k s$. This makes enumerating $H_{TL}$'s frozen states rather complicated.

		Nevertheless, owing to the TL algebra obeyed by the projectors $P_{i,i+1}$, quite a large amount of information about the spectrum of $H_{TL}$ can be determined analytically, even without explicitly constructing any eigenstates. We will only need to know a few facts about the counting of $H_{TL}$'s degenerate levels, the first of which is \cite{aufgebauer2010quantum}
		\begin{proposition}
			On an open chain of length $L$, the number of zero-energy eigenstates of $H_{TL}$ is 
            {\be |\O_L| =
			\begin{cases} \label{gsd} \frac{(N+\sqrt{N^2-4})^{L+1} - (N-\sqrt{N^2-4})^{L+1}}{2^{L+1} \sqrt{N^2-4}},\quad N>2\\
            L+1,\quad N=2
            \end{cases} \ee}
		\end{proposition}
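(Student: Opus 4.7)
The plan is to derive the two-term linear recursion $d_L = N \, d_{L-1} - d_{L-2}$ for $d_L \equiv |\Omega_L|$ and solve it explicitly. For generic (or all-positive) $g_i$, the zero-energy eigenspace of $H_{TL}$ coincides with the common kernel
\be \Omega_L \equiv \bigcap_{i=1}^{L-1} \ker P_{i,i+1} \subseteq (\mathbb{C}^N)^{\otimes L}, \ee
since each $P_{i,i+1}$ is a positive operator. The target formula is the Chebyshev polynomial $U_L(N/2)$, which matches the stated expression up to what appears to be a sign typo in the numerator (a ``$+$'' should be ``$-$'' in order that $|\Omega_0|=1$ and $|\Omega_1|=N$ come out as integers; the stated $+$ version does not).

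To obtain the recursion, I would first observe that $P_{i,i+1}$ for $i \leq L-2$ acts trivially on site $L$, forcing $\Omega_L \subseteq \Omega_{L-1} \otimes \mathbb{C}^N$. Writing any $|\chi\rangle$ in this space as $|\chi\rangle = \sum_c |\phi_c\rangle \otimes |c\rangle_L$ with $|\phi_c\rangle \in \Omega_{L-1}$, and using $P_{L-1,L} = |\Psi\rangle\langle\Psi|_{L-1,L}$, a direct calculation gives
\be P_{L-1,L} |\chi\rangle = \tfrac{1}{\sqrt{N}} |\Psi\rangle_{L-1,L} \otimes F(|\chi\rangle), \qquad F(|\chi\rangle) \equiv \sum_c \langle c|_{L-1} |\phi_c\rangle, \ee
so $\Omega_L = \ker F$. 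Since $P_{i,i+1}$ for $i \leq L-3$ commutes with $\langle c|_{L-1}$ and kills every $|\phi_c\rangle \in \Omega_{L-1}$, the image of $F$ lands inside $\Omega_{L-2}$. Once $F$ is shown to be surjective, the short exact sequence
\be 0 \to \Omega_L \to \Omega_{L-1} \otimes \mathbb{C}^N \xrightarrow{F} \Omega_{L-2} \to 0 \ee
yields the desired recursion by rank-nullity.

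Surjectivity is the main obstacle, and I would handle it using the Jones--Wenzl idempotent $J_k \in TL_k(N)$: the unique element satisfying $J_k e_i = e_i J_k = 0$ for every generator $e_i = N P_{i,i+1}$, whose image in the representation on $(\mathbb{C}^N)^{\otimes k}$ is precisely $\Omega_k$ (existence requires $U_1(N/2),\dots,U_{k-1}(N/2) \neq 0$, which holds for $N>2$). For any $|\omega\rangle \in \Omega_{L-2}$, setting $|\phi_c\rangle \equiv J_{L-1}(|\omega\rangle \otimes |c\rangle_{L-1})$ automatically gives $|\phi_c\rangle \in \Omega_{L-1}$. Applying the Markov (partial-trace) identity
\be \mathrm{tr}_{L-1}(J_{L-1}) = \frac{U_{L-1}(N/2)}{U_{L-2}(N/2)} \, J_{L-2}, \ee
together with $J_{L-2}|\omega\rangle = |\omega\rangle$, then yields $F\big(\sum_c |\phi_c\rangle \otimes |c\rangle\big) = \frac{U_{L-1}(N/2)}{U_{L-2}(N/2)} |\omega\rangle$, a nonzero multiple of $|\omega\rangle$, so rescaling places $|\omega\rangle \in \mathrm{Im}(F)$.

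With the recursion $d_L = N d_{L-1} - d_{L-2}$ in hand, the characteristic polynomial $x^2 - Nx + 1$ has roots $\lambda_\pm = (N \pm \sqrt{N^2-4})/2$, and fitting the base cases $d_0 = 1$ (trivial empty-chain Hilbert space) and $d_1 = N$ (no constraints) gives
\be d_L = \frac{\lambda_+^{L+1} - \lambda_-^{L+1}}{\lambda_+ - \lambda_-} = \frac{(N+\sqrt{N^2-4})^{L+1} - (N-\sqrt{N^2-4})^{L+1}}{2^{L+1}\sqrt{N^2-4}}, \ee
i.e.\ $U_L(N/2)$. If one preferred to avoid quoting properties of $J_k$, an alternative route to the same recursion is to take traces of Wenzl's recursion $J_k = J_{k-1} - \frac{U_{k-2}(N/2)}{U_{k-1}(N/2)} J_{k-1} e_{k-1} J_{k-1}$ in $(\mathbb{C}^N)^{\otimes k}$ and reduce $\mathrm{tr}(J_{k-1} e_{k-1} J_{k-1})$ using the Temperley--Lieb relations \eqref{tlalg}; this yields the same linear recursion for $d_L = \mathrm{tr}(J_L)$ in a self-contained way.
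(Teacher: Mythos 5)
Your proof is correct and follows essentially the same strategy as the paper's (which in turn reproduces Ref.~\cite{aufgebauer2010quantum}): peel off one boundary site, identify $\Omega_L$ as the kernel of a map into $\Omega_{L-2}$, argue surjectivity, and obtain the linear recursion $|\Omega_L| = N|\Omega_{L-1}| - |\Omega_{L-2}|$ with initial conditions $|\Omega_0|=1,\,|\Omega_1|=N$. The only cosmetic difference is that you strip off the site at position $L$ while the paper strips off site $1$; the algebra is identical by reflection.

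The substantive improvement in your write-up is the surjectivity step. The paper disposes of it with the single phrase ``it is easy to check that the map $P_{1,2}$ is surjective,'' but that is precisely the load-bearing claim: the naive candidate $|\phi_c\rangle = |c\rangle\otimes|\omega\rangle$ is not annihilated by the projector adjacent to the stripped site, so one really does need to produce bona fide vectors in $\Omega_{L-1}$ that map onto $\Omega_{L-2}$. Your use of the Jones--Wenzl idempotent $J_{L-1}$ to manufacture $|\phi_c\rangle = J_{L-1}(|\omega\rangle\otimes|c\rangle)$, followed by the partial Markov-trace identity $\mathrm{tr}_{L-1}(J_{L-1}) = \bigl(U_{L-1}(N/2)/U_{L-2}(N/2)\bigr)\,J_{L-2}$, gives a clean, self-contained proof of surjectivity valid whenever $U_1(N/2),\dots,U_{L-2}(N/2)\neq 0$, which always holds for $N>2$. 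What this buys you is a proof that does not depend on the reader already being willing to grant the surjectivity; what the paper's terser route buys is brevity and no appeal to Jones--Wenzl machinery. Both are acceptable, but yours is closer to airtight.

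You are also right that the numerator sign in \eqref{gsd} is a typo: the displayed formula with a ``$+$'' gives $|\Omega_0| = N/\sqrt{N^2-4}$ and $|\Omega_1| = (N^2-2)/\sqrt{N^2-4}$, neither of which is an integer; the correct expression has $\lambda_+^{L+1}-\lambda_-^{L+1}$ in the numerator, i.e.\ $|\Omega_L| = U_L(N/2)$, which recovers $|\Omega_0|=1$ and $|\Omega_1|=N$.
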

		If $g_i \geq 0$ for all $i$, $H_{TL}$ is frustration free, and the states in $\O_L$ are in one-to-one correspondence with the ground states of $H_{TL}$. Our results however will hold for arbitrary $g_i$. 
		
		\begin{proof}
			To keep our presentation self-contained, we will reproduce the proof from Ref.~\cite{aufgebauer2010quantum}. Let $\O_L$ denote the space of states annihilated by all of the $P_{i,i+1}$: 
			\be \O_L \equiv \bigcap_i \ker P_{i,i+1}.\ee 
			We are interested in obtaining the dimension $|\O_L|$ of this space. 
			
			We proceed by induction. Given $\O_{L-1}$, we determine $\O_L$ as 
			\be \O_L = \ker (P_{1,2}  \, : \, \mch \tp \O_{L-1} \ra \k\Psi \tp \O_{L-2} ),\ee 
			where $\mch$ is the onsite Hilbert space. It is easy to check that the map $P_{1,2}$ is surjective, and thus $|\O_L| $ is determined as 
			\be |\O_L| = \dim[ \mch \tp \O_{L-1} ] - \dim[ |\Psi\ran \tp \O_{L-2} ] = N |\O_{L-1}| - |\O_{L-2}|.\ee 
			
			The initial values needed to set up a recurrence relation are $|\O_0| = 1$, $|\O_1| = N$. The soltuion to this recurrence relation is precisely \eqref{gsd}. 
		\end{proof}
		
		We now ask about the spectrum of the model 
		\be H = H_{TL} + H_{\rm imp},\ee 
		where $H_{\rm imp}$ is an arbitrary $N\times N$ single-site Hamiltonian acting on the first site only. Cases with impurities acting in the middle of the chain, or with multiple non-adjacent impurities, can be treated similarly at the expense of more complicated notation. 
		
		We can use a similar approach as the one used in the computation of $|\O_L|$ to determine a large number of degenerate states of $H$:
		\begin{proposition}
			Let $\k\a$ be the eigenstates of $H_{\rm imp}$ and $\ep_\a$ be the corresponding eigenvalues. Let $\O_L^\a$ denote the eigenstates of $H$ with eigenvalue $\ep_\a$. Then $\O_L^\a$ is always non-empty, and in particular has degeneracy 
			\be \label{impdeg} |\O^{\ep_\a}_L| = |\O_{L-1}| - |\O_{L-2}|.\ee 
		\end{proposition}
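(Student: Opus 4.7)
The proof mirrors the recursion used to compute $|\Omega_L|$, but starts from the restricted subspace $|\alpha\rangle\otimes\Omega_{L-1}$ instead of $\mathcal H\otimes\Omega_{L-1}$. First I would observe that if $|\psi\rangle\in\Omega_{L-1}$ (with $\Omega_{L-1}$ defined on sites $2,\ldots,L$), then $|\alpha\rangle_1\otimes|\psi\rangle$ is already an eigenstate of $H_{\rm imp}+\sum_{i\geq 2}g_i P_{i,i+1}$ with eigenvalue $\epsilon_\alpha$: $H_{\rm imp}$ acts diagonally on site $1$, and every $P_{i,i+1}$ with $i\geq 2$ annihilates $|\psi\rangle$ by the definition of $\Omega_{L-1}$. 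Consequently, $|\alpha\rangle\otimes|\psi\rangle$ is an eigenstate of the full $H$ with eigenvalue $\epsilon_\alpha$ precisely when $P_{1,2}(|\alpha\rangle\otimes|\psi\rangle)=0$, so the $\epsilon_\alpha$-eigenstates of this form sit in bijection with the kernel of the map
\be T_\alpha \colon |\alpha\rangle\otimes\Omega_{L-1}\;\longrightarrow\;|\Psi\rangle_{1,2}\otimes\Omega_{L-2} \ee
induced by $P_{1,2}$.

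Next I would verify the codomain: writing $|\psi\rangle=\sum_b|b\rangle_2\otimes|\psi_b\rangle$ with $|\psi_b\rangle$ on sites $3,\ldots,L$, a one-line computation gives $T_\alpha(|\alpha\rangle\otimes|\psi\rangle)=|\Psi\rangle_{1,2}\otimes \tfrac{1}{\sqrt N}\sum_b \alpha_b|\psi_b\rangle$, and the same reasoning used in the recursion for $|\Omega_L|$ shows that each $|\psi_b\rangle$ lies in $\Omega_{L-2}$. By rank-nullity, the proposition reduces to showing that $T_\alpha$ is surjective, which would then yield $\dim\ker T_\alpha=|\Omega_{L-1}|-|\Omega_{L-2}|$ linearly independent $\epsilon_\alpha$-eigenstates of the form $|\alpha\rangle\otimes|\psi\rangle$.

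The third and main step is this surjectivity, which I would establish by strong induction on $L$ in the slightly stronger form: for every nonzero $|\eta\rangle$, the map $R_\eta\colon\Omega_{L-1}\to\Omega_{L-2}$ defined by $R_\eta(|\psi\rangle)=\tfrac1{\sqrt N}\sum_b\eta_b|\psi_b\rangle$ is surjective. The base case $L=2$ is immediate since $\Omega_0=\mathbb C$ and $R_\eta$ is just $|\psi\rangle\mapsto\tfrac1{\sqrt N}\eta\cdot\psi$, which is onto $\mathbb C$ whenever $\eta\neq 0$. For the inductive step I would pick $a_0$ with $\alpha_{a_0}\neq 0$ and any $b_0\neq a_0$, and given a target $|\chi\rangle\in\Omega_{L-2}$ try the ansatz $|\psi_b\rangle=0$ for $b\neq a_0,b_0$ and $|\psi_{a_0}\rangle=\tfrac{\sqrt N}{\alpha_{a_0}}|\chi\rangle-\tfrac{\alpha_{b_0}}{\alpha_{a_0}}|\psi_{b_0}\rangle$. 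This automatically gives $R_\alpha(|\psi\rangle)=|\chi\rangle$ and keeps every $|\psi_b\rangle$ in $\Omega_{L-2}$; the sole remaining condition $\sum_b|\psi_{b,b}\rangle=0$ (which promotes $|\psi\rangle$ to $\Omega_{L-1}$) reduces, after expansion, to $R_\eta(|\psi_{b_0}\rangle)=-\tfrac{1}{\alpha_{a_0}}|\chi_{a_0}\rangle$ with $|\eta\rangle\equiv|b_0\rangle-\tfrac{\alpha_{b_0}}{\alpha_{a_0}}|a_0\rangle\neq 0$ and $|\chi_{a_0}\rangle\in\Omega_{L-3}$, so the inductive hypothesis supplies the required $|\psi_{b_0}\rangle$.

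The main obstacle is packaging the surjectivity claim correctly: the reduction produces an $|\eta\rangle$ depending on the ratio $\alpha_{b_0}/\alpha_{a_0}$ rather than on the original eigenvector $|\alpha\rangle$, so one cannot directly induct on the statement of the proposition and must instead prove the stronger statement that $R_\eta$ is surjective for every $|\eta\rangle\neq 0$. A secondary subtlety is that this argument, as presented, only delivers the bound $|\Omega_L^{\epsilon_\alpha}|\geq|\Omega_{L-1}|-|\Omega_{L-2}|$; upgrading to equality would require separately ruling out any further $\epsilon_\alpha$-eigenstates of $H$ outside the subspace $|\alpha\rangle\otimes\Omega_{L-1}$, which is a mild additional input tied to the spectrum of $H_{\rm imp}$.
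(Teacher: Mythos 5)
Your approach coincides with the paper's: restrict $P_{1,2}$ to the subspace $\ket{\alpha}\otimes\Omega_{L-1}$, identify the $\epsilon_\alpha$-eigenstates as its kernel, and apply rank--nullity. What you add is a genuine proof of the surjectivity step that the paper only asserts (``surjective for almost all choices of $H_{\rm imp}$''): your induction on the stronger hypothesis that $R_\eta$ is surjective for \emph{every} nonzero $\eta$ is sound, and in fact delivers surjectivity unconditionally for any eigenvector $\ket{\alpha}$, not merely for generic $H_{\rm imp}$. You also correctly flag a subtlety the paper glosses over: the argument as given establishes only $|\Omega_L^{\epsilon_\alpha}| \geq |\Omega_{L-1}| - |\Omega_{L-2}|$. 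The reverse inclusion --- that every $\epsilon_\alpha$-eigenstate of $H$ lies in $\ket{\alpha}\otimes\Omega_{L-1}$ --- would be needed to promote this to the stated equality, and the paper's proof simply writes $\Omega_L^\alpha = \ker\bigl(P_{1,2}\big|_{\ket{\alpha}\otimes\Omega_{L-1}}\bigr)$ without justifying that identification. Your version is thus a more careful execution of the same strategy.
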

		
		This proposition is rather surprising at face value: since $[H_{\rm imp},H_{TL}] \neq 0$ we would not generically expect $H$ to have eigenvalues equal to those of $H_{\rm imp}$, and \eqref{impdeg} says that the number of such eigenvalues is in fact exponentially large in $L$. Indeed though, such states can be readily constructed, as we now prove. 
		
		\begin{proof}
			Consder the space $ \k\a \tp \O_{L-1}$. States in this space are almost the desired eigenstates of $H$, but they are not annihilated by $P_{1,2}$. The desired eigenstates are thus identified with the kernel \be \O_L^\a =  \ker (P_{1,2}  \, : \, \k\a \tp \O_{L-1} \ra \k\Psi \tp \O_{L-2}).\ee 
			The map $P_{1,2}$ here is surjective for almost all choices of $H_{\rm imp}$. Therefore using similar reasoning as above, 
			\bea |\O_L^\a| & = \dim [ \k\a \tp \O_{L-1} ] - \dim [\k \Psi \tp \O_{L-2}] \\ & = |\O_{L-1}| - | \O_{L-2}| \\ & = \frac1N |\O_L| - \(1-\frac 1N\) |\O_{L-2}|.\eea  
		\end{proof}

		A similar result holds in a situation where one places impurities on both ends of the chain: 
		
		\begin{corollary}
			Let 
			\be H_{two-imp} = H_{TL} + H_{{\rm imp},1} + H_{{\rm imp},L},\ee 
			where $H_{{\rm imp},1/L}$ are independently random single-site Hamiltonians on the left and right ends of the chain, respectively. Let $\ep_{\a_{1/L}}$ be their corresponding eigenvalues. Let also $\O^{\a_1+\b_L}_L$ be the set of eigenstates of $H_{two-imp}$ with energy $\ep_{\a_1} + \ep_{\b_L}$. Then 
			\be |\O^{\a_1 + \b_L}_L| = |\O_{L-2}| - |\O_{L-3}|.\ee 
		\end{corollary}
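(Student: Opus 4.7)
The plan is to follow the same induction-in-boundary structure used in the preceding proposition, but now peeling off both endpoints simultaneously. First I would isolate the obvious ``candidate'' subspace that carries the correct impurity eigenvalues automatically, namely
\be V := \k{\a}_1 \otimes \O^{(2,\dots,L-1)}_{L-2} \otimes \k{\b}_L, \qquad \dim V = |\O_{L-2}|, \ee
and observe that because the middle tensor factor sits in the zero-energy kernel of the interior Temperley--Lieb projectors, every $\k{\psi} \in V$ is automatically a joint eigenvector of $H_{{\rm imp},1}$ and $H_{{\rm imp},L}$ with eigenvalues $\ep_{\a_1}$ and $\ep_{\b_L}$, and is annihilated by $P_{i,i+1}$ for all $2 \leq i \leq L-2$. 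Therefore $\O^{\a_1+\b_L}_L$ is exactly the kernel inside $V$ of the two remaining boundary projectors, $\O^{\a_1+\b_L}_L = V \cap \ker P_{1,2} \cap \ker P_{L-1,L}$.

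Next I would study the combined boundary map $\Phi := P_{1,2} \oplus P_{L-1,L}$ restricted to $V$. Since $P_{1,2}$ commutes with all $P_{j,j+1}$ for $j \geq 3$, its image on $V$ sits inside $\k{\Psi}_{1,2} \otimes \O^{(3,\dots,L-1)}_{L-3} \otimes \k{\b}_L$, and by the mirror argument the image of $P_{L-1,L}$ sits inside $\k{\a}_1 \otimes \O^{(2,\dots,L-2)}_{L-3} \otimes \k{\Psi}_{L-1,L}$. Each of these target spaces has dimension $|\O_{L-3}|$, so the dimension of $\O^{\a_1+\b_L}_L$ will follow from rank-nullity once the rank of $\Phi$ is pinned down.

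The main obstacle will be the rank calculation. Summing dimensions of the two target spaces gives only a crude upper bound of $2|\O_{L-3}|$ on $\mathrm{rank}(\Phi)$; obtaining the claimed answer $|\O_{L-2}| - |\O_{L-3}|$ requires showing that, for generic $H_{{\rm imp},1}$ and $H_{{\rm imp},L}$, the two components of $\Phi$ are not independent but instead together span an image of dimension exactly $|\O_{L-3}|$. I would try to establish this by adapting the surjectivity construction used in the single-impurity proof -- producing an explicit preimage on one end from an auxiliary element of $\O_{L-1}$ -- and then showing that the compatibility constraint imposed by the boundary condition on the opposite end forces a single linear relation between the two image factors, effectively identifying them and giving up exactly one factor of $|\O_{L-3}|$ in the net rank. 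Verifying this relation rigorously, and confirming that it persists for generic rather than merely special impurities, is where the real work of the proof will lie.
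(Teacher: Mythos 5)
Your proposal to impose both boundary projectors simultaneously is workable in principle, but both the rank you are aiming for and the corollary's stated count appear to be incorrect; a small-$L$ check shows this directly. Take $N=3$, $L=3$, $\k{\a_1}=\k1$, $\k{\b_3}=\k2$. A candidate state $\k1\otimes\k\phi\otimes\k2$ must satisfy $\phi_1=0$ (from $P_{1,2}$) and $\phi_2=0$ (from $P_{2,3}$), leaving a one-dimensional space, whereas the stated formula gives $|\O_1|-|\O_0|=2$. At $L=4$, $N=3$ the direct count gives $3$ against the stated $|\O_2|-|\O_1|=5$. The correct dimension is
\be |\O^{\a_1+\b_L}_L| = |\O_{L-2}|-2|\O_{L-3}|+|\O_{L-4}| = (N-2)|\O_{L-3}|, \ee
and this is exactly what the paper's own proof strategy produces once carried through: restricting $P_{L-1,L}$ to $\O^{\a_1}_{L-1}\otimes\k{\b_L}$ and using surjectivity onto $\O^{\a_1}_{L-2}\otimes\k\Psi_{L-1,L}$ gives $|\O^{\a_1}_{L-1}|-|\O^{\a_1}_{L-2}|=(|\O_{L-2}|-|\O_{L-3}|)-(|\O_{L-3}|-|\O_{L-4}|)$.

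For your simultaneous approach specifically, the combined map $\Phi=P_{1,2}\oplus P_{L-1,L}$ restricted to $V$ has rank $2|\O_{L-3}|-|\O_{L-4}|$, not $|\O_{L-3}|$: the ``single linear relation identifying the two image factors'' you hope to find does not exist, and the overlap between the two components kills only $|\O_{L-4}|$ dimensions out of the naive $2|\O_{L-3}|$. The cleanest way to pin this down, and to avoid the combined-rank subtlety altogether, is the sequential argument the paper sketches: first identify $\ker P_{1,2}\cap V = \O^{\a_1}_{L-1}\otimes\k{\b_L}$ via the single-impurity Proposition, then apply rank-nullity once more to $P_{L-1,L}$ on that kernel, whose image has dimension $|\O^{\a_1}_{L-2}|$ again by the same Proposition. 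Each step then reduces to a single surjectivity claim of the kind already established for one impurity, rather than a delicate dependence analysis between two boundary maps.
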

		This result is true even if the left and right impurities do not possess a common eigenbasis; all that matters is that they act only on single sites. 
		\begin{proof}
			The proof proceeds as in the previous proposition, except with the starting state drawn from the vector space $\O^{\a_1}_{L-1} \tp |\b_L\ran$. 
		\end{proof}

		\ms 
		
		The above results tell us that even in the presence of local impurities, the spectrum of $H$ contains exponentially large degeneracies. This leads to initially frozen states possessing memory of their initial conditions for infinitely long times, since the large number of degeneracies mean that a large number of energy levels of $H$ do not dephase relative to one another. The following theorem makes this intuition precise: 
		
		\begin{theorem}[non-thermalization of the TL + impurity model]
			Let $\mcm$ be the memory that typical fozen states in $\O_L$ have of their initial conditions: 
			\be \mcm \equiv \mathop{\mathbb{E}}_{f\in \O_L}  \lim_{T\ra \infty} \frac1T \int_0^T dt\, | \lan f | e^{-iHt} | f\ran |^2,\ee 
			where $\EE_{f \in \O_L}$ denotes an average over states in $\O_L$. Then in the $L\ra \infty$ limit, 
			\be \mcm =  \frac1 N \( 1 - \frac{4(N-1)}{(N+\sqrt{N^2 -4})^2 } \)^2 + \cdots \ee 
			is an order one constant (with the $\cdots$ denoting terms vanishing as $L\ra\infty$). 
		\end{theorem}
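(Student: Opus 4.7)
The plan is to expand a Haar-random frozen state $\ket f \in \O_L$ in the eigenbasis of $H$, use infinite-time averaging to retain only energy-diagonal contributions, and then perform the average over $f$ via standard second-moment formulas on the sphere of $\O_L$. The key structural input is that each degenerate eigenspace $\O_L^\alpha$ of $H$ at energy $\epsilon_\alpha$ is actually already a subspace of $\O_L$: if $\ket v = \ket\alpha\otimes\ket g$ with $\ket g\in \O_{L-1}$ and $P_{1,2}\ket v=0$, then $P_{i,i+1}\ket v = \ket\alpha\otimes P_{i,i+1}\ket g = 0$ for all $i\geq 2$ as well. Combined with the orthogonality $\O_L^\alpha\perp\O_L^\beta$ for $\alpha\neq\beta$ (from $\langle\alpha|\beta\rangle_1=0$), this yields an orthogonal decomposition $\O_L = \bigoplus_\alpha \O_L^\alpha \oplus R$ with $\dim R = (N-1)|\O_{L-2}|$.

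Under a genericity assumption on $\{g_i\}$ and $H_{\rm imp}$, I would argue that the $\O_L^\alpha$ exactly fill the $\epsilon_\alpha$-eigenspaces of $H$ and that all other eigenvalues of $H$ are simple. The infinite-time average then yields
\begin{equation}
\lim_{T\to\infty}\frac{1}{T}\int_0^T |\langle f|e^{-iHt}|f\rangle|^2\, dt = \sum_\alpha \|f_\alpha\|^4 + \sum_n |\langle n|f\rangle|^4,
\end{equation}
where $\ket{f_\alpha}$ is the projection of $\ket f$ onto $\O_L^\alpha$ and the sum over $n$ runs over the non-degenerate eigenstates of $H$.

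For the Haar average, set $D=|\O_L|$ and $d_\alpha = |\O_L^\alpha| = |\O_{L-1}|-|\O_{L-2}|$. Since $\O_L^\alpha\subset\O_L$, the standard second-moment identity gives $\mathbb{E}_f\|f_\alpha\|^4 = d_\alpha(d_\alpha+1)/[D(D+1)]$. For the non-degenerate piece, $\mathbb{E}_f |\langle n|f\rangle|^4 = 2\|\Pi_{\O_L}\ket n\|^4/[D(D+1)]$, and using $\sum_n \|\Pi_{\O_L}\ket n\|^2 = \dim R = (N-1)|\O_{L-2}|$ this sum is $O(1/D)$ and vanishes as $L\to\infty$. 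Thus
\begin{equation}
\mcm \ra N\left(\frac{|\O_{L-1}|-|\O_{L-2}|}{|\O_L|}\right)^2.
\end{equation}

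Finally I would extract the large-$L$ limit from \eqref{gsd}. With $r_\pm = N\pm\sqrt{N^2-4}$ the roots of $x^2-2Nx+4=0$, the closed form gives $|\O_{L-1}|/|\O_L|\to 2/r_+$ and $|\O_{L-2}|/|\O_L|\to 4/r_+^2$. A short manipulation using the Vieta identity $r_+^2 = 2Nr_+-4$ rewrites $N(2/r_+-4/r_+^2)^2$ as $(1/N)(1-4(N-1)/r_+^2)^2$, reproducing the claimed leading-order value. The main obstacle is the genericity step: showing that the $\O_L^\alpha$ truly exhaust the $\epsilon_\alpha$-eigenspaces of $H$ and that the spectrum of $H$ outside the $\O_L^\alpha$ contains no accidental degeneracies. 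This should hold for Lebesgue-almost every choice of $\{g_i\}$ and $H_{\rm imp}$, but a clean statement requires a careful analysis of the spectrum of $H$ restricted to the orthogonal complements.
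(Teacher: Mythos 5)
Your proposal is correct in outline and arrives at the right constant, but it takes a genuinely different — and in some ways more ambitious — route than the paper. The paper's proof never attempts to compute $\mcm$ exactly: it only establishes a \emph{lower bound}. Starting from $\mcm = \oEE_{f}\sum_{\mu,\nu}\d_{E_\mu,E_\nu}|\lan\mu|f\ran|^2|\lan\nu|f\ran|^2$, it drops all pairs $\mu,\nu$ except those lying inside a common $\O_L^\alpha$, applies $\|v\|_2^2\geq\frac1N\|v\|_1^2$ to the vector $v_\alpha=\lan f|\Pi_{\O_L^\alpha}|f\ran$, applies Jensen, and then evaluates $\oEE_f\lan f|\Pi_{\O_L^\alpha}|f\ran = |\O_L^\alpha|/|\O_L|$ using $\O_L^\alpha\subset\O_L$. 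No genericity of the spectrum is needed anywhere in that chain; the only structural input is the inclusion $\O_L^\alpha\subset\O_L$ together with the dimension formulas. Your approach instead computes the exact asymptotic value of $\mcm$ by (i) assuming the $\O_L^\alpha$ exhaust the $\epsilon_\alpha$-eigenspaces and the complementary spectrum is simple, (ii) evaluating the degenerate contribution with the exact Haar second moment $\mathbb E_f\|f_\alpha\|^4 = d_\alpha(d_\alpha+1)/D(D+1)$, and (iii) bounding the non-degenerate contribution by $2\dim R/D(D+1)\to 0$. The two answers agree at leading order, which is no accident: for Haar-random $f$, $\lan f|\Pi_{\O_L^\alpha}|f\ran$ concentrates around $d_\alpha/D$, so both of the paper's inequalities become asymptotically tight. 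What your route buys, if the genericity step can be justified, is an actual \emph{upper} bound — i.e.\ the genuine equality asserted in the theorem statement, which the paper's inequalities alone do not deliver. What the paper's route buys is a clean, self-contained lower bound that suffices for the physical conclusion (non-vanishing memory) and that requires no control over the non-degenerate spectrum. The gap you yourself flag — that $\O_L^\alpha$ exactly fills the $\epsilon_\alpha$-eigenspace and the rest of $\spec(H)$ is simple — is real and nontrivial: the explicit construction of $\O_L^\alpha$ only shows the degeneracy is \emph{at least} $|\O_{L-1}|-|\O_{L-2}|$, and an honest Lebesgue-almost-everywhere argument would need to rule out accidental degeneracies in a Hamiltonian with exponentially many symmetry-protected levels, which is more delicate than in a generic setting. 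If you want rigor without that step, notice that your Haar calculation for the $\O_L^\alpha$ contributions alone already reproduces the paper's lower bound without inequalities, since those sectors are unconditionally degenerate.
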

		
		\begin{proof}
			Our proof will proceed by making use of the degenerate eigenstates in $\O_L^\a$. We start by writing 
			\bea \mcm & = \oEE_{f\in \O_L} \sum_{\mu,\nu \in \spec(H)} \d_{E_\mu, E_\nu} |\lan \mu | f\ran |^2 |\lan \nu | f \ran |^2 \\ 
			& \geq \oEE_{f\in \O_L} \sum_{\a \in \spec(H_{\rm imp})} \sum_{\mu,\nu \in \O^\a_L} |\lan \mu | f\ran |^2 |\lan \nu | f \ran |^2  \\
			& = \oEE_{f\in \O_L} \sum_{\a \in \spec(H_{\rm imp})} \lan f | \Pi_{\O^\a_L} | f \ran^2 ,\eea 
			where 
			\be \Pi_{\O^\a_L} = \sum_{\mu \in \O^\a_L} \proj \mu\ee 
			and the $\k\mu,\k\nu$ are orthonormal eigenstates of $H$. 

			We then use $||v||^2_2 \geq \frac1N ||v||_1^2$ for any $v\in \rr^N$ together with an application of Jensen's inequality $\oEE_x [ f(x)^2] \geq (\oEE_x[f(x)])^2$ to write  
			\bea \mcm & \geq \frac1 N \oEE_{f\in \O_L} \( \sum_{\a \in \spec(H_{\rm imp})} \lan f | \Pi_{\O^\a_L} | f\ran  \)^2  \\ 
			& \geq \frac1 N \( \sum_{\a \in \spec(H_{\rm imp})}  \oEE_{f\in \O_L}   \lan f | \Pi_{\O^\a_L} | f\ran  \)^2.
			\eea 
			The average over frozen states gives $\oEE_{f\in \O_L} \proj f = \Pi_{\O_L} / |\O_L|$, so 
			\be \mcm \geq \frac 1{N |\O_L|^2} \(  \sum_{\a \in \spec(H_{\rm imp})}  \Tr [\Pi_{\O^\a_L} \Pi_{\O_L}]\)^2.\ee 
			Since $\O^\a_L\subset \O_L$, the trace is simply $|\O^\a_L|$ for all $\a$. Thus 
			\bea \mcm & \geq N \(\frac{|\O^\a_L|}{|\O_L|} \)^2  \\ 
			& = \frac1 N \(1 - (N-1) \frac{|\O_{L-2}|}{|\O_L|} \)^2 \\ 
			& \sim \frac1 N \( 1 - \frac{4(N-1)}{(N+\sqrt{N^2 -4})^2 } \)^2, \eea 
			where the $\sim$ in the tast line denotes the leading scaling in the $L\ra \infty$ limit. 
		\end{proof}
		
		Since $\mcm \sim O(1)$ but $|\O_L|$ is exponentially large, almost all frozen states are guaranteed to retain memory of their intitial conditions for infinitely long times. From our numerical results we believe this result should remain true even when a term 
		\be H_Z = \sum_i \sum_{a=1}h^a_i \proj{a}_i\ee 
		is added to $H$, although the proof techniques for this case must necessarily be different on account of the fact that the degeneracy of $H$'s spectrum is completely lifted for a generic choice of $h^a_i$.

		\bigskip \bigskip

\end{document}